\newcommand{\unnicefrac}[2]{\ensuremath{#1\mkern-1.5mu/\mkern-1.5mu{#2}}}
\newcolumntype{L}{>{\(}l<{\)}} % math-mode version of "l" column type
\definecolor{myorange}{HTML}{DA7200}
\definecolor{myblue}{HTML}{0050DC}
\definecolor{lipicsYellow}{rgb}{0.99,0.78,0.07}
\newcommand\noshowkeys{\def\hideNextShowKeysLabel{test}}
\renewcommand*\showkeyslabelformat[1]{%
\@ifundefined{hideNextShowKeysLabel}{%
\noexpandarg%
% instead of \textvisiblespace you can also put in ~
% if you want to keep a plain space at space characters
\StrSubstitute{#1}{ }{\textvisiblespace}[\TEMP]%
\parbox[t]{\marginparwidth}{\raggedright\normalfont\small\ttfamily\(\{\){\color{red!50!black}\expandafter\seqsplit\expandafter{\TEMP}}\(\}\)}%
}{}% end of \@ifundefined
}
\patchcmd{\thmhead}{(#3)}{#3}{}{}
\tikzset{external/up to date check=diff}
\def\@enddocumenthook{}%
\def\temp{&} \catcode`&=\active \let&=\temp
\newcommand{\mytikzcdcontext}[2]{
  \begin{tikzpicture}[baseline=(mainnode.base)]
    \node (mainnode) [inner sep=0, outer sep=0] {\begin{tikzcd}[#2]
        #1
    \end{tikzcd}};
  \end{tikzpicture}}
\def\myargs{#1}%
\edef\mydiagram{\noexpand\mytikzcdcontext{\expandonce\BODY}{\expandonce\myargs}}%
\newcommand{\Sorts}{\mathcal{S}}
\newcommand{\gap}{\mathbin{\_}}
\newcommand{\CO}{\mathcal{O}}
\newcommand{\op}[1]{\textsf{\upshape #1}}
\newcommand{\inj}{\op{in}}
\newcommand{\inl}{\op{inl}}
\newcommand{\inr}{\op{inr}}
\newcommand{\old}{{\op{old}}}
\newcommand{\id}{\op{id}}
\newcommand{\geZero}{>^{\!\!\smash{?}}}
\newcommand{\leZero}{{\ }^{?}\!\!\mathord{<}}
\renewcommand{\Im}{\mathop{\op{Im}}}
\renewcommand{\ker}{\mathop{\op{ker}}}
\newcommand{\out}{\pi}
\newcommand{\Pot}{\mathcal{P}}
\newcommand{\Bag}{\mathcal{B}}
\newcommand{\partialto}{\rightharpoonup}
\newcommand{\fpair}[1]{\langle #1\rangle}
\newcommand{\Set}{\ensuremath{\mathsf{Set}}\xspace}
\newcommand{\Potf}{\ensuremath{\mathcal{P}_{\mathrm{f}}}\xspace}
\newcommand{\Dist}{\ensuremath{\mathcal{D}}\xspace}
\newcommand{\Bagf}{\ensuremath{\mathcal{B}_{\mathrm{f}}}\xspace}
\newcommand{\C}{\ensuremath{\mathcal{C}}\xspace}
\newcommand{\D}{\ensuremath{\mathcal{D}}\xspace}
\newcommand{\E}{\ensuremath{\mathcal{E}}\xspace}
\newcommand{\Z}{\ensuremath{\mathds{Z}}\xspace}
\newcommand{\N}{\ensuremath{\mathds{N}}\xspace}
\newcommand{\B}{\ensuremath{\mathds{B}}\xspace}
\newcommand{\R}{\ensuremath{\mathds{R}}\xspace}
\newcommand{\Coalg}{\ensuremath{\mathsf{Coalg}}\xspace}
\newcommand{\Inputs}{\ensuremath{A}}
\newcommand{\Leftblock}{\mathcal{L}}
\newcommand{\Middleblock}{\mathcal{M}}
\renewcommand{\gets}{\ensuremath{\xspace\mathop{:=}}}
\newcommand{\groupsum}{\raisebox{2pt}{$\scriptstyle\sum$}}
\renewcommand{\subsetneq}{\subsetneqq}
\tikzstyle{inlinecd}=[column sep = 5mm, row sep = 5mm]
\tikzstyle{weakly}=[dash pattern=on 1pt off 1pt]
\tikzstyle{shiftarr}=[
\newsavebox{\mypullbackcorner}%
\sbox{\mypullbackcorner}{%
\tikzexternaldisable%
\begin{tikzpicture}
    \draw[-] (0,0) -- (.5em,.5em) -- (0,1em);
\end{tikzpicture}%
\tikzexternalenable}
\newcommand{\pullbackangle}[2][]{\arrow[phantom,to path={
                     -- ($ (\tikztostart)!1cm!#2:([xshift=8cm]\tikztostart) $)
                        node[anchor=west,pos=0.0,rotate=#2,
                        inner xsep = 0]
                        {\begin{tikzpicture}[minimum
                        height=1mm,baseline=0,#1]
    \draw[-] (0,0) -- (.5em,.5em) -- (0,1em);
                        \end{tikzpicture}}}]{}}
\newcommand{\descto}[2]{\arrow[phantom]{#1}{\text{\footnotesize{}#2}}}
  \edef\thetheoremsuffix{\@thmcountersep\@thmcounter{theorem}}
\theoremstyle{plain}
\newtheorem*{theorem*}{Theorem}
\newtheorem{corollary}[theorem]{Corollary}
\newtheorem{lemma}[theorem]{Lemma}
\newtheorem{proposition}[theorem]{Proposition}
\theoremstyle{definition}
\newtheorem{algorithm}[theorem]{Algorithm}
\newtheorem{assumption}[theorem]{Assumption}
\newtheorem{definition}[theorem]{Definition}
\newtheorem{examples}[theorem]{Examples}
\newtheorem{example}[theorem]{Example}
\newtheorem{notation}[theorem]{Notation}
\newtheorem{remark}[theorem]{Remark}
\theoremstyle{remark}
\numberwithin{theorem}{section}
\numberwithin{equation}{section}
\def\epito{\twoheadrightarrow}
\def\monoto{\rightarrowtail}
\def\upa{\mathord{\uparrow}}
\newcommand{\takeout}[1]{\empty}
\title{Efficient Coalgebraic Partition Refinement\footnote{Full version with all proof
    details available at \url{http://arxiv.org/abs/1705.08362}
    \newline
    This work forms part of the DFG-funded project COAX (MI~717/5-1 and SCHR~1118/12-1)
    }
  }
\titlerunning{Efficient Coalgebraic Partition Refinement}
\author{Ulrich Dorsch%\hspace{-4pt}%
}
\author{Stefan Milius%\footnote{The second and fourth author are supported by Deutsche Forschungsgemeinschaft (DFG) under project MI~717/5-1}\hspace{-4pt}
}
\author{Lutz Schr\"oder%
% \hbox{\noexpand\@textsuperscript {\normalfont%
% \textasteriskcentered }}\hspace{-4pt}%
}
\author{Thorsten \rlap{Wi\ss{}mann}%
% \hbox{\noexpand\@textsuperscript {\normalfont%
% \textdagger}}\hspace{-4pt}%
}
\affil{%Lehrstuhl f\"{u}r Theoretische Informatik\\
  Friedrich-Alexander-Universit\"{a}t Erlangen-N\"{u}rnberg, Germany}
\authorrunning{U.~Dorsch, S.~Milius, L.~Schr\"oder and T.~Wi\ss{}mann}
\subjclass{F.1.1 Models of Computation, F.1.2 Modes of Computation}
\keywords{Coalgebra, Markov chains, partition refinement, transition systems}% mandatory:
\begin{document}
\maketitle
\begin{abstract}
  %\twnote{@Lutz: bitte schreiben: complexity...}
  %Since various state based systems can be modelled as coalgebras,
  We present a generic partition refinement algorithm that quotients
  coalgebraic systems by behavioural equivalence, an important task in
  reactive verification; coalgebraic generality implies in particular
  that we cover not only classical relational systems but also various
  forms of weighted systems. Under assumptions on the type functor
  that allow representing its finite coalgebras in terms of nodes and
  edges, our algorithm runs in time $\mathcal{O}(m\cdot \log n)$ where $n$ and
  $m$ are the numbers of nodes and edges, respectively. Instances of
  our generic algorithm thus match the runtime of the best known
  algorithms for unlabelled transition systems, Markov chains, and
  deterministic automata (with fixed alphabets), and improve the best
  known algorithms for Segala systems.
\end{abstract}

%
% Text begins here
%
\section{Introduction}

\emph{Minimization under bisimilarity} is the task of identifying all
states in a reactive system that exhibit the same
behaviour. Minimization appears as a subtask in state space reduction
(e.g.~\cite{BlomO05}) or non-interference
checking~\cite{MeydenZ07}. The notion of bisimulation was first
defined for relational systems~\cite{Benthem77,Milner80,Park81}; it
was later extended to other system types including probabilistic
systems~\cite{LarsenS91,DesharnaisEA02} and weighted
automata~\cite{Buchholz08}. In fact, the importance of minimization
under bisimilarity appears to increase with the complexity of the
underlying system type. E.g., while in LTL model checking,
minimization drastically reduces the state space but, depending on the
application, does not necessarily lead to a speedup in the overall
balance~\cite{FislerV02}, in probabilistic model checking,
minimization under strong bisimilarity does lead to substantial
efficiency gains~\cite{KatoenEA07}.

The algorithmics of minimization, often referred to as \emph{partition
  refinement} or \emph{lumping}, has received a fair amount of
attention. Since bisimilarity is a greatest fixpoint, it is more or
less immediate that it can be calculated in polynomial time by
approximating this fixpoint from above following Kleene's fixpoint
theorem. In the relational setting, Kanellakis and
Smolka~\cite{KanellakisS90} introduced an algorithm that in fact runs
in time $\mathcal{O}(nm)$ where $n$ is the number of nodes and $m$ is
the number of transitions. An even more efficient algorithm running in
time~$\mathcal{O}(m\log n)$ was later described by Paige and
Tarjan~\cite{PaigeTarjan87}; this bound holds even if the number of
action labels is not fixed~\cite{Valmari09}. Current algorithms
typically apply further optimizations to the Paige-Tarjan algorithm,
thus achieving better average-case behaviour but the same worst-case
behaviour~\cite{DovierEA04}. Probabilistic minimization has undergone
a similarly dynamic development~\cite{BaierEM00,CattaniS02,ZhangEA08},
and the best algorithms for minimization of Markov chains now have the
same $\CO(m\log n)$ run time as the relational Paige-Tarjan
algorithm~\cite{HuynhTian92,DerisaviEA03,ValmariF10}.  Using ideas
from abstract interpretation, Ranzato and Tapparo~\cite{RanzatoT08}
have developed a relational partition refinement algorithm that is
generic over \emph{notions of process equivalence}. As instances, they
recover the classical Paige-Tarjan algorithm for strong bisimilarity
and an algorithm for stuttering equivalence, and obtain new algorithms
for simulation equivalence and for a new process equivalence.

% Strong bisimulation $\mathcal O n^3$ \cite{CattaniS02}
% Bisimulation $\mathcal O\left( mn(\log m + \log n)\right)$ \cite{BaierEM00}

In this paper we follow an orthogonal approach and provide a generic
partition refinement algorithm that can be instantiated for many
different \emph{types} of systems (e.g.\ nondeterministic,
probabilistic, weighted). We achieve this by methods of
\emph{universal coalgebra}~\cite{Rutten00}. That is, we encapsulate
transition types of systems as endofunctors on sets (or a more general
category), and model systems as coalgebras for a given type functor.

Our work proceeds on several levels of abstraction. On the most abstract level
(Section~\ref{sec:cat}) we work with coalgebras for a monomorphism-preserving
endofunctor on a category with image factorizations. Here we present a quite
general category-theoretic partition refinement algorithm, and we prove its
correctness. The algorithm is parametrized over a \op{select} routine that
determines which observations are used to split blocks of states; the corner case where
all available observations are used yields known coalgebraic final chain
algorithms, e.g.~\cite{KonigKupper14}.

Next, we present an optimized version of our algorithm
(Section~\ref{sec:opti}) that needs more restrictive conditions to
ensure correctness; specifically, we need to assume that the type
endofunctor satisfies a condition we call \emph{zippability} in order
to allow for incremental computation of partitions. This property
holds, e.g., for all polynomial endofunctors on sets and for the type
functors of labelled and weighted transition systems, but not for all
endofunctors of interest. In particular, zippable functors fail to be
closed under composition, as exemplified by the double covariant
powerset functor $\Pot\Pot$ on sets, for which the optimized algorithm
is in fact incorrect. However, it turns out that obstacles of this
type can be removed by moving to multi-sorted
coalgebras~\cite{SchroderPattinson11}, so we do eventually obtain an
efficient partition refinement algorithm for coalgebras of composite
functors, including $\Pot\Pot$-coalgebras as well as (probabilistic)
Segala systems~\cite{Segala95}.

Finally, we analyse the run time of our algorithm
(Section~\ref{sec:efficient}). To this end, we make our algorithm
parametric in an abstract \emph{refinement interface} to the type
functor, which encapsulates the incremental calculation of partitions
in the optimized version of the algorithm. We show that if the
interface operations can be implemented in linear time, then the
algorithm runs in time $\CO(m \log n)$, where $n$ is the number of
states and $m$ the number of `edges' in a syntactic encoding of the
input coalgebra. We thus recover the most efficient known algorithms
for transition systems (Paige and Tarjan~\cite{PaigeTarjan87}) and for
weighted systems (Valmari and Franceschinis~\cite{ValmariF10}). Using
the mentioned modularity results, we also obtain an
$\CO((m+n) \log (m+n))$ algorithm for Segala systems, to our knowledge
a new result (more precisely, we improve an earlier bound established
by Baier, Engelen, and Majster-Cederbaum~\cite{BaierEM00}, roughly
speaking by letting only non-zero probabilistic edges enter into the
time bound). The algorithm and its analysis apply also to generalized
polynomial functors on sets; in particular, for the functor
$2 \times (-)^A$, which models deterministic finite automata, we
obtain the same complexity $\CO(n\log n)$ as for Hopcroft's classical
minimization algorithm for a fixed alphabet $A$~\cite{Hopcroft71,
  Knuutila2001, Gries1973}.

\vspace{-1mm}
\section{Preliminaries}
\vspace{-2mm}
\label{sec:prelim}

We assume that readers are familiar with basic category
theory~\cite{joyofcats}. For the convenience of the reader we recall
some concepts that are central for the categorical version of the
algorithm.

\begin{notation} \label{pullbackNotation} The terminal object is
  denoted by $1$, with unique arrows $!:A\to 1$, and the product of
  objects $A$, $B$ by
  $A \xleftarrow{\out_1} A×B \xrightarrow{\out_2} B$. Given
  $f: D\to A$ and $g: D\to B$, the morphism induced by the universal
  property of the product $A \times B$ is denoted by
  $\fpair{f,g}:D\to A\times B$.  The \emph{kernel} $\ker f$ of a
  morphism $f$ is the pullback of $f$ along itself.  We write $\epito$
  for regular epimorphisms (i.e.~coequalizers), and $\rightarrowtail$ for monomorphisms.
\end{notation}

\noindent Kernels allow us to talk about equivalence relations in
a category. In particular in \Set, there is a bijection between
kernels and equivalence relations in the usual sense: For a map
$f: D \to A$, $\ker f = \{(x,y) \mid fx = fy\}$ is the equivalence
relation induced by $f$. Generally, relations (i.e.\ jointly monic
spans of morphisms) in a category are ordered by inclusion in the
obvious way. We say that a kernel $K$ is \emph{finer} than a
kernel~$K'$ if~$K$ is included in~$K'$. We use intersection $\cap$
and union $\cup$ of kernels for meets and joins in the inclusion
ordering on \emph{relations} (not equivalence relations or kernels);
in this notation, $\ker \fpair{f,g} =\ker f\cap\ker g$.
% Check this: If a relation R is contained in ker f and in ker g,
% then R is equalized by f and g and hence by <f,g>, thus contained
% in ker<f,g>. Moreover, ker<f,g> is equalized by f and by g, hence is
% contained in ker f and in ker g.
In \Set, a map $f:D\to A$ factors through the partition
$\unnicefrac{D}{\ker f}$ induced by its kernel, via the map
$[-]_f: D\twoheadrightarrow \unnicefrac{D}{\ker f}$ taking equivalence classes
\begin{equation*}
  [x]_f := \{ x'\in D\mid fx = fx'\} = \{ x' \in D\mid (x,x') \in \ker f\}.
\end{equation*}
Well-definedness of functions on $\unnicefrac{D}{\ker f}$ is determined
precisely by the universal property of $[-]_f$ as a coequalizer of $\ker
f\rightrightarrows D$. In particular, $f$ induces an injection
$\unnicefrac{D}{\ker f}\rightarrowtail A$; together with $[-]_f$, this is the
factorization of $f$ into a regular epimorphism and a monomorphism.
Categorically, this is captured by the following assumptions.
\begin{assumption}
  \label{ass:C}
  We assume throughout that $\C$ is a finitely complete category that
  has coequalizers and \emph{image factorizations}, i.e.~every
  morphism $f$ has a factorization $f = m \cdot e$ as a regular
  epimorphism $e$ followed by a monomorphism $m$.  We call the
  codomain of $e$ the \emph{image} of $f$, and denote it by
  $\unnicefrac{D}{\ker f}$.  Regular epis in $\C$ are
  closed under composition and right
  cancellation~\cite[Prop.~14.14]{joyofcats}.
\end{assumption}
%
%E.g.\ in \Set, the image factorization of $f:D\to A$ is
%$\begin{mytikzcd}
%  D \arrow[->>]{r}{[-]_f}
%  & \unnicefrac{D}{\ker f}
%  \arrow[>->]{r}
%  & A
%\end{mytikzcd}
%\!\!$.

\begin{examples}
  Examples of categories satisfying~\autoref{ass:C} abound. In
  particular, every regular category with coequalizers satisfies our
  assumptions. The category $\Set$ of sets and maps is, of course, regular.
  Every topos is regular, and so is every finitary variety, i.e.~a category of
  algebras for a finitary signature satisfying given equational axioms
  (e.g.~monoids, groups, vector spaces etc.). Posets and topological spaces fail
  to be regular but still satisfy our assumptions. If $\C$ is regular, so is the
  functor category $\C^\E$ for any category $\E$.
\end{examples}

\noindent For a set $\Sorts$ of sorts, the category $\Set^\Sorts$ of
$\Sorts$-sorted sets has $\Sorts$-tuples of sets as objects. We write
$\chi_S: X\to 2$ for the characteristic function of a subset
$S\subseteq X$, i.e.\ for $x\in X$ we have $\chi_S(x)=1$ if $x\in S$
and $\chi_S(x)=0$ otherwise. We will also use a three-valued version:
\begin{definition}\label{D:chi}
    For $S\subseteq C\subseteq X$, define
    \(
    \chi_S^C: X\to 3
    \) by
    \(   C\not\owns x \mapsto 0,
       C\setminus S\owns x \mapsto 1,
    \) and
    \(
       S\owns x  \mapsto 2.
    \)
(This is essentially $\fpair{\chi_S,\chi_C}: X\to 4$ without the impossible case
$x\in S\setminus C$.)
\end{definition}

% \begin{remark} \label{sameKernelSameImage} If two morphisms of $\C$
%   with the same domain have the same kernel, then they have isomorphic
%   images. In more detail, let $f: A \to B$ any morphism of $\C$ with
%   its image factorization $f = m_f \cdot e_f$.
%   \begin{enumerate}
%   \item Then the kernel pair $\pi_1, \pi_2$ of $f$ is clearly also a kernel pair
%   of $e_f$. Hence, $e_f$ is a coequalizer of $\pi_1, \pi_2$.

% \item Any other morphism $g: A\to B'$ with the same kernel as $f$, has
%   it image factorization through $e_f$. Indeed, let $g = m_g\cdot e_g$
%   be the factorization. Since $g$ and therefore $e_g$ is a coequalizer
%   of $\pi_1,\pi_2$, we have $e_g = \phi \cdot e_f$ for some
%   isomorphism $\phi$. 
% \end{enumerate}
% \end{remark}

\subparagraph*{Coalgebras.}
We briefly recall basic notions from coalgebra. For introductory
texts, see~\cite{Rutten00,JacobsR97,Adamek05,Jacobs17}.
Given an endofunctor $H: \C \to \C$, a \emph{coalgebra} is pair
$(C,c)$ where $C$ is an object of $\C$ called the \emph{carrier} and
thought of as an object of \emph{states}, and $c: C \to HC$ is a
morphism called the \emph{structure} of the coalgebra. Our leading examples are
the following.
\begin{example}\label{ex:coalg}
  % this warning can be avoided by basically leaving vmode, however we do not
  % have the space for the enumeration to begin in a new line.
  % https://tex.stackexchange.com/a/187647/46552
  \begin{enumerate}
  \item Labelled transition systems with labels from a set $A$ are
    coalgebras for the functor $HX = \Pot(A \times X)$ (and unlabelled
    transition systems are simply coalgebras for $\Pot$). Explicitly,
    a coalgebra $c:C\to HC$ assigns to each state $x$ a set
    $c(x)\in\Pot(A\times X)$, and this represents the transition
    structure at $x$: $x$ has an $a$-transition to $y$ iff
    $(a,y)\in c(x)$.
  \item Weighted transition systems with weights drawn from a
    commutative monoid are modelled as coalgebras as follows. For the
    given commutative monoid $(M,+,0)$, we consider the
    \emph{monoid-valued} functor $M^{(-)}$ on $\Set$ given for any map
    $h:X \to Y$ by
    \[
      M^{(X)} = \{ f: X \to M \mid f(x) \neq 0\text{ for finitely many
      }x\},
      \qquad
      M^{(h)}(f)(y) = \textstyle\sum_{hx = y} f(x).
    \]
    $M$-weighted transition systems are in bijective correspondence
    with coalgebras for $M^{(-)}$~\cite{gs01} (and for $M$-weighted
    labelled transition systems one takes $(M^{(-)})^A$).

  \item Probabilistic transition systems are modelled coalgebraically
    using the distribution functor~$\Dist$. This is the subfunctor
    $\Dist X\subseteq \R_{\ge 0}^{(X)}$, where $\R_{\ge 0}$ is the
    monoid of addition on the non-negative reals, given by
    $\Dist X=\{f\in\R_{\ge 0}^{(X)}\mid \sum_{x \in X} f(x) = 1\}$.

  \item\label{item:bags} The finite powerset functor $\Potf$ is a
    monoid-valued functor for the Boolean monoid $\B = (2, \vee, 0)$.
    The \emph{bag functor} $\Bagf$, which assigns to a set~$X$ the set
    of bags (i.e.\ finite multisets) on~$X$, is the monoid-valued
    functor for the additive monoid of natural
    numbers. % We sometimes abuse notation and write
  % $\{x_1, \ldots, x_n\}$ for the bag $f$ with $f(x) = 1$ for $x = x_i$
  % for some $i = 1, \ldots, n$ and $f(x) = 0$ else. Consequently, we
  % also write $\emptyset$ for the empty bag.

    %denote elements of $\Bagf X$ as sets
    %$\{n_1 \cdot x_1, \ldots, n_k \cdot x_k\}$ for any $x_i \in X$,
    %$i = 1, \ldots, k$. When all $n_i and we omit explicit weights $n_i = 1$. 

  \item Simple (resp.~general) Segala systems~\cite{Segala95} strictly alternate
    between non-deterministic and probabilistic transitions; they can
    be modeled as coalgebras for the set functor $\Potf(A×\Dist(-))$ (resp.~$\Potf\Dist(A
    \times -)$).
  \end{enumerate}
\end{example}
A \emph{coalgebra morphism} from a coalgebra $(C,c)$ to a coalgebra
$(D,d)$ is a morphism $h: C \to D$ such that $d \cdot h = Hh \cdot c$;
intuitively, coalgebra morphisms preserve observable behaviour.
Coalgebras and their morphisms form a category $\Coalg(H)$. The
forgetful functor $\Coalg(H)\to \C$ creates all colimits, so
$\Coalg(H)$ has all colimits that $\C$ has.

A \emph{subcoalgebra} of a coalgebra $(C,c)$ is represented by a
coalgebra morphism $m: (D,d) \to (C,c)$ such that $m$ is a
monomorphism in $\C$. Likewise, a \emph{quotient} of a coalgebra
$(C,c)$ is represented by a coalgebra morphism
$q: (C,c) \to (D,d)$ carried by a regular epimorphism $q$ of $\C$. If
$H$ preserves monomorphisms, then the image factorization
structure on $\C$ lifts to coalgebras.  

\begin{definition}
  A coalgebra is \emph{simple} if it does not have any non-trivial
  quotients.
\end{definition}
Equivalently, a coalgebra $(C,c)$ is simple if every coalgebra
morphism with domain $(C,c)$ is carried by a monomorphism. Intuitively, in a simple
coalgebra all states exhibiting the same observable behaviour are
already identified. This paper is concerned with the design of
algorithms for computing \emph{the} simple quotient of a given coalgebra:
\begin{lemma}\label{lemmaSimple}
The simple quotient of a coalgebra is unique (up to isomorphism).
\end{lemma}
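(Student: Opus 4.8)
The plan is to compare two simple quotients by means of a pushout in $\Coalg(H)$. Suppose $q_1\colon (C,c)\epito (D_1,d_1)$ and $q_2\colon (C,c)\epito (D_2,d_2)$ are both simple quotients of a coalgebra $(C,c)$; I want to produce an isomorphism $\varphi\colon (D_1,d_1)\to(D_2,d_2)$ satisfying $\varphi\cdot q_1=q_2$, which exhibits the two quotients as the same up to isomorphism. Since the forgetful functor $\Coalg(H)\to\C$ creates colimits, I would form the pushout of $q_1$ and $q_2$ in $\Coalg(H)$, obtaining a coalgebra $(E,e)$ together with coalgebra morphisms $p_1\colon (D_1,d_1)\to(E,e)$ and $p_2\colon(D_2,d_2)\to(E,e)$ with $p_1\cdot q_1=p_2\cdot q_2$.

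The key observation is that both legs $p_1,p_2$ of the pushout are isomorphisms. First, pushouts of regular epimorphisms are again regular epimorphisms, so since $q_1$ and $q_2$ are regular epis, so are $p_2$ (the pushout of $q_1$) and $p_1$ (the pushout of $q_2$). Second, because $(D_1,d_1)$ is simple, every coalgebra morphism out of it --- in particular $p_1$ --- is carried by a monomorphism; likewise $p_2$ is monic by simplicity of $(D_2,d_2)$. Since a morphism that is simultaneously a regular epimorphism and a monomorphism is an isomorphism, both $p_1$ and $p_2$ are isomorphisms in $\C$, and hence in $\Coalg(H)$, as the inverse of a coalgebra isomorphism is automatically a coalgebra morphism.

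It then remains to assemble the comparison morphism. I would set $\varphi:=p_2^{-1}\cdot p_1\colon (D_1,d_1)\to(D_2,d_2)$, which is an isomorphism of coalgebras, and check its compatibility with the quotient maps: $\varphi\cdot q_1 = p_2^{-1}\cdot p_1\cdot q_1 = p_2^{-1}\cdot p_2\cdot q_2 = q_2$. This identifies the two simple quotients as isomorphic \emph{as quotients} of $(C,c)$, which is precisely the claim.

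I expect the main obstacle to lie not in the final diagram chase but in securing its two supporting facts within the rather weak setting of \autoref{ass:C}: that the relevant pushout exists in $\Coalg(H)$, and that pushouts of regular epimorphisms remain regular epimorphisms. The former rests on $\C$ actually having the pushout of $q_1$ along $q_2$ (which is the case in all the example categories, though \autoref{ass:C} only posits finite limits and coequalizers); the latter I would verify directly by exhibiting the pushout leg as the coequalizer of the $q_2$-image of a pair whose coequalizer is $q_1$, i.e.\ writing $q_1 = \mathrm{coeq}(u,v)$ and checking $p_2 = \mathrm{coeq}(q_2\cdot u, q_2\cdot v)$. Once these are in place, the closure of regular epis under right cancellation and the principle that a regular epi which is monic is an isomorphism make the argument go through.
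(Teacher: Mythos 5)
Your proposal is correct and follows essentially the same route as the paper: form the pushout of the two quotient maps in $\Coalg(H)$ (existence via the coequalizer-of-a-kernel-pair construction, exactly as in the paper's Lemma~\ref{lem:regepi-pushout}), observe that the pushout legs are regular epimorphisms by stability under pushout, and conclude they are isomorphisms by simplicity. The only cosmetic difference is that you pass through the ``every morphism out of a simple coalgebra is monic'' characterization plus ``regular epi and mono implies iso,'' whereas the paper invokes simplicity (no non-trivial quotients) directly on the regular-epi legs; these are interchangeable.
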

Intuitively speaking, two elements (possibly in different coalgebras)
are called behaviourally equivalent if they can be identified by
coalgebra morphisms. Hence, the simple quotient of a
coalgebra is its quotient modulo behavioural equivalence. In our main
examples, this means that we minimize w.r.t.\ standard
bisimilarity-type equivalences.
\begin{example}
  Behavioural equivalence instantiates to various notions of
  bisimilarity:
  \begin{enumerate}
  \item Park-Milner bisimilarity on labelled transition systems;
  \item weighted bisimilarity on weighted transition systems~\cite[Proposition~2]{Klin09};
  \item stochastic bisimilarity on probabilistic transition systems~\cite{Klin09};
  \item Segala bisimilarity on simple and general Segala systems~\cite[Theorem~4.2]{BARTELS200357}.

  \end{enumerate}
\end{example}

\section{A Categorical Algorithm for Behavioural Equivalence}
\label{sec:cat}

We proceed to describe a categorical partition refinement algorithm
that computes the simple quotient of a given coalgebra under fairly
general assumptions.

\begin{assumption}\label{ass:sec3}
  Assume that $H$ is an endofunctor on $\C$ that
  preserves monomorphisms.
\end{assumption}

\noindent Note that mono preservation is w.l.o.g.~for $\C=\Set$. Roughly,
for a given coalgebra $\xi:X\to HX$ in $\Set$, a \emph{partition
  refinement algorithm} maintains a quotient
$q: X\twoheadrightarrow \unnicefrac{X}{Q}$ that distinguishes some
(but possibly not all) states with different behaviour, and in fact,
initially~$q$ typically identifies everything. The algorithm repeats
the following steps:
\begin{enumerate}
\item Gather new information on which states should become separated
  by using $X\xrightarrow{\xi} HX \xrightarrow{Hq} H\unnicefrac{X}{Q}$,
  i.e., by identifying equivalence classes under $q$ that contain
  states whose behaviour is observed to differ under one more step of
  the transition structure $\xi$.
\item Use parts of this information to refine $q$ and repeat until $q$
  does not change any more.
\end{enumerate}
\noindent One of the core ideas of the Paige-Tarjan partition
refinement algorithm~\cite{PaigeTarjan87} is to not use all
information immediately in the second step. Recall that the algorithm
maintains two partitions $Y$ and $Z$ of the state set $X$ of the given
transition system; the elements of $Y$ are called \emph{subblocks} and
the elements of $Z$ are called \emph{compound blocks}. The partition
$Y$ is a refinement of the partition $Z$. The key to the
time efficiency of the algorithm is to select in each iteration a
subblock that is at most half of the size of the compound block it
belongs to. At the present high level of generality (which in
particular does not know about sizes of objects), we encapsulate the
subblock selection in a routine $\op{select}$, assumed as a parameter
to our algorithm:
\begin{definition}
  A \op{select} routine is an operation that receives a chain of two
  regular epis
  \(
    \begin{mytikzcd}
      |[inner sep=0mm]|
      X\  \arrow[->>]{r}{y}
      & Y \arrow[->>]{r}{z}
      & Z
    \end{mytikzcd}
  \)
  and returns some morphism $k: Y \to K$ into some object $K$. We call~$Y$ the
\emph{subblocks} and $Z$ the \emph{compound blocks}.
\end{definition}
The idea is that the morphism $k$ throws away some of the information
provided by the refinement $Y$. For example, in the Paige-Tarjan
algorithm it models the selection of one compound block to be split in
two parts, which then induce the further refinement of $Y$.

\begin{example} \label{exampleSelects}
  % this warning can be avoided by basically leaving vmode, however we do not
  % have the space for the enumeration to begin in a new line.
  % https://tex.stackexchange.com/a/187647/46552
\begin{enumerate}
\item \label{exampleSelectsChi} In the classical Paige-Tarjan
  algorithm~\cite{PaigeTarjan87}, i.e., for $\C = \Set$, one wants to
  find a proper subblock that is at most half of the size of the
  compound block it sits in. So let $S\in Y$ such that
  $2\cdot |y^{-1}[\{S\}]| \le |(zy)^{-1}[\{z(S)\}]|$. Here, $z(S)$ is
  the compound block containing $S$. Then we let $\op{select}(z,y)$ be
  $k: Y\to 3$ given by $k(x) = 2$ if $x=S$, else $k(x) = 1$ if
  $z(x) = z(S)$, and $k(x) = 0$ otherwise; i.e.\
  $k = \chi_{\{S\}}^{[S]_z}$ (\autoref{D:chi}). If~$Y$ and~$Z$ are
  encoded as partitions of $X$, then $S$ and $C:=z(S)$ are subsets of
  $X$ and $k\cdot y = \chi_S^C$.  If there is no such $S\in Y$, then
  $z$ is bijective, i.e., there is no compound block from $Z$ that
  needs to be refined. In this case, $k$ does not matter and we simply
  put $k=\mathbin{!}: Y \to 1$.

\item One obvious choice for $k$ is to take the identity on $Y$, so
  that \emph{all} of the information present in $Y$ is used for
  further refinement. We will discuss this in \autoref{finalchain}.

\item Two other, trivial, choices are $k=\mathbin{!}: Y\to 1$ and
  $k=z$. Since both of these choices provide no extra information,
  this will leave the partitions unchanged, see \autoref{noprogress}.
\end{enumerate}
\end{example}
\noindent Given a \op{select} routine, the most general form of our
partition refinement works as follows.
\begin{algorithm} \label{catPT} Given a coalgebra $\xi: X\to HX$, we
  successively refine equivalence relations~$Q$ and $P$ on $X$,
  maintaining the invariant that $P$ is finer than $Q$. In each step,
  we take into account new information on the behaviour of states,
  represented by a map $q:X\to K$, and accumulate this information in
  a map $\bar q:X\to \bar K$. To facilitate the analysis, these
  variables are indexed over loop iterations in the
  description. Initial values are
  \begin{equation*}
    Q_0  = X \times X\qquad
    q_0  = \mathbin{!}: X \to 1 = K_0\qquad
    P_0 =  \ker(
    X \xrightarrow{\xi}
    HX \xrightarrow{H!} H1).
  \end{equation*}
  We then iterate the following steps while $P_i \neq Q_i$,
  for $i \ge 0$:
  \begin{enumerate}
  \item \(
    \unnicefrac{X}{P_i} \overset{k_{i+1}}{\twoheadrightarrow} K_{i+1}
    := \op{select}\big(\!
      X \twoheadrightarrow
      \unnicefrac{X}{P_i} \twoheadrightarrow
      \unnicefrac{X}{Q_i}
    \!\big)
    \), % where $f_i$ is the regular epi witnessing 
    using that  $\unnicefrac{X}{P_i}$ is finer than $\unnicefrac{X}{Q_i}$

  \item $q_{i+1} := \begin{mytikzcd}[inlinecd]
      X \arrow[->>]{r}{}%{\kappa_{P_i}}
      & \unnicefrac{X}{P_i} \arrow[->]{r}{k_{i+1}}
      &[2mm] K_{i+1}
    \end{mytikzcd}$,$\quad$
    $\bar q_{i+1} := \fpair{\bar q_i,q_{i+1}}:
    \begin{mytikzcd}[inlinecd]
      X \arrow{r}
      & \bar K_i × K_{i+1}
    \end{mytikzcd}$

  \item\label{step:Q} $Q_{i+1} := \ker\bar q_{i+1}$
    $\quad(= \ker \fpair{\bar q_i,q_{i+1}}= \ker \bar q_i \cap \ker q_{i+1})$

  \item\label{step:P}
    $P_{i+1} := \ker\big(\!\begin{mytikzcd}[inlinecd] X \arrow{r}{\xi}
      & HX \arrow{rr}{H\bar q_{i+1}} && H\prod_{j\le i+1} K_j
    \end{mytikzcd}\!\big)$ 

  %\item $i := i + 1$
    \end{enumerate}
    Upon termination, the algorithm returns
    $\unnicefrac{X}{P_i}=\unnicefrac{X}{Q_i}$ as the simple quotient of
    $(X,\xi)$.
\end{algorithm}
\begin{notation}
  For spans $R\rightrightarrows X$, we will denote the canonical
  quotient by $\kappa_R: X\twoheadrightarrow \unnicefrac{X}{R}$.
\end{notation}
We proceed to prove correctness, i.e.\ that the algorithm really does
return the simple quotient of $(X,\xi)$. We fix the notation in
Algorithm~\ref{catPT} throughout. Since $\bar q$ accumulates more
information in every step, it is clear that $P$ and $Q$ are really
being successively refined:
\\[1mm]
\begin{minipage}{.4\textwidth}
\begin{lemma}\label{lem:inc} \label{PfinerthanQ} 
  For every $i$, $P_{i+1}$ is finer than $P_i$,
  $Q_{i+1}$ is finer than $Q_i$, and $P_i$ is finer
  than $Q_{i+1}$.
\end{lemma}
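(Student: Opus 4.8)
The plan is to reduce all three statements to two elementary monotonicity properties of kernels available under \autoref{ass:C}. First, for composable morphisms $f: A\to B$ and $g: B\to C$ one always has $\ker f\subseteq\ker(g\cdot f)$, i.e.\ postcomposition can only coarsen a kernel. Categorically this follows from the universal property of $\ker(g\cdot f)$ as the pullback of $g\cdot f$ along itself: the jointly monic span $\ker f\rightrightarrows A$ satisfies $g\cdot f\cdot\out_1=g\cdot f\cdot\out_2$ and hence factors through $\ker(g\cdot f)$, witnessing the inclusion in the order on relations. Second, I use the identity $\ker\fpair{f,g}=\ker f\cap\ker g$ recalled in the preliminaries.

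Given these, $Q_{i+1}$ finer than $Q_i$ is immediate: by definition $Q_{i+1}=\ker\bar q_{i+1}=\ker\fpair{\bar q_i,q_{i+1}}=\ker\bar q_i\cap\ker q_{i+1}=Q_i\cap\ker q_{i+1}\subseteq Q_i$. For $P_{i+1}$ finer than $P_i$, I would use that $\bar q_{i+1}=\fpair{\bar q_i,q_{i+1}}$ factors the accumulated map through a projection, $\out_1\cdot\bar q_{i+1}=\bar q_i$; applying $H$ and precomposing with $\xi$ expresses the defining map of $P_i$ as $H\out_1$ postcomposed with that of $P_{i+1}$, so the first monotonicity fact yields $P_{i+1}\subseteq P_i$ directly.

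The substantive statement is $P_i$ finer than $Q_{i+1}$, which I would prove by induction on $i$, simultaneously maintaining the invariant $P_i\subseteq Q_i$. Since $Q_{i+1}=Q_i\cap\ker q_{i+1}$, it suffices to establish $P_i\subseteq Q_i$ and $P_i\subseteq\ker q_{i+1}$. The latter is unconditional: in Algorithm~\ref{catPT} the map $q_{i+1}=k_{i+1}\cdot\kappa_{P_i}$ factors through the quotient $\kappa_{P_i}: X\twoheadrightarrow\unnicefrac{X}{P_i}$ whose kernel is $P_i$, so the first fact gives $P_i\subseteq\ker q_{i+1}$. For the invariant the base case $P_0\subseteq Q_0=X\times X$ is trivial, and in the inductive step I would chain $P_{i+1}\subseteq P_i$ with the hypothesis $P_i\subseteq Q_i$ and with $P_{i+1}\subseteq P_i\subseteq\ker q_{i+1}$ to land in $Q_{i+1}=Q_i\cap\ker q_{i+1}$.

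The only place demanding care is keeping the induction honest, since the invariant $P_i\subseteq Q_i$ and the target $P_i\subseteq Q_{i+1}$ depend on one another through the two monotonicities. I would therefore fix the order strictly: first derive the two unconditional inclusions and the factorization $P_i\subseteq\ker q_{i+1}$, and only then run the induction for the invariant, after which $P_i\subseteq Q_{i+1}$ is read off. A minor bookkeeping point is that the initial data must fit the uniform pattern, in particular $Q_0=\ker\bar q_0=\ker{!}=X\times X$, so that $Q_{i+1}=Q_i\cap\ker q_{i+1}$ holds already for $i=0$. Beyond this the argument is purely formal, invoking only the two kernel identities and no special properties of $H$ or $\xi$.
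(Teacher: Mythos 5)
Your proof is correct and follows essentially the same route as the paper: the first two claims come from postcomposition monotonicity of kernels applied to the projection $\out_1\cdot\bar q_{i+1}=\bar q_i$ (and its image under $H$), and the third from the factorization of $q_{i+1}$ through $\kappa_{P_i}$ combined with an induction over the decreasing chain of the $P_j$. The only difference is packaging — you carry the invariant $P_i\subseteq Q_i$ and use $Q_{i+1}=Q_i\cap\ker q_{i+1}$ one step at a time, whereas the paper unfolds $Q_{i+1}=\bigcap_{j\le i+1}\ker q_j$ and checks each factor — but the underlying facts invoked are identical.
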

\end{minipage}\hfill%
\begin{minipage}{.55\textwidth}
\vspace{-4mm}
\begin{equation} \label{partitionSequence}
\hspace{-4mm}
\begin{mytikzcd}[column sep=5mm, row sep = 4mm,baseline=-1mm]
Q_0
& Q_1 \arrow[>->]{l}
& Q_2 \arrow[>->]{l}
&[3mm] Q_{i+1}
    \arrow[>->,dotted]{l}
    \arrow[>-,shorten >=6mm]{l}
    \arrow[->,shorten <=6mm]{l}
& Q_{i+2} \arrow[>->]{l}
&
    \arrow[->,shorten <=1mm,dotted]{l}
    \arrow[->,shorten <=3mm]{l}
\\
& P_0
    \arrow[>->]{u}
& P_1 \arrow[>->]{l}
    \arrow[>->]{u}
& \mathrlap{\,\,\,P_i}\phantom{Q_{i+1}}
    \arrow[>->,dotted]{l}
    \arrow[>-,shorten >=6mm]{l}
    \arrow[->,shorten <=6mm]{l}
    \arrow[>->]{u}
& P_{i+1} \arrow[>->]{l}
    \arrow[>->]{u}
&
    \arrow[->,shorten <=1mm,dotted]{l}
    \arrow[->,shorten <=3mm]{l}
\end{mytikzcd}
\hspace{-10mm}
\end{equation}
\end{minipage}
\\[1mm]
If we suppress the termination on $P_i = Q_i$ for a moment,
then the algorithm thus computes equivalence relations refining each
other.
At each step, \op{select} decides which part of the information
present in $P_i$ but not in $Q_i$ should be used to refine $Q_i$ to
$Q_{i+1}$. % Then
% $P_{i+1}$ splits all behaviours that are different w.r.t.~$Q_{i+1}$:
\\[-2mm]
\begin{minipage}[c]{.5\textwidth}
\begin{proposition}\label{propQuot}
  There exist morphisms
  $\unnicefrac{\xi}{Q_i}: \unnicefrac{X}{P_i} \to H(\unnicefrac{X}{Q_i})$
  for $i\ge 0$ (necessarily unique) such that \eqref{eq:xiQuotient} commutes.
\end{proposition}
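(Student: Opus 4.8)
The plan is to read the commuting condition~\eqref{eq:xiQuotient} as the square expressing $\unnicefrac{\xi}{Q_i}\cdot\kappa_{P_i} = H\kappa_{Q_i}\cdot\xi$, and to produce the diagonal $\unnicefrac{\xi}{Q_i}\colon\unnicefrac{X}{P_i}\to H(\unnicefrac{X}{Q_i})$ by the universal property of the regular epimorphism $\kappa_{P_i}\colon X\epito\unnicefrac{X}{P_i}$, which under \autoref{ass:C} is the coequalizer of the kernel pair $P_i\rightrightarrows X$. It therefore suffices to show that the composite $H\kappa_{Q_i}\cdot\xi\colon X\to H(\unnicefrac{X}{Q_i})$ coequalizes $P_i\rightrightarrows X$; the resulting factorization is then automatically unique because $\kappa_{P_i}$ is epic.

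First I would record uniform descriptions of the two relations in terms of the accumulated map $\bar q_i\colon X\to\bar K_i$. Unfolding \autoref{catPT}, steps~\ref{step:Q} and~\ref{step:P} give, for every $i\ge 0$,
\[
  Q_i = \ker \bar q_i
  \qquad\text{and}\qquad
  P_i = \ker\big(X \xrightarrow{\xi} HX \xrightarrow{H\bar q_i} H\bar K_i\big),
\]
where the base case $i=0$ fits this pattern via $\bar q_0 = q_0 = {!}$, so that $Q_0=\ker{!}=X\times X$ and $P_0=\ker(H{!}\cdot\xi)$. Since $Q_i=\ker\bar q_i$, the image factorization of $\bar q_i$ takes the form $\bar q_i = \bar m_i\cdot\kappa_{Q_i}$ with $\kappa_{Q_i}\colon X\epito\unnicefrac{X}{Q_i}$ the quotient by $Q_i$ and $\bar m_i\colon\unnicefrac{X}{Q_i}\monoto\bar K_i$ a monomorphism.

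The core of the argument is then a single cancellation. Writing $p_1,p_2\colon P_i\to X$ for the two projections of the kernel pair $P_i=\ker(H\bar q_i\cdot\xi)$, the defining property of a kernel gives
\[
  H\bar q_i \cdot \xi \cdot p_1 = H\bar q_i \cdot \xi \cdot p_2 .
\]
Substituting $H\bar q_i = H\bar m_i\cdot H\kappa_{Q_i}$ and using that $H$ preserves monomorphisms (\autoref{ass:sec3}), so that $H\bar m_i$ is monic, I can cancel $H\bar m_i$ on the left to obtain $H\kappa_{Q_i}\cdot\xi\cdot p_1 = H\kappa_{Q_i}\cdot\xi\cdot p_2$. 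Hence $H\kappa_{Q_i}\cdot\xi$ coequalizes $P_i\rightrightarrows X$, and the universal property of $\kappa_{P_i}$ yields the required $\unnicefrac{\xi}{Q_i}$ with $\unnicefrac{\xi}{Q_i}\cdot\kappa_{P_i}=H\kappa_{Q_i}\cdot\xi$.

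The proof is short once the setup is in place, so I do not anticipate a deep obstacle; the work is mostly bookkeeping. The two points that need care are verifying the uniform kernel descriptions of $P_i$ and $Q_i$ above — in particular matching the base case $i=0$ to the general pattern — and confirming that $\kappa_{P_i}$ is genuinely the coequalizer of its kernel pair, which is exactly what \autoref{ass:C} supplies. The one truly load-bearing hypothesis is mono-preservation of $H$: without it the factor $H\bar m_i$ need not be monic, and the cancellation step, which is the heart of the argument, would break down.
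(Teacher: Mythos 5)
Your proposal is correct and follows essentially the same route as the paper's proof: factor $\bar q_i = \bar m_i\cdot\kappa_{Q_i}$ through its image, use mono-preservation of $H$ to conclude $\ker(H\bar q_i\cdot\xi)=\ker(H\kappa_{Q_i}\cdot\xi)=P_i$, and invoke the coequalizer property of $\kappa_{P_i}$. The explicit cancellation against the kernel-pair projections and the check of the base case $i=0$ are just the paper's argument spelled out in slightly more detail.
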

\end{minipage}\hfill
\raisebox{2mm}{
\begin{minipage}[c]{.4\textwidth}
\begin{equation}
    \begin{mytikzcd}[
            row sep=4mm,
            column sep=12mm,
        ]
        X
            \arrow[->>]{r}{\kappa_{P_i}}
            \arrow{d}[left]{\xi\ }
        & \unnicefrac{X}{P_i}
            \arrow[dash pattern=on 2pt off 1pt]{d}{\ \unnicefrac{\xi}{Q_i}}
        \\
        HX
            \arrow[->]{r}[yshift=1pt]{H\kappa_{Q_i}}
        & H (\unnicefrac{X}{Q_i})
    \end{mytikzcd}
    \label{eq:xiQuotient}
  \end{equation}
\end{minipage}}

\noindent Upon termination the morphism
$\unnicefrac{\xi}{Q_i}$ yields the structure of a quotient coalgebra of $\xi$:
\begin{corollary}
  If $P_i = Q_i$ then $\unnicefrac{X}{Q_i}$ carries a unique coalgebra
  structure forming a quotient of $\xi: X\to HX$.
\end{corollary}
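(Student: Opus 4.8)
My plan is to read the required coalgebra structure directly off the commuting square~\eqref{eq:xiQuotient} supplied by \autoref{propQuot}, so that almost nothing beyond a careful identification of objects remains to be done. \autoref{propQuot} already provides a (necessarily unique) morphism $\unnicefrac{\xi}{Q_i}\colon\unnicefrac{X}{P_i}\to H(\unnicefrac{X}{Q_i})$ satisfying $\unnicefrac{\xi}{Q_i}\cdot\kappa_{P_i}=H\kappa_{Q_i}\cdot\xi$. First I would observe that the hypothesis $P_i=Q_i$ forces the two canonical quotients to coincide: since $\kappa_{P_i}$ and $\kappa_{Q_i}$ are by definition coequalizers of one and the same span $P_i=Q_i\rightrightarrows X$, we obtain $\unnicefrac{X}{P_i}=\unnicefrac{X}{Q_i}$ and $\kappa_{P_i}=\kappa_{Q_i}$. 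I would then write $\kappa$ for this common regular epimorphism and set $d:=\unnicefrac{\xi}{Q_i}$, now regarded as a morphism $\unnicefrac{X}{Q_i}\to H(\unnicefrac{X}{Q_i})$.

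Under this identification the square~\eqref{eq:xiQuotient} reads $d\cdot\kappa=H\kappa\cdot\xi$, which is exactly the condition for $\kappa\colon(X,\xi)\to(\unnicefrac{X}{Q_i},d)$ to be a morphism of $H$-coalgebras. Since $\kappa$ is a regular epimorphism in $\C$, this exhibits $(\unnicefrac{X}{Q_i},d)$ as a quotient of $(X,\xi)$ in $\Coalg(H)$, which is the assertion.

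For uniqueness of $d$ I would argue that $\kappa$ is epic (every regular epi is an epi), so that any structure $d'$ with $d'\cdot\kappa=H\kappa\cdot\xi=d\cdot\kappa$ necessarily equals $d$; alternatively this is immediate from the uniqueness clause already recorded in \autoref{propQuot}. I do not expect a genuine obstacle in this corollary: the substantive work lies in \autoref{propQuot}, whose role is to construct $\unnicefrac{\xi}{Q_i}$ already \emph{before} $P_i$ and $Q_i$ agree. The only point that needs care here is the bookkeeping that, at the fixed point $P_i=Q_i$, this morphism has matching source and target and therefore genuinely constitutes a coalgebra structure rather than merely a comparison map between two distinct quotients.
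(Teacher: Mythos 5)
Your argument is correct and coincides with the paper's own (implicit) reasoning: the corollary is stated as an immediate consequence of \autoref{propQuot}, obtained exactly by observing that $P_i=Q_i$ identifies $\unnicefrac{X}{P_i}$ with $\unnicefrac{X}{Q_i}$, so that $\unnicefrac{\xi}{Q_i}$ becomes a coalgebra structure and the square~\eqref{eq:xiQuotient} exhibits $\kappa_{Q_i}$ as a quotient coalgebra morphism, with uniqueness following since $\kappa_{Q_i}$ is epic.
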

\noindent This means intuitively that all states that are merged by
the algorithm are actually behaviourally equivalent. The following
property captures the converse:
\begin{lemma}
    \label{soundness}
    Let $h: (X,\xi)\to (D,d)$ be a quotient of $(X,\xi)$. Then
    $\ker h$ is finer than both~$P_i$ and $Q_i$, for all $i\ge 0$.
\end{lemma}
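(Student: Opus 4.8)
The plan is to proceed by induction on $i$, proving the claim for $P_i$ first; the claim for $Q_i$ will then follow almost for free from \autoref{lem:inc}. The single idea driving every step is that a coalgebra morphism cannot separate states more finely than the algorithm does: whenever a map $f$ used by the algorithm factors through $h$, applying $H$ and using the coalgebra law $Hh\cdot\xi = d\cdot h$ transports that factorization across one transition step.

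First I would settle the base case. Since $1$ is terminal we have $\mathbin{!} = \mathbin{!}_D\cdot h$, where $\mathbin{!}_D : D \to 1$, so $H\mathbin{!}\cdot\xi = H\mathbin{!}_D\cdot Hh\cdot\xi = H\mathbin{!}_D\cdot d\cdot h$ factors through $h$; hence $\ker h$ is finer than $\ker(H\mathbin{!}\cdot\xi)=P_0$. Of course $\ker h$ is finer than $Q_0 = X\times X$ trivially.

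For the inductive step, assume $\ker h$ is finer than $P_i$. By \autoref{lem:inc}, $P_i$ is finer than $Q_{i+1}$, so $\ker h$ is finer than $Q_{i+1} = \ker\bar q_{i+1}$. This is the crucial leverage: since $h$ is a quotient, it is carried by a regular epi, hence is the coequalizer of its kernel pair, and the inclusion $\ker h \subseteq \ker\bar q_{i+1}$ says precisely that $\bar q_{i+1}$ coequalizes that kernel pair. Therefore $\bar q_{i+1}$ factors as $g\cdot h$ for a unique $g$. Applying $H$ and the coalgebra law gives
\[
  H\bar q_{i+1}\cdot\xi = Hg\cdot Hh\cdot\xi = Hg\cdot d\cdot h,
\]
which again factors through $h$. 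Consequently $\ker h$ is finer than $\ker(H\bar q_{i+1}\cdot\xi) = P_{i+1}$, completing the induction for $P$. The claim for $Q$ now follows: $\ker h$ is finer than $Q_0$ trivially, and for $i\ge 1$ we have $\ker h$ finer than $P_{i-1}$, which by \autoref{lem:inc} is finer than $Q_i$.

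I expect the only subtle point to be the factorization of $\bar q_{i+1}$ through $h$ in the abstract categorical setting (in $\Set$ it is immediate). Here one must spell out that ``$\ker h$ finer than $\ker f$'' means there is a comparison morphism between kernel pairs over $X\times X$, which forces $f$ to coequalize the two legs of $\ker h$; regularity of $h$ then yields the desired factor. The converse bookkeeping — that any morphism factoring through $h$ has kernel above $\ker h$, used to read off $\ker h \subseteq P_{i+1}$ from the displayed factorization — is dual and is where each step closes.
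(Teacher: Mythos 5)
Your proof is correct and follows essentially the same route as the paper's: the paper isolates the implication ``$\ker h$ finer than $Q_i$ implies $\ker h$ finer than $P_i$'' by factoring $\kappa_{Q_i}$ through $h$ and pushing the factorization through $H$ via the coalgebra law $Hh\cdot\xi = d\cdot h$, then closes the induction with \autoref{lem:inc} exactly as you do. The only (immaterial) difference is that you factor $\bar q_{i+1}$ itself through the regular epi $h$ rather than its epi part $\kappa_{Q_{i+1}}$, which sidesteps the appeal to $P_{i+1}=\ker(H\kappa_{Q_{i+1}}\cdot\xi)$ but is otherwise the same argument.
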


\begin{theorem}[(Correctness)] \label{correctness}
  If $P_i = Q_i$, then $\unnicefrac{\xi}{Q_i}\!: \unnicefrac{X}{Q_i}\to
  H\,\unnicefrac{X}{Q_i}$ is a simple coalgebra.
\end{theorem}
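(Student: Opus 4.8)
The plan is to establish simplicity in its most direct form: I will show that every quotient of the coalgebra $(\unnicefrac{X}{Q_i},\unnicefrac{\xi}{Q_i})$ is an isomorphism, so that it has no non-trivial quotients. Thus I start from an arbitrary quotient, i.e.\ a coalgebra morphism $q\colon (\unnicefrac{X}{Q_i},\unnicefrac{\xi}{Q_i})\twoheadrightarrow (E,e)$ carried by a regular epi of $\C$, and aim to prove that $q$ is an isomorphism. The crucial idea is that precomposing $q$ with the canonical quotient turns it into a quotient of the \emph{original} coalgebra, to which the soundness lemma applies.

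Concretely, since $P_i=Q_i$ the corollary above makes $\kappa_{Q_i}\colon (X,\xi)\to(\unnicefrac{X}{Q_i},\unnicefrac{\xi}{Q_i})$ a coalgebra morphism, so the composite $h:=q\cdot\kappa_{Q_i}\colon (X,\xi)\to(E,e)$ is one as well; as regular epis are closed under composition (\autoref{ass:C}), $h$ is carried by a regular epi and hence is a quotient of $(X,\xi)$. I then pin down $\ker h$ from both sides. On the one hand, \autoref{soundness} gives that $\ker h$ is finer than $Q_i$, i.e.\ $\ker h\subseteq Q_i$. On the other hand, $h$ factors through $\kappa_{Q_i}$, and since $\ker\kappa_{Q_i}=Q_i$ this forces $Q_i\subseteq\ker h$. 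Together, $\ker h=Q_i=\ker\kappa_{Q_i}$.

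It remains to convert this equality of kernels into the monicity of $q$. Taking an image factorization $h=m_h\cdot e_h$ with $e_h$ a regular epi and $m_h$ a monomorphism, the mono $m_h$ leaves kernels unchanged, so $\ker e_h=\ker h=\ker\kappa_{Q_i}=Q_i$. Both $e_h$ and $\kappa_{Q_i}$ are then regular epis with kernel pair $Q_i$; since each is a coequalizer and coequalizes the pair $Q_i\rightrightarrows X$, a routine mutual-factorization argument yields an isomorphism $s$ with $\kappa_{Q_i}=s\cdot e_h$. Substituting into $h=q\cdot\kappa_{Q_i}=(q\cdot s)\cdot e_h$ and cancelling the epi $e_h$ against $h=m_h\cdot e_h$ gives $q\cdot s=m_h$; as $s$ is an isomorphism and $m_h$ is monic, $q=m_h\cdot s^{-1}$ is monic too. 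Finally, $q$ is simultaneously a regular epi and a mono, and any such morphism is an isomorphism; its inverse is automatically a coalgebra morphism, so the quotient $q$ is trivial, which is exactly simplicity.

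I expect the single genuine obstacle to be the categorical step from $\ker h=\ker\kappa_{Q_i}$ to ``$q$ is monic''. In $\Set$ this is immediate from comparing fibres, but at the present level of generality it has to be routed through the image factorization of $h$ together with the uniqueness up to isomorphism of regular-epi quotients having a prescribed kernel pair. Everything else is bookkeeping: the reduction to a single quotient $q$, the sandwiching of $\ker h$ via \autoref{soundness}, and the elementary fact that a morphism which is both a regular epi and a monomorphism is invertible.
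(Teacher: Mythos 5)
Your proposal is correct and takes essentially the same route as the paper: precompose the arbitrary quotient $q$ with $\kappa_{Q_i}$ to obtain a quotient of $(X,\xi)$, sandwich its kernel between $Q_i$ (via \autoref{soundness}) and $Q_i$ (via factorization through $\kappa_{Q_i}$), and conclude that $q$ is an isomorphism. The only difference is that you spell out the final step (equal kernels of regular epis give isomorphic quotients, hence $q$ is monic and thus iso), which the paper leaves implicit.
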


\begin{remark}
  Most classical partition refinement algorithms are parametrized by an initial
  partition $\kappa_{\mathcal{I}}: X\twoheadrightarrow
  \unnicefrac{X}{\mathcal{I}}$. We start with the trivial partition $!: X \to 1$
  because a non-trivial initial partition might split equivalent behaviours and
  then would invalidate \autoref{soundness}. To accomodate an initial partition
  $\unnicefrac{X}{\mathcal{I}}$ coalgebraically, replace $(X,\xi)$ with the
  coalgebra $\fpair{\xi,\kappa_\mathcal{I}}$ for the functor
  $H(-)×\unnicefrac{X}{\mathcal{I}}$ -- indeed, already $P_0$ will then be finer
  than $\mathcal{I}$.
\end{remark}
\noindent We look in more detail at two corner cases of the algorithm,
where the \op{select} routine retains all available information,
respectively none:

\begin{remark} \label{finalchain} 
  Recall that $H$ induces the \emph{final sequence}:
  %, whose initial segment has the shape:
    \[
      1 \xleftarrow{!} H1 \xleftarrow{H!} H^21 \xleftarrow{H^2!} \cdots
      \xleftarrow{H^{i-1}!} H^i 1 \xleftarrow{H^i !} H^{i+1} 1
      \xleftarrow{H^{i+1} !} \cdots
    \]
    Every coalgebra $\xi: X\to HX$ then induces a \emph{canonical
      cone} $\xi^{(i)}: X\to H^i 1$ on the final sequence, defined
    inductively by $\xi^{(0)}=\mathbin{!}$,
    $\xi^{(i+1)} = H\xi^{(i)}\cdot\xi$.  The objects $H^n 1$ may be
    thought of as domains of $n$-step behaviour for $H$-coalgebras. If
    $\C=\Set$ and $X$ is finite, then states $x$ and $y$ are
    behaviourally equivalent iff $\xi^{(i)}(x) = \xi^{(i)}(y)$ for all
    $i < \omega$ \cite{Worrell05}.
    % This uses that for finite coalgebras, behavioural equivalence
    % remains the same when we pass to the finitary part of the type
    % functor. This amounts to showing that if two states in a finite
    % coalgebra can be identified by a morphism into some H-coalgebra
    % then they can be identified by a morphism into a finite
    % H-coalgebra. This is just by image factorization.

    The vertical inclusions in \eqref{partitionSequence} reflect that
    only some and not necessarily all of the information present in
    the relation $P_i$ (resp.~the quotient $\unnicefrac{X}{P_i}$) is
    used for further refinement. If indeed everything is used, i.e.,
    we have $k_{i+1} := \id_{\unnicefrac{X}{P_i}}$, then these
    inclusions become isomorphisms and then our algorithm simply
    computes the kernels of the morphisms in the canonical cone,
    i.e.~$Q_i = \ker\xi^{(i)}$.
\end{remark}
\noindent That is, when \op{select} retains all available information,
then Algorithm~\ref{catPT} just becomes a standard final chain
algorithm (e.g.~\cite{KonigKupper14}). The other extreme is the following:
\begin{definition}
  We say that \op{select} \emph{discards all new information at $i+1$}
  if $k_{i+1}$ factors through the morphism
  $\unnicefrac{X}{P_i} \twoheadrightarrow \unnicefrac{X}{Q_i}$
  witnessing that $P_i$ is finer than $Q_i$, see \autoref{PfinerthanQ}.
\end{definition}
\begin{lemma} \label{noprogress} The algorithm fails to progress in
  the $i+1$-th iteration, i.e.\ $Q_{i+1}=Q_i$, iff \op{select}
  discards all new information at $i+1$.
\end{lemma}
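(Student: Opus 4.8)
The plan is to collapse both sides of the stated equivalence into one factorization condition on $k_{i+1}$ and then chain three easy equivalences. First I would read off from the update $Q_{i+1} = \ker\bar q_{i+1} = \ker\bar q_i \cap \ker q_{i+1} = Q_i \cap \ker q_{i+1}$ in \autoref{catPT} that $Q_{i+1}$ is, by construction, finer than $Q_i$. Hence the two relations coincide exactly when $Q_i$ is already included in $\ker q_{i+1}$; that is, $Q_{i+1} = Q_i$ iff $Q_i$ is finer than $\ker q_{i+1}$.

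Second, I would translate ``$Q_i$ finer than $\ker q_{i+1}$'' into ``$q_{i+1}$ factors through $\kappa_{Q_i}\colon X\twoheadrightarrow\unnicefrac{X}{Q_i}$''. The direction from factorization to fineness is immediate: if $q_{i+1} = g\cdot\kappa_{Q_i}$ then $\ker q_{i+1}\supseteq\ker\kappa_{Q_i}$, and $\ker\kappa_{Q_i} = \ker\bar q_i = Q_i$ because $\kappa_{Q_i}$ is the regular-epi part of the image factorization $\bar q_i = m\cdot\kappa_{Q_i}$ and postcomposition with the monic $m$ leaves the kernel unchanged. For the converse I would use the universal property of $\kappa_{Q_i}$ as a coequalizer: if $Q_i$ is finer than $\ker q_{i+1}$, then the projections of the span $Q_i\rightrightarrows X$ factor through those of $\ker q_{i+1}$, which $q_{i+1}$ coequalizes, so $q_{i+1}$ coequalizes $Q_i\rightrightarrows X$ and therefore factors through $\kappa_{Q_i}$.

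Finally I would connect this to \op{select}. Writing $p\colon\unnicefrac{X}{P_i}\twoheadrightarrow\unnicefrac{X}{Q_i}$ for the quotient witnessing that $P_i$ is finer than $Q_i$ (\autoref{PfinerthanQ}), we have $\kappa_{Q_i} = p\cdot\kappa_{P_i}$, and by construction of $q_{i+1}$ in \autoref{catPT} we have $q_{i+1} = k_{i+1}\cdot\kappa_{P_i}$. Since $\kappa_{P_i}$ is (regular) epic, $q_{i+1} = k_{i+1}\cdot\kappa_{P_i}$ factors through $\kappa_{Q_i} = p\cdot\kappa_{P_i}$ iff $k_{i+1}$ factors through $p$, which is exactly the definition of \op{select} discarding all new information at $i+1$. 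Chaining the three equivalences gives the lemma.

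The hard part is the backward direction of the middle step: passing from the mere inclusion of relations $Q_i\subseteq\ker q_{i+1}$ to an actual factorization of $q_{i+1}$ through $\kappa_{Q_i}$. This is the only place where the argument departs from formal kernel manipulation (using $\ker\fpair{f,g}=\ker f\cap\ker g$ and epi/mono cancellation) and genuinely relies on \autoref{ass:C}, namely that $\kappa_{Q_i}$ arises as a regular epi from an image factorization and hence, by finite completeness, is the coequalizer of its own kernel pair $Q_i$.
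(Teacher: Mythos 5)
Your proposal is correct and follows essentially the same route as the paper's proof: reduce $Q_{i+1}=Q_i$ to $Q_i$ being finer than $\ker q_{i+1}$ via $Q_{i+1}=Q_i\cap\ker q_{i+1}$, translate that into $q_{i+1}$ factoring through $\kappa_{Q_i}$, and then cancel the epi $\kappa_{P_i}$ to identify this with $k_{i+1}$ factoring through $\unnicefrac{X}{P_i}\twoheadrightarrow\unnicefrac{X}{Q_i}$. You merely spell out the coequalizer and mono-cancellation details that the paper leaves implicit.
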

\begin{corollary}
  If $\C$ is (concrete over) $\Set^\Sorts$ and \op{select} never discards
  all new information, then \autoref{catPT} terminates and computes
  the simple quotient of a given finite coalgebra.
\end{corollary}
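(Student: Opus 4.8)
The plan is to prove the two assertions---termination and correctness---separately, reducing each to results already established.

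\textbf{Termination.} The loop runs exactly while $P_i\neq Q_i$, so it suffices to show that the loop body is executed only finitely often. Whenever it is executed we have $P_i\neq Q_i$, and by hypothesis \op{select} does not discard all new information at $i+1$; hence \autoref{noprogress} yields $Q_{i+1}\neq Q_i$. Combined with \autoref{lem:inc}, which gives that $Q_{i+1}$ is finer than $Q_i$, this means that \emph{every} executed iteration strictly refines the equivalence relation $Q_i$. First I would treat the base case $\C=\Set^\Sorts$: there the carrier $X$ of a finite coalgebra is a finite sorted set, a strictly refining chain of equivalence relations on $X$ strictly increases the number of blocks at each step, and the number of blocks is bounded by the number of states $\sum_{s\in\Sorts}|X_s|$. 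Thus the chain $Q_0,Q_1,\dots$ can strictly refine at most $|X|$ times, so the loop terminates, necessarily with $P_i=Q_i$.

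\textbf{From $\Set^\Sorts$ to concrete categories.} For $\C$ merely concrete over $\Set^\Sorts$, I would transfer this bound along the faithful functor $U\colon\C\to\Set^\Sorts$. The canonical collapsing maps $c_i\colon \unnicefrac{X}{Q_{i+1}}\twoheadrightarrow\unnicefrac{X}{Q_i}$, witnessing that $Q_{i+1}$ is finer than $Q_i$, are regular epis; when an iteration strictly refines $Q_i$ the map $c_i$ is not an isomorphism, hence---being a regular epi that fails to be iso---not a monomorphism, and since $U$ reflects monomorphisms $Uc_i$ is non-injective in $\Set^\Sorts$. Using that a finite coalgebra has $UX$ finite, this bounds how often strict refinement can occur. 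I expect this transfer to be the main obstacle: making it airtight requires relating the kernels $Q_i$ computed in $\C$ to genuine equivalence relations on the finite set $UX$, and thus pinning down exactly which preservation properties of $U$ beyond faithfulness (e.g.\ of the relevant quotients) are needed; for $\C=\Set^\Sorts$ itself the issue disappears entirely.

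\textbf{Correctness.} Once the loop terminates we have $P_i=Q_i$. By \autoref{propQuot} and the corollary following it, $\unnicefrac{X}{Q_i}$ then carries a unique coalgebra structure $\unnicefrac{\xi}{Q_i}$ making $\kappa_{Q_i}\colon(X,\xi)\to(\unnicefrac{X}{Q_i},\unnicefrac{\xi}{Q_i})$ a quotient of the input coalgebra, and by \autoref{correctness} this quotient is a \emph{simple} coalgebra. Finally \autoref{lemmaSimple} states that the simple quotient of a coalgebra is unique up to isomorphism, so the coalgebra returned by the algorithm is indeed \emph{the} simple quotient of $(X,\xi)$. As a sanity check on the expected reading, \autoref{soundness} confirms that the algorithm never over-identifies, since each computed $Q_i$ is coarser than the kernel of any quotient; so termination is reached precisely at behavioural equivalence.
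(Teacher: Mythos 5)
Your proposal is correct and follows essentially the same route as the paper: the paper's own justification is precisely that the $Q_i$ form a chain of successively finer quotients which, by \autoref{noprogress} and finiteness of the carrier, cannot refine strictly forever, after which \autoref{correctness} and \autoref{lemmaSimple} give that the output is \emph{the} simple quotient. Your extra care about transferring the finiteness bound along the forgetful functor for $\C$ merely concrete over $\Set^\Sorts$ goes beyond what the paper spells out (it leaves that point implicit), but it does not change the argument.
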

Indeed, \autoref{propQuot} shows that we obtain a chain of
successively finer quotients of $X$, and by \autoref{noprogress} this
chain must finally converge (i.e.\ $P_i=Q_i$ will hold).

\section{Incremental Partition Refinement}
\label{sec:opti}

\noindent In the most generic version of the partition refinement
algorithm (Algorithm~\ref{catPT}), the partitions are recomputed from
scratch in every step: In Step~\ref{step:P} of the algorithm,
$P_{i+1}=\ker(H\fpair{\bar q_i,q_{i+1}}\cdot\xi)$ is computed from the
information $\bar q_i$ accumulated so far and the new information
$q_{i+1}$, but in general one cannot exploit that the kernel
of $\bar q_i$ has already been computed. We now present a refinement
of the algorithm in which the partitions are computed incrementally,
i.e.\ $P_{i+1}$ is computed from $P_i$ and $q_{i+1}$. This requires
the type functor $H$ to be \emph{zippable} (\autoref{D:zip}). The
algorithm will be further refined in the next section.

Note that in Step~\ref{step:Q}, \autoref{catPT} computes a kernel
\( Q_{i+1} = \ker \bar q_{i+1} = \ker \fpair{\bar q_i, q_{i+1}}.  \)
In general, the kernel of a pair $\fpair{a,b}: D\to A×B$ is an
intersection % :
% \begin{align*}
%     (d_1,d_2) \in \ker \fpair{a,b}
%     \ &\Longleftrightarrow\ %
%     \fpair{a,b}(d_1)
%     = \fpair{a,b}(d_2)
%     \ \Longleftrightarrow\ %
%     a(d_1) = a(d_2)
%     \text{ and }
%     b(d_1) = b(d_2)
%     \\
%     \ &\Longleftrightarrow\ %
%     (d_1,d_2) \in \ker a
%     \text{ and }
%     (d_1,d_2) \in \ker b
%     \ \Longleftrightarrow\ %
%     (d_1,d_2) \in
$\ker a \cap \ker b$.
%\end{align*}
Hence, the partition for such a kernel can be computed in two
steps:
\begin{enumerate*}
\item Compute $\unnicefrac{D}{\ker a}$.
\hfill
\item Refine every block in $\unnicefrac{D}{\ker a}$ with respect to
$b: D\to B$.
\end{enumerate*}
\autoref{catPT} can thus be implemented to keep track of the partition
$\unnicefrac{X}{Q_i}$ and then refine this partition by $q_{i+1}$ in
each iteration. 

However, the same trick cannot be applied immediately to the
computation of $\unnicefrac{X}{P_i}$, because of the functor $H$
inside the computation of the kernel:
\( P_{i+1} = \ker (H\fpair{\bar q_i, q_{i+1}}\cdot \xi) \).
In the following, we will provide sufficient conditions for $H$,
$a: D\to A$, $b: D\to B$ to satisfy
\[
    \ker H\fpair{a,b}
    = \ker \fpair{Ha,Hb}.
\]
As soon as this holds for $a=\bar q_i, b=q_{i+1}$, we can optimize the
algorithm by changing Step~\ref{step:P}~to%
\begin{equation}
    P_{i+1}' := \ker \fpair{H\bar q_i\cdot \xi, Hq_{i+1}\cdot \xi}.
    \label{kernelOptimization}
\end{equation}

\begin{definition}\label{D:zip}
    A functor $H$ is \emph{zippable}
    if the following morphism is a monomorphism:
    \[
        \op{unzip}_{H,A,B}:
        H(A+B) \xrightarrow{\fpair{H(A+!),H(!+B)}} H(A+1) × H (1+B)
    \]
\end{definition}

\noindent Intuitively, if $H$ is a functor on $\Set$, we think of
elements $t$ of $H(A+B)$ as shallow terms with variables from
$A+B$. Then zippability means that each $t$ is uniquely determined by
the two terms obtained by replacing $A$- and $B$-variables,
respectively, by some placeholder $\gap$, viz.~the element of $1$, as
in the examples in \autoref{figZippable}.
\begin{figure}
    \begin{subfigure}[b]{.25\textwidth}
        \(
        \begin{mytikzcd}[row sep = 0mm]
        a_1\,a_2\,b_1\,a_3\,b_2
        \arrow[shiftarr={xshift=18mm},mapsto]{d}[xshift=-4mm,pos=0.0,above]{\op{unzip}}
        \\
        \begin{array}{c}
        (a_1a_2\gap a_3 \gap, \\
        \phantom{(}\,\gap \gap\,b_1\!\gap b_2)
        \end{array}
        \end{mytikzcd}
        \)
        \caption{$(-)^*$ is zippable}
    \end{subfigure}%
\hfill%
\begin{subfigure}[b]{.25\textwidth}
        \(
        \begin{mytikzcd}[row sep = 0mm, ampersand replacement = \&]
        \{a_1,a_2,b_1\}
        \arrow[shiftarr={xshift=18mm},mapsto]{d}[xshift=-4mm,pos=0.0,above]{\op{unzip}}
        \\
        \begin{array}{r@{\,}l}
        (\{a_1,a_2,&\gap\}, \\
        \{\gap,&b_1\})
        \end{array}
        \end{mytikzcd}
        \)
        \caption{$\Potf$ is zippable}
    \end{subfigure}%
\hfill%
\begin{subfigure}[b]{.44\textwidth}
        \(
        \begin{mytikzcd}[row sep = 0mm,
                       column sep = -13mm,
                       ]
        |[inner xsep=0mm]|
        \begin{array}{@{}l@{}}
        \big\{\{a_1,b_1\},
        \{a_2,b_2\}\big\}
        \end{array}
        \arrow[start anchor={[xshift=-4mm]},
               rounded corners,
               pos=0.3,
               to path={ |- (\tikztotarget) \tikztonodes },
               mapsto]{dr}[left,]{\op{unzip}}
        &
        &
        |[inner xsep=0mm]|
        \begin{array}{@{}l@{}}
        \big\{\{a_1,b_2\},
        \{a_2,b_1\}\big\}
        \end{array}
        \arrow[start anchor={[xshift=4mm]},
               rounded corners,
               pos=0.3,
               to path={ |- (\tikztotarget) \tikztonodes },
               mapsto]{dl}[right,overlay]{\op{unzip}}
        \\
        &
        |[inner xsep = 1mm]|
        \begin{array}{@{}l@{}}
        (\big\{\{a_1,\gap\},\{a_2,\gap\}\big\}, \\
        \phantom{(}\big\{\{\gap, b_1\},\{\gap,b_2\}\big\}) \\
        \end{array}
        \end{mytikzcd}
        \)
        %\vspace{1mm}
        \caption{$\Potf\Potf$ is not zippable}
    \end{subfigure}
    \caption{Zippability of \Set-Functors for sets
    $A=\{a_1,a_2,a_3\}$, $B=\{b_1,b_2\}$}
    \vspace{-4mm}
    \label{figZippable}
\end{figure}

In the following, we work in the category $\C = \Set^{\Sorts}$ of
$\Sorts$-sorted sets. However, most proofs are category-theoretic to
clarify where sets are really needed and where the arguments are more
generic.

\begin{example}
\begin{enumerate}
\item Constant functors $X\mapsto A$ are zippable: $\op{unzip}$
  is the diagonal $A \to A \times A$.

\item The identity functor is zippable since
    \(
        A+B \xrightarrow{\fpair{A+!, !+B}}
        (A+1) × (1+B)
    \)
    is monic in $\Set^{\Sorts}$.

  \item From Lemma~\ref{lem:closure} it follows that every
    polynomial endofunctor is zippable.
\end{enumerate}
\end{example}
\begin{lemma}\label{lem:closure}
  Zippable endofunctors are closed under products, coproducts and
  subfunctors. 
\end{lemma}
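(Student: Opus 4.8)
The plan is to establish the three closure properties one at a time. In each case I express the $\op{unzip}$ morphism of the built functor in terms of the $\op{unzip}$ morphisms of its constituents, and then invoke three standard closure properties of monomorphisms: monos are closed under composition, products of monos are monic, and if $g\cdot f$ is monic then so is $f$. Here $H_1\times H_2$, $H_1+H_2$ and a subfunctor $S\monoto H$ are all taken pointwise, so $(H_1\times H_2)X=H_1X\times H_2X$, $(H_1+H_2)X=H_1X+H_2X$, and likewise on morphisms.

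For products, write $H=H_1\times H_2$. Since $(H_1\times H_2)f=H_1f\times H_2f$ for every $f$, unfolding \autoref{D:zip} shows that $\op{unzip}_{H_1\times H_2,A,B}$ is, up to the canonical isomorphism swapping the two middle factors of $(H_1(A+1)\times H_1(1+B))\times(H_2(A+1)\times H_2(1+B))$, exactly the product $\op{unzip}_{H_1,A,B}\times\op{unzip}_{H_2,A,B}$. As products of monomorphisms are monic and isomorphisms are monic, this composite is monic.

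For subfunctors, let $m\colon S\monoto H$ be a subfunctor of a zippable $H$, i.e.\ a natural transformation with monic components, and $H$ zippable. Naturality of $m$ with respect to $A+!$ and $!+B$ yields the commuting identity
\[
  (m_{A+1}\times m_{1+B})\cdot \op{unzip}_{S,A,B}
  = \op{unzip}_{H,A,B}\cdot m_{A+B}.
\]
The right-hand side is a composite of two monomorphisms ($m_{A+B}$ is monic and $\op{unzip}_{H}$ is monic by zippability of $H$), hence monic; therefore the left-hand side is monic, and left cancellation of monos gives that $\op{unzip}_{S,A,B}$ is monic.

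The main obstacle is the coproduct $H=H_1+H_2$, where products and coproducts must interact. Using that $(H_1+H_2)f=H_1f+H_2f$ preserves coproduct injections, one checks that $\op{unzip}_{H_1+H_2,A,B}$ factors as the coproduct $\op{unzip}_{H_1,A,B}+\op{unzip}_{H_2,A,B}$ followed by the canonical distributivity morphism
\[
  \delta\colon \bigl(H_1(A{+}1){\times} H_1(1{+}B)\bigr) + \bigl(H_2(A{+}1){\times} H_2(1{+}B)\bigr)
  \longrightarrow
  \bigl(H_1(A{+}1){+}H_2(A{+}1)\bigr) {\times} \bigl(H_1(1{+}B){+}H_2(1{+}B)\bigr)
\]
sending $\inl(x,y)\mapsto(\inl x,\inl y)$ and $\inr(x,y)\mapsto(\inr x,\inr y)$. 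It then suffices to see that both factors are monic: the first is a coproduct of the monomorphisms $\op{unzip}_{H_1}$ and $\op{unzip}_{H_2}$, while $\delta$ is monic because its left component already exposes the coproduct tag, so elements of different summands cannot collide and within a summand $\delta$ acts as the identity embedding. Both facts hold in $\Set^\Sorts$ — and, more generally, in any extensive category — which is precisely the point at which the set-like structure of the ambient category is genuinely used. Composing the two monos completes the coproduct case and hence the lemma.
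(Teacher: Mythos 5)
Your proposal is correct and follows essentially the same route as the paper: products via the canonical isomorphism identifying $\op{unzip}_{H_1\times H_2}$ with $\op{unzip}_{H_1}\times\op{unzip}_{H_2}$, coproducts via factoring through the coproduct of the unzips followed by the monic canonical morphism $(A_1\times B_1)+(A_2\times B_2)\monoto(A_1+A_2)\times(B_1+B_2)$, and subfunctors via the naturality square and left cancellation of monomorphisms. Your aside that the coproduct case is where set-like (extensive) structure enters also matches the paper's own remark on this point.
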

%
%Functors used for possibly weighted transition systems are also
%zippable:
%
\begin{lemma}\label{lem:additive}
  If $H$ has a componentwise monic natural transformation
  $H(X+Y) \rightarrowtail HX × HY$, then $H$ is zippable.
\end{lemma}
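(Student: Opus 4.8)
The plan is to exhibit $\op{unzip}_{H,A,B}$ as a \emph{right} factor of the assumed monomorphism and to invoke the standard fact that if a composite $g\cdot f$ is monic then its right factor $f$ is monic. Writing $\tau_{X,Y}: H(X+Y)\rightarrowtail HX\times HY$ for the given componentwise monic natural transformation, I would construct a morphism $r: H(A+1)\times H(1+B)\to HA\times HB$ such that $r\cdot\op{unzip}_{H,A,B}=\tau_{A,B}$. Since $\tau_{A,B}$ is monic by hypothesis, this equation immediately forces $\op{unzip}_{H,A,B}$ to be monic, which is exactly zippability.

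The retraction $r$ is obtained by instantiating $\tau$ with a terminal object in one argument. Concretely I set $r=\fpair{\out_1\cdot\tau_{A,1}\cdot\out_1,\ \out_2\cdot\tau_{1,B}\cdot\out_2}$, where the outer projections $\out_1,\out_2$ are those of $H(A+1)\times H(1+B)$, and the inner ones are the projections of $HA\times H1$ and $H1\times HB$, respectively. Thus the first component of $r\cdot\op{unzip}_{H,A,B}$ is $\out_1\cdot\tau_{A,1}\cdot H(A+!)$ and the second is $\out_2\cdot\tau_{1,B}\cdot H(!+B)$.

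The key step is then naturality of $\tau$. Applying naturality to the pair $(\id_A,!: B\to 1)$ gives $\tau_{A,1}\cdot H(A+!)=(\id_{HA}\times H!)\cdot\tau_{A,B}$; post-composing with the first projection and using $\out_1\cdot(\id_{HA}\times H!)=\out_1$ yields $\out_1\cdot\tau_{A,1}\cdot H(A+!)=\out_1\cdot\tau_{A,B}$, so the first component of $r\cdot\op{unzip}_{H,A,B}$ agrees with the first component of $\tau_{A,B}$. Symmetrically, naturality at $(!: A\to 1,\id_B)$ identifies the second components. By the universal property of the product $HA\times HB$ these two componentwise equalities combine to $r\cdot\op{unzip}_{H,A,B}=\tau_{A,B}$, finishing the proof.

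I expect the only delicate point to be the bookkeeping of the several overloaded projections and the precise naturality instances; the argument is otherwise purely diagrammatic. In particular it uses nothing specific about $\Set^{\Sorts}$ beyond the existence of finite products and coproducts, so it holds verbatim in the general categorical setting.
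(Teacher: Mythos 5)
Your proof is correct and is essentially the paper's argument: the paper forms the commuting square $(\alpha_{A,1}\times\alpha_{1,B})\cdot\op{unzip}=\fpair{HA\times H!,\,H!\times HB}\cdot\alpha_{A,B}$ via the same two naturality instances and notes the bottom map is split monic by $\pi_1\times\pi_2$, which after post-composition is precisely your identity $r\cdot\op{unzip}=\tau_{A,B}$. Both proofs then conclude by the same principle that a right factor of a monomorphism is a monomorphism.
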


\begin{example}
\begin{enumerate}
\item\label{item:monoid-zippable} For every commutative monoid, the
  monoid-valued functor $M^{(-)}$ admits a natural isomorphism
  $M^{(X+Y)} \cong M^{(X)} \times M^{(Y)}$, and hence is zippable by
  Lemma~\ref{lem:additive}.
\item As special cases of monoid-valued functors we obtain
  that the finite powerset functor $\Potf$ and the bag functor $\Bagf$
  are zippable.
\item The distribution functor $\Dist$ (see Example~\ref{ex:coalg}) is
  a subfunctor of the monoid-valued functor $M^{(-)}$ for $M$ the
  additive monoid of real numbers, and hence is zippable by
  Item~\ref{item:monoid-zippable} and Lemma~\ref{lem:closure}.
\item The previous examples together with the closure properties in
  Lemma~\ref{lem:closure} show that a number of functors of interest
  are zippable, e.g.~$2×(-)^\Inputs$,
  $2×\Pot(-)^\Inputs$, $\Pot(\Inputs×(-))$, $2×\big((-)+1\big)^\Inputs$,
  and variants where $\Pot$ is replaced with $\Bagf$, $M^{(-)}$, or
  $\Dist$.
\end{enumerate}
\end{example}

\begin{example}\label{non-zippable}
  The finitary functor $\Potf\Potf$ fails to be zippable, as shown in
  \autoref{figZippable}. First, this shows that zippable functors are not closed
  under quotients, since any finitary functor is a quotient of a polynomial,
  hence zippable, functor (recall that a $\Set$-functor $F$ is finitary if
  $FX=\bigcup\{Fi[Y]\mid i:Y\rightarrowtail X \text{ and }Y\text{ finite}\}$).
  Secondly, this shows that zippable functors are not closed under composition.
  One can extend the counterexample to a coalgebra to show that the optimization
  is incorrect for $\Potf\Potf$ and $\op{select} = \chi_S^C$. We will remedy
  this later by making use of a second sort, i.e.~by working in $\Set^2$
  (\autoref{sortedCoalgebra}).
\end{example}

\noindent Additionally, we will need to enforce constraints on the
\op{select} routine to arrive at the desired
optimization~\eqref{kernelOptimization}. This is because in general,
$\ker H\fpair{a,b}$ differs from $\ker\fpair{Ha,Hb}$ even for~$H$
zippable; e.g.\ for $H=\Pot$ and for $\pi_1,\pi_2$ denoting binary
product projections, $\fpair{\Pot\pi_1,\Pot\pi_2}$ in general fails to
be injective although $\Pot\fpair{\pi_1,\pi_2}=\Pot\id=\id$.

The next example illustrates this issue, and a related one: One might
be tempted to implement splitting by a subblock $S$ by $q_i = \chi_S$.
While this approach is sufficient for systems with real-valued
weights~\cite{ValmariF10}, it may in general let
$\ker (H\fpair{\bar q_i, q_{i+1}}\cdot \xi)$ and
$\ker \fpair{H\bar q_i\cdot \xi, Hq_{i+1}\cdot \xi}$ differ even for
zippable $H$, thus rendering the algorithm incomplete:

\begin{example}\label{expl:no-respect}
  Consider the coalgebra $\xi: X\to HX$ for the zippable functor
  $H=\{\blacktriangle, \mdblksquare, \smblkcircle\} × \Potf(-)$
  illustrated in \autoref{figSplitStep} (essentially a Kripke
  model). The initial partition $\unnicefrac{X}{P_0}$ splits by shape
  and by $\Potf!$, i.e.~states with and without successors are split
  (\autoref{figPartitionQ0}). Now, suppose that $\op{select}$ returns
  $k_1 := \id_{\unnicefrac{X}{P_0}}$, i.e.\ retains all information
  (cf.\ Remark~\ref{finalchain}), so that $Q_1 = P_0$ and $P_1$ puts
  $c_1$ and $c_2$ into different blocks (\autoref{figPartitionQ1}). We
  now analyse the next partition that arises when we split w.r.t.\ the
  subblock $S=\{c_1\}$ but not w.r.t.~the rest $C\setminus S$ of the
  compound block $C=\{c_1,c_2\}$; in other words, we take
  $k_2 := \chi_{\{\{c_1\}\}}: \unnicefrac{X}{P_1}\to 2$, making
  $q_2 = \chi_{\{c_1\}}: X\to 2$.  Then,
  $H\fpair{\bar q_1,q_2}\cdot \xi$ splits $t_1$ from $t_2$, because
  $t_1$ has a successor $c_2$ with $\bar q_1(c_2) = \{c_1,c_2\}$ and $q_2(c_2)
  = 0$ whereas $t_2$ has no such successor. However,
  $t_1,t_2$ fail to be split by $\fpair{H\bar q_1, Hq_2}\cdot \xi$
  because their successors do not differ when looking at successor
  blocks in $\unnicefrac{X}{Q_1}$ and $\unnicefrac{X}{\ker \chi_S}$
  separately: both have $\{c_1,c_2\}$ and $\{c_3\}$ as successor
  blocks in $\unnicefrac{X}{Q_1}$ and $\{c_1\}$ and
  $X\setminus\{c_1\}$ as successors in
  $\unnicefrac{X}{\ker\chi_S}$. Formally:
    \begin{align*}
        H\bar q_1\cdot \xi (t_1) &= 
        (\id×\Potf\kappa_{P_0})\cdot \xi (t_1) = 
        (\blacktriangle,\big\{
            \{c_1,c_2\},
            \{c_3\}
        \big\})
        = H\bar q_1\cdot \xi (t_2)
    \\
        H q_2\cdot \xi (t_1) &= 
        (\id× \Potf\chi_{\{c_1\}})\cdot \xi (t_1) = 
        (\blacktriangle,\big\{
            0, 1
        \big\})
        = H q_2\cdot \xi (t_2)
    \end{align*}
    So if we computed $P_2$ iteratively as in
    \eqref{kernelOptimization} for $q_2=\chi_S$, then $t_1$ and $t_2$
    would not be split, and we would reach the termination condition
    $P_2 = Q_2$ before all behaviourally inequivalent states have been
    separated.

    Already Paige and Tarjan \cite[Step~6 of the
    Algorithm]{PaigeTarjan87} note that one additionally needs to
    split by $C\setminus S=\{c_3\}$, which is accomplished by
    splitting by $q_i=\chi_S^C$. This is formally captured by
    the condition we introduce next.

\begin{figure}
    \tikzstyle{coalgebraNodes}=[
            every node/.style={
                draw=none,
                inner sep = 1pt,
                label distance=0mm,
            },
            every label/.append style={
                font=\small,
                execute at begin node=\(,
                execute at end node=\),
                inner sep=1pt,
                fill=white,
            },
    ]
    \tikzstyle{partitionPi}=[
        on background layer,
        every node/.style={
            shape=rectangle,
            rounded corners=2.5mm,
            minimum height=5mm,
            minimum width=5mm,
            draw=lipicsYellow,
            line width=1pt,
            inner sep = 1mm,
        },
    ]
    \tikzstyle{partitionQi}=[
        partitionPi,
        every node/.append style={
            dashed,
            inner sep = 2mm,
        },
    ]
    \begin{minipage}[b]{.48\textwidth}
    \newcommand{\exampleCoalgebra}{
    \begin{scope}[coalgebraNodes]
        \node[label={north:t_1}] (triangle1) at (2,1) {$\blacktriangle$};
        \node[label={north:t_2}] (triangle2) at (3,1) {$\blacktriangle$};
        \node[label={north:s_1}] (square1) at (1,1) {$\mdblksquare$};
        \node[label={south:c_1}] (circle1) at (1,0) {$\smblkcircle$};
        \node[label={south:c_2}] (circle2) at (2,0) {$\smblkcircle$};
        \node[label={south:c_3}] (circle3) at (3,0) {$\smblkcircle$};
    \end{scope}
    \path[->]
        (triangle1) edge (circle1)
        (triangle1) edge (circle2)
        (triangle1) edge (circle3)
        (triangle2) edge (circle1)
        (triangle2) edge (circle3)
        (circle1) edge (square1)
        (circle2) edge (circle3)
        ;
    }
    % first step
    \begin{subfigure}{.48\textwidth}
    \begin{tikzpicture}
        \exampleCoalgebra
        % equivalence classes
        \begin{scope}[partitionQi ]
            \node[fit = (square1) (circle1) (circle2) (circle3)
                        (triangle2) (triangle1)] {};
        \end{scope}
        \begin{scope}[partitionPi ]
            \node[fit = (square1)] {};
            \node[fit = (circle1) (circle2)] {};
            \node[fit = (circle3)] {};
            \node[fit = (triangle2) (triangle1)] {};
        \end{scope}
    \end{tikzpicture}
    \caption{$Q_0, P_0$ for $\bar q_0 = \mathbin{!}$}
    \label{figPartitionQ0}
    \end{subfigure}%
\hfill\begin{subfigure}{.49\textwidth}
    \begin{tikzpicture}
        \exampleCoalgebra
        % equivalence classes
        \begin{scope}[ partitionQi ]
            \node[fit = (square1)] {};
            \node[fit = (circle1) (circle2)] {};
            \node[fit = (circle3)] {};
            \node[fit = (triangle2) (triangle1)] {};
        \end{scope}
        \begin{scope}[ partitionPi ]
            \node[fit = (square1)] {};
            \node[fit = (circle2)] {};
            \node[fit = (circle1)] {};
            \node[fit = (circle3)] {};
            \node[fit = (triangle2) (triangle1)] {};
        \end{scope}
    \end{tikzpicture}
    \caption{$Q_1, P_1$ for $\bar q_1 =
    \kappa_{P_0}$}\noshowkeys\label{figPartitionQ1}
    \end{subfigure}
    \caption{Partitions of a coalgebra $\xi$ for $H=\{\blacktriangle,
    \mdblksquare, \smblkcircle\} × \Potf(-)$. $\unnicefrac{X}{Q_i}$ is
    indicated by dashed, $\unnicefrac{X}{P_i}$ by solid lines.}
    \label{figSplitStep}
    \end{minipage}%
\hfill\begin{minipage}[b]{.45\textwidth}
    \begin{tikzpicture}[
            elementX/.style={
                execute at begin node=\(,
                execute at end node=\),
            },
            x=7mm,
            y=8mm,
            baseline=-5mm, % raise the tikzpicture a bit
        ]
        \foreach \prefix/\yshift in {top/2,middle/1,bottom/0} {
        \node[elementX] (\prefix circle1) at (0,\yshift) {c_1};
        \node[elementX] (\prefix circle2) at (1,\yshift) {c_2};
        \node[elementX] (\prefix circle3) at (2,\yshift) {c_3};
        \node[elementX] (\prefix square1) at (3,\yshift) {s_1};
        \node[elementX] (\prefix triangle1) at (4,\yshift) {t_1};
        \node[elementX] (\prefix triangle2) at (5,\yshift) {t_2};
        }
        \begin{scope}[
                partitionPi,
                every node/.append style={inner sep=0mm},
                ]
            \node[fit = (topsquare1)] {};
            \node[fit = (topcircle1) (topcircle2)] {};
            \node[fit = (topcircle3)] {};
            \node[fit = (toptriangle2) (toptriangle1)] (XQ1) {};
        \end{scope}
        \begin{scope}[
                partitionPi,
                every node/.append style={inner sep=0mm},
                ]
            \node[fit =  (middlecircle1)] {};
            \node[fit = (middlesquare1)
                        (middlecircle2) (middlecircle3) (middletriangle2)
                        (middletriangle1)] (chiS) {};
        \end{scope}
        \begin{scope}[
                partitionPi,
                every node/.append style={inner sep=0mm},
                ]
            \node[fit =  (bottomcircle1)] {};
            \node[fit = (bottomcircle2) ] {};
            \node[fit = (bottomcircle3) (bottomsquare1) (bottomtriangle2)
                        (bottomtriangle1)]  (chiSC){};
        \end{scope}
        \begin{scope}[every node/.style={
                inner xsep=0cm,
                anchor=west,
                xshift=1mm
            }]
        \node at (chiS.east) {$\unnicefrac{X}{\ker \chi_S}$};
        \node at (chiSC.east) {$\unnicefrac{X}{\ker \chi_S^C}$};
        \node at (XQ1.east) {$\unnicefrac{X}{Q_1} = \unnicefrac{X}{P_0}$};
        \end{scope}
    \end{tikzpicture}
    \caption{Grouping of elements when $S:=\{c_1\}$ is chosen as the
      next subblock and $C := \{c_1,c_2\}$ as the compound block.  }
    \noshowkeys\label{groupingExamples}
    \end{minipage}
\end{figure}
\end{example}

\begin{definition} \label{defRespectCompounds}
  A \op{select} routine \emph{respects compound blocks} if whenever
  $k= \op{select}(X\overset{y}{\twoheadrightarrow} Y
  \overset{z}{\twoheadrightarrow} Z)$
  then the union $\ker k\cup \ker z$ is a kernel.
\end{definition}
% In categorical terms, $\ker k\cup \ker z$ is a union of subobjects
% of $Y\times Y$, constructed, e.g., as the image of
% $\ker k+ \ker z \to Y×Y$.
In $\Set^{\Sorts}$, $\cup$ denotes the usual union of multi-sorted relations;
and since reflexive and symmetric relations are closed under unions,
the definition boils down to $\ker k\cup\ker z$ being transitive. We
can rephrase the condition more explicitly, restricting to the
single-sorted case for readability:

\begin{lemma} \label{compoundBlockEquivalences}
    For $a: D\to A$, $b: D\to B$ in \Set, the following are
    equivalent:
    \begin{enumerate}
    \item $\ker a\cup \ker b\rightrightarrows D$ is a kernel.
    \item $\ker a\cup \ker b\rightrightarrows D$ is the kernel of
    the pushout of $a$ and $b$.
    \item For all $x,y,z \in D$, $ax =ay$ and $by=bz$ implies
    $ax = ay = az$ or $bx=by=bz$.
    \item For all $x\in D$, $[x]_a \subseteq [x]_b$ or  $[x]_b
    \subseteq [x]_a$.
    \label{classInclusion}
    \end{enumerate}
\end{lemma}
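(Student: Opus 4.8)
The plan is to abbreviate $E:=\ker a\cup\ker b$, viewed as a relation (jointly monic span) $E\rightrightarrows D$. As $\ker a$ and $\ker b$ are reflexive and symmetric, so is their union $E$; and since in \Set\ a relation is a kernel exactly when it is an equivalence relation, condition (1) is equivalent to $E$ being \emph{transitive}. I will use (1) as a hub, establishing (1)$\Leftrightarrow$(3), (3)$\Leftrightarrow$(4) and (1)$\Leftrightarrow$(2) separately.

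For (1)$\Leftrightarrow$(3) I first note that transitivity of $E$ has content only in the \emph{mixed} case: if both steps come from $\ker a$ the conclusion already lies in $\ker a$, similarly for $\ker b$, and the two genuinely mixed cases are interchanged by the symmetry of $E$. Hence $E$ is transitive iff for all $x,y,z$ with $ax=ay$ and $by=bz$ one has $ax=az$ or $bx=bz$. Under the hypothesis $ax=ay$ the disjunct $ax=az$ coincides with $ax=ay=az$, and under $by=bz$ the disjunct $bx=bz$ coincides with $bx=by=bz$, so this is exactly (3); the equivalence is thus a pure reformulation.

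For (3)$\Leftrightarrow$(4) I argue by a short element chase. Assuming (4), take $x,y,z$ with $ax=ay$ and $by=bz$ and apply the dichotomy at $y$: if $[y]_a\subseteq[y]_b$ then $x\in[y]_a$ gives $bx=by=bz$, and if $[y]_b\subseteq[y]_a$ then $z\in[y]_b$ gives $ax=ay=az$, which is (3). Conversely, if (4) fails at some $x$, choose $y\in[x]_a\setminus[x]_b$ and $z\in[x]_b\setminus[x]_a$; then the triple $(y,x,z)$ satisfies $ay=ax$ and $bx=bz$ but has $ay\neq az$ and $by\neq bz$, so both disjuncts of (3) fail. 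Hence (3) and (4) are equivalent.

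For (1)$\Leftrightarrow$(2), let $A\xrightarrow{p_A}P\xleftarrow{p_B}B$ be the pushout of $A\xleftarrow{a}D\xrightarrow{b}B$ and put $c:=p_A\cdot a=p_B\cdot b$. Since $\ker c$ is the least equivalence relation containing $E$ (the pushout identifies exactly the $E$-generated classes), (1) --- which says $E$ is already an equivalence relation --- is equivalent to $E=\ker c$, i.e.\ (2). Concretely, (2)$\Rightarrow$(1) holds because $\ker c$ is a kernel, and $E\subseteq\ker c$ always holds as $\ker a,\ker b\subseteq\ker c$; for the reverse inclusion under (1) I quotient $q:D\twoheadrightarrow\unnicefrac{D}{E}$, build a cocone by $f(a(d)):=[d]_E$ on $\Im a$ and $g(b(d)):=[d]_E$ on $\Im b$ (well defined since $\ker a,\ker b\subseteq E$), and obtain $u:P\to\unnicefrac{D}{E}$ with $u\cdot c=q$ from the universal property, whence $\ker c\subseteq\ker q=E$. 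The step I expect to need the most care is exactly this cocone construction: as $a$ and $b$ need not be surjective, $f$ and $g$ must be extended by arbitrary values off $\Im a$ and $\Im b$, which is harmless since the pushout constrains them only on those images (and the degenerate case $D=\emptyset$ makes $\ker c\subseteq E$ hold vacuously).
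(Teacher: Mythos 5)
Your proof is correct. It differs from the paper's mainly in its logical organization: the paper proves the cycle (4)$\Rightarrow$(1)$\Rightarrow$(2)$\Rightarrow$(3)$\Rightarrow$(4), whereas you use (1) as a hub and establish three separate biconditionals. The mathematical content largely coincides: your dichotomy-at-$y$ argument for (4)$\Rightarrow$(3) is exactly the paper's argument for (4)$\Rightarrow$(1), and your cocone construction for (1)$\Rightarrow$(2) is the set-level version of the paper's categorical mediating-morphism argument -- the paper sidesteps your extension-off-the-image issue by reducing w.l.o.g.\ to the epi parts of $a$ and $b$ (using that monos are stable under pushout in \Set), while your arbitrary extension together with the $D=\emptyset$ remark is an equally valid fix. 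The one genuinely different piece is your direct proof of (1)$\Leftrightarrow$(3) by case analysis on which of the two kernels each of the two composable pairs comes from; the paper instead obtains (3) from (2) by chasing elements through the pushout. Your routing is arguably more elementary, since it exhibits (3) as a literal restatement of transitivity of $\ker a\cup\ker b$ and confines all use of the pushout to the single equivalence (1)$\Leftrightarrow$(2).
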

The last item states that when going from $a$-equivalence classes to
$b$-equivalence classes, the classes either merge or split, but do not
merge with other classes and split at the same time. Note that in
\autoref{groupingExamples}, $Q_1\cup \ker \chi_S$ fails to be
transitive, % , because it contains $(c_1,c_2)$ and $(c_2,c_3)$ but not
% $(c_1,c_3)$. In contrast
while $Q_1\cup \ker \chi_S^C$ is transitive.

\begin{example} \label{examplesRespectCompoundBlocks}
All $\op{select}\big(
%\tikzexternaldisable
X\!\overset{y}{\twoheadrightarrow}\!Y\!
    \overset{z}{\twoheadrightarrow}\!Z
%\tikzexternalenable
\big)
$ routines in \autoref{exampleSelects} respect compound blocks.
\end{example}

\begin{proposition} \label{propZippable} Let $a: D\to A$, $b: D\to B$
  be a span such that $\ker a\cup \ker b$ is a kernel, and let
  $H: \Set\to \D$ be a zippable functor preserving monos. Then we have
    \begin{equation}
        \ker \fpair{Ha,Hb} = \ker H\fpair{a,b}.
        \label{eqKerFpair}
    \end{equation}
\end{proposition}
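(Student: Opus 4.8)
The inclusion $\ker H\fpair{a,b}\subseteq\ker\fpair{Ha,Hb}$ is immediate and needs neither zippability nor the hypothesis on $\ker a\cup\ker b$: since $a=\out_1\cdot\fpair{a,b}$ and $b=\out_2\cdot\fpair{a,b}$, functoriality gives $Ha=H\out_1\cdot H\fpair{a,b}$ and $Hb=H\out_2\cdot H\fpair{a,b}$, so any parallel pair identified by $H\fpair{a,b}$ is identified by both $Ha$ and $Hb$. The plan is therefore to prove the reverse inclusion $\ker\fpair{Ha,Hb}\subseteq\ker H\fpair{a,b}$, and the guiding idea is to replace the \emph{product} $A\times B$ by a \emph{coproduct}, so that the hypothesis on $H$ supplied by \autoref{D:zip} can be applied.

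First I would form the pushout $A\xrightarrow{p_A}P\xleftarrow{p_B}B$ of $a$ and $b$ in $\Set$ and set $g:=p_A\cdot a=p_B\cdot b$. By \autoref{compoundBlockEquivalences} the assumption that $\ker a\cup\ker b$ is a kernel means precisely that $\ker g=\ker a\cup\ker b$. Since $p_A\cdot a=p_B\cdot b$, the map $\fpair{a,b}$ factors as $D\xrightarrow{\fpair{a,b}'}A\times_P B\xrightarrow{m}A\times B$ through the pullback $A\times_P B$ of $p_A$ and $p_B$, where $m$ is the canonical mono $A\times_P B\monoto A\times B$. As $H$ preserves monos, $Hm$ is monic, whence $\ker H\fpair{a,b}=\ker H\fpair{a,b}'$; it thus suffices to compare $\ker\fpair{Ha,Hb}$ with $\ker H\fpair{a,b}'$.

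The step where sets are genuinely needed, and the one I expect to be the main obstacle, is the identification of $A\times_P B$ with a coproduct. Each fibre $D_w:=g^{-1}(w)$ is a single $\ker g$-class, so by transitivity of $\ker a\cup\ker b$ any two $x,y\in D_w$ already satisfy $ax=ay$ or $bx=by$; a short case analysis then shows that in each fibre \emph{either $a$ or $b$ is constant} (two distinct $a$-values together with two distinct $b$-values inside one class would contradict this transitivity). Grouping the fibres by which side is constant (breaking ties arbitrarily) yields subsets $A'\subseteq A$ and $B'\subseteq B$ of the respective non-constant sides together with an isomorphism $\theta\colon A\times_P B\xrightarrow{\ \cong\ }A'+B'$, under which $\fpair{a,b}'$ becomes a map $u:=\theta\cdot\fpair{a,b}'\colon D\to A'+B'$. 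Writing $\bar a\colon A\to A'+1$ and $\bar b\colon B\to 1+B'$ for the maps that restrict to the identity on $A'$, resp.\ $B'$, and collapse the complementary (constant) summand onto the point of $1$, one then verifies the two morphism identities
\[
  (A'+!)\cdot u=\bar a\cdot a,\qquad (!+B')\cdot u=\bar b\cdot b.
\]

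Finally I would feed these identities into zippability. Applying $H$ and recalling $\op{unzip}_{H,A',B'}=\fpair{H(A'+!),H(!+B')}$, the identities yield
\[
  \op{unzip}_{H,A',B'}\cdot Hu=\fpair{H\bar a\cdot Ha,\ H\bar b\cdot Hb}=(H\bar a\times H\bar b)\cdot\fpair{Ha,Hb}.
\]
Consequently $\ker\fpair{Ha,Hb}\subseteq\ker(\op{unzip}_{H,A',B'}\cdot Hu)$, and since $H$ is zippable the morphism $\op{unzip}_{H,A',B'}$ is monic, so this right-hand kernel equals $\ker Hu$. Because $Hu$ differs from $H\fpair{a,b}'$ only by the isomorphism $H\theta$, and $H\fpair{a,b}'$ from $H\fpair{a,b}$ only by the mono $Hm$, we get $\ker Hu=\ker H\fpair{a,b}$, closing the remaining inclusion and establishing \eqref{eqKerFpair}.
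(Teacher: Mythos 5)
Your proof is correct and follows essentially the same route as the paper's: both arguments hinge on decomposing $D$ (equivalently, the image of $\fpair{a,b}$) as a coproduct $A'+B'$ according to whether $a$ or $b$ is constant on each block of $\ker a\cup\ker b$ (which is exactly condition~\ref{classInclusion} of \autoref{compoundBlockEquivalences}), and then applying zippability to that coproduct. The paper builds the decomposition directly as $D\cong D_A+D_B$ with transport maps $c_A\colon B'\to A\setminus A'$ and $c_B\colon A'\to B\setminus B'$ and invokes a mildly strengthened form of $\op{unzip}$, whereas you reach the same map $D\to A'+B'$ via the pushout--pullback detour and use the plain $\op{unzip}$ with the collapse maps $\bar a,\bar b$; these are presentational variants of the same idea.
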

\noindent We thus obtain soundness of optimization
\eqref{kernelOptimization}; summing up:
\begin{corollary}\label{optimizationSummary}
  Suppose that $H$ is a zippable endofunctor on $\Set$ and that
  $\op{select}$ respects compound blocks and never discards all new
  information. Then \autoref{catPT} with
  optimization~\eqref{kernelOptimization} terminates and computes the
  simple quotient of a given finite $H$-coalgebra.
\end{corollary}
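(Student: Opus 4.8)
The plan is to prove that the optimized algorithm produces \emph{exactly} the same partitions $Q_i$ and $P_i$ as the unoptimized \autoref{catPT}, and then to inherit termination and correctness from the results already established for \autoref{catPT} (in particular the corollary just after \autoref{noprogress}). Since the optimization \eqref{kernelOptimization} alters only Step~\ref{step:P}, leaving the \op{select} call and the computation of $q_{i+1}$, $\bar q_{i+1}$ and $Q_{i+1}$ untouched, I would set up an induction on $i$: assuming $\bar q_i$, $P_i$ and $Q_i$ coincide in both runs, the maps $k_{i+1}$, $q_{i+1}$, $\bar q_{i+1}$ and $Q_{i+1}$ coincide as well, so the entire claim reduces to the single identity $P_{i+1}' = P_{i+1}$.

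For that identity I would unfold both sides. By Step~\ref{step:P} we have $P_{i+1} = \ker\big(H\fpair{\bar q_i, q_{i+1}}\cdot\xi\big)$, whereas $P_{i+1}' = \ker\fpair{H\bar q_i\cdot\xi, Hq_{i+1}\cdot\xi}$. Using $\fpair{H\bar q_i, Hq_{i+1}}\cdot\xi = \fpair{H\bar q_i\cdot\xi, Hq_{i+1}\cdot\xi}$ together with the elementary fact that, in \Set, the kernel of a composite $f\cdot\xi$ depends only on $\ker f$ (so that $\ker f = \ker g$ forces $\ker(f\cdot\xi) = \ker(g\cdot\xi)$), it suffices to establish $\ker\fpair{H\bar q_i, Hq_{i+1}} = \ker H\fpair{\bar q_i, q_{i+1}}$. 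This is exactly the conclusion of \autoref{propZippable} for the span $a = \bar q_i$, $b = q_{i+1}$ and the zippable, mono-preserving functor $H$ -- \emph{provided} its hypothesis is met, namely that $\ker\bar q_i\cup\ker q_{i+1}$ is a kernel.

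Discharging that hypothesis is the crux and, I expect, the main obstacle, because respecting compound blocks is a condition on the subblocks $Y = \unnicefrac{X}{P_i}$ rather than on $X$. The key is that both maps factor through $\kappa_{P_i}: X\twoheadrightarrow \unnicefrac{X}{P_i}$: indeed $q_{i+1} = k_{i+1}\cdot\kappa_{P_i}$ by Step~2, while $\ker\bar q_i = Q_i$ gives $\bar q_i$ the same kernel as $\kappa_{Q_i} = z\cdot\kappa_{P_i}$, where $z: \unnicefrac{X}{P_i}\twoheadrightarrow\unnicefrac{X}{Q_i}$ is the comparison epi passed to \op{select}. Respecting compound blocks says $\ker k_{i+1}\cup\ker z$ is a kernel, which by item~\ref{classInclusion} of \autoref{compoundBlockEquivalences} means that at every $\bar y\in Y$ the classes $[\bar y]_{k_{i+1}}$ and $[\bar y]_z$ are nested. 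I would then pull these back along $\kappa_{P_i}$: for $x\in X$ one has $[x]_{q_{i+1}} = \kappa_{P_i}^{-1}\big[[\kappa_{P_i}x]_{k_{i+1}}\big]$ and $[x]_{\bar q_i} = \kappa_{P_i}^{-1}\big[[\kappa_{P_i}x]_{z}\big]$, and since $\kappa_{P_i}$ is surjective, taking preimages both preserves inclusions and (as $\kappa_{P_i}x$ ranges over all of $Y$) transfers the nesting to every $x\in X$. By item~\ref{classInclusion} of \autoref{compoundBlockEquivalences} read in the other direction, $\ker\bar q_i\cup\ker q_{i+1}$ is a kernel, as required.

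Having secured $P_{i+1}' = P_{i+1}$ for all $i$, the induction closes: the optimized run and \autoref{catPT} agree at every iteration and hence reach the termination test $P_i = Q_i$ at the same index. Termination then follows from finiteness of $X$ and the strict refinement guaranteed by \op{select} never discarding all new information (\autoref{noprogress}), and correctness from \autoref{correctness}; together these show the optimized algorithm computes the simple quotient of the given finite $H$-coalgebra.
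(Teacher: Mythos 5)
Your proposal is correct, and it assembles the same two ingredients as the paper -- \autoref{propZippable} and the fact that equal kernels remain equal after precomposition (the paper isolates this as \autoref{kernelComposition}) -- but it applies them in the opposite order, which changes where the work lands. The paper's proof (via \autoref{optimizationCorrectness}) invokes \autoref{propZippable} \emph{upstairs}, on the span $f_i,k_{i+1}$ with domain $\unnicefrac{X}{P_i}$: there the hypothesis that $\ker f_i\cup\ker k_{i+1}$ is a kernel is literally the definition of ``respects compound blocks'', so nothing needs to be transferred; one then precomposes with $H\kappa_{P_i}\cdot\xi$ and absorbs the monos relating $\bar q_i$ to $\kappa_{Q_i}$. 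You instead invoke \autoref{propZippable} \emph{downstairs}, on $\bar q_i, q_{i+1}$ with domain $X$, and so you must first show that $\ker\bar q_i\cup\ker q_{i+1}$ is a kernel; your transfer argument -- both maps factor through $\kappa_{P_i}$, equivalence classes in $X$ are preimages of classes in $\unnicefrac{X}{P_i}$, and preimages preserve the nesting in \autoref{compoundBlockEquivalences}.\ref{classInclusion} -- is correct and discharges this. The paper's ordering buys a shorter proof by making the compound-block hypothesis available for free; your ordering costs one extra (elementary) lemma but keeps the whole argument at the level of the original carrier $X$, which arguably makes the role of the compound-block condition more transparent. The surrounding induction showing that the optimized and unoptimized runs produce identical $Q_i$ and $P_i$, and the appeal to \autoref{noprogress} and \autoref{correctness} for termination and correctness, match the paper's intent exactly.
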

% \begin{proposition}
%   \label{C:sorted}
%     For an $\Sorts$-indexed family of zippable \Set-functors $H_s$ and a permutation
%     $\pi:\Sorts\to \Sorts$, define a functor on $\Set^{\Sorts}$ by $(H^\pi X)_s =
%     H_{s} X_{\pi(s)}$. Then for any span $a: D\to A$, $b: D\to B$,
%     where $\ker a\cup \ker b$ is a kernel, we have
%     \(
%         \ker \fpair{H^\pi a,H^\pi b} = \ker H^\pi \fpair{a,b}.
%     \)
% \end{proposition}

\begin{remark} \label{sortedCoalgebra} Like most results on set
  coalgebras, the above extends to multisorted sets by componentwise
  arguments, and this allows dealing with complex composite
  functors~\cite{SchroderPattinson11}. We restrict to a case with
  lightweight notation: Let $F$ and $G$ be zippable \Set-functors,
  recalling from \autoref{non-zippable} that the composite $FG$ need
  not itself be zippable, and let $F$ be finitary.
  Then in lieu of $FG$-coalgebras, we can equivalently consider
  coalgebras for the endofunctor \( H:(X,Y) \mapsto (FY,GX) \)
  on $\Set^2$. In particular, a coalgebra $\xi: X\to FGX$ with finite
  carrier~$X$ induces, since~$F$ is finitary, a finite set
  $Y\subseteq_\mathrm{f} GX$, with inclusion~$y$, such that
  %\( \xi =
  %  \begin{mytikzcd}
  %      X
  %      %\arrow[shiftarr={yshift=6mm}]{rr}{\xi}
  %      \arrow{r}{x}
  %      & FY \arrow{r}{Fy}
  %      & FGX
  %  \end{mytikzcd}
  %  \),
  $\xi = (X \xrightarrow{x} FY \xrightarrow{Fy} FGX)$, so we obtain a
  finite $H$-coalgebra $(x,y): (X,Y) \to (FY,GX)$.  Mutatis mutandis,
  \autoref{propZippable} holds also for $H$, since kernels and pairs
  in $\Set^2$ are computed componentwise, so we obtain a version of
  \autoref{optimizationSummary} for~$H$. Explicitly, when computing
  the kernel of
  $H\bar q_{i+1} \cdot (x, y)$, we can use
  the optimization \eqref{kernelOptimization} in both sorts. The first
  component of the simple quotient of $((X,Y),(x,y))$ computed by the
  algorithm then yields the simple quotient of the
  original~$(X,\xi)$. Composites of more than two functors are treated
  similarly.
    %Of course, any $FG$-coalgebra can be
    %decomposed trivially using the identity in the second component; 
    %and indeed
\end{remark}
\begin{example}\label{segala-twosorted}
  Applying this to the functors $F = \Potf$, $G=A\times(-)$, and
  $H = \Dist$, we obtain simple (resp.~general) Segala systems as
  coalgebras for $FGH$ (resp.~$FHG$). For simple Segala systems, the
  $\Set^3$ functor is defined by $(X,Y,Z) \mapsto (FY,GZ,HX)$.
\end{example}

\section{Efficient Calculation of Kernels}
\label{sec:efficient}
In \autoref{catPT}, it is left unspecified how the kernels are
computed concretely. We proceed to define a more concrete algorithm
based on a \emph{refinement interface} of the functor. This interface
is aimed at efficient implementation of the \emph{refinement step} in
the algorithm. Specifically from now on, we split
along $\xi\!: X\to HY$
w.r.t.\ a subblock $S\subseteq C\in\unnicefrac{Y}{Q}$, and need to
compute how the splitting of $C$ into $S$ and $C\setminus S$ within
$\unnicefrac{Y}{Q}$ affects the partition $\unnicefrac{X}{P}$.

The low complexity of Paige-Tarjan-style algorithms hinges on this
refinement step running in time $\CO(|\op{pred}[S]|)$, where
$\op{pred}(y)$ denotes the set of predecessors of some $y\in Y$ in the
given transition system. In order to speak about ``predecessors''
w.r.t.~more general $\xi: X\to HY$, the refinement interface will
provide an encoding of $H$-coalgebras as sets of states with successor
states encoded as bags (implemented as lists up to ordering; recall
that $\Bagf Z$ denotes the set of bags over $Z$) of
$A$-labelled edges, where $A$ is an appropriate label alphabet. Moreover, the
interface will allow us to talk about the behaviour of elements of $X$
w.r.t.~the splitting of $C$ into $S$ and $C\setminus S$, looking only
at points in $S$.

\begin{definition} \label{refinementInterface}
    A \emph{refinement interface} for a $\Set$-functor $H$ is formed
    by a set $A$ of \emph{labels}, a set $W$ of \emph{weights} and functions
    \begin{align*}
    \begin{array}{l@{\qquad}l}
        \flat: HY \to \Bagf(A×Y),& \op{init}: H1×\Bagf A\to W,\\
        w: \Potf Y \to HY \to W,& \op{update}: \Bagf A × W \to W× H3× W
    \end{array}
    \end{align*}
    such that for every
    $S\subseteq C\subseteq Y$, the diagrams
    \vspace{-2mm}
    \begin{equation}
    \label{eqSplitterLabels}
    \hspace{-4mm}
    \begin{mytikzcd}
        HY
        \arrow{d}[left, pos=0.4]{\begin{array}{r}
            \fpair{
            H!,
            \\[-1mm]
            \Bagf\pi_1\cdot \flat}
            \end{array}}
        \arrow[shift left=2]{dr}{w(Y)}
        \\
        H1 × \Bagf A
            \arrow{r}{\op{init}}
        & W
    \end{mytikzcd}
    \hspace{1mm}
    \begin{mytikzcd}[column sep=1.4cm]
    HY
    \arrow[
        to path={
            -| (\tikztotarget) \tikztonodes
        },
        rounded corners,
    ]{drr}[pos=0.23,below,yshift=-1mm]{\fpair{w(S),H\chi_S^C, w(C\setminus S)}}
    \arrow{d}[left]{\fpair{\flat,w(C)}}
    \\
    \Bagf(A×Y) × W
    \arrow{r}{\op{fil}_S × W}
    & \Bagf A × W
    \arrow{r}{\op{update}}
    & W × H3 × W
    \end{mytikzcd}
    \hspace{-3mm}
    \end{equation}
\end{definition}
\noindent
commute, where \( \op{fil}_S: \Bagf(A×Y) \to \Bagf(A)\)
is the filter function
\(\op{fil}_S (f)(a) = \textstyle\sum_{y \in S} f(a,y)\)
for $S\subseteq Y$.  The significance of the set $H3$ is that when
using a set $S\subseteq C\subseteq X$ as a splitter, we want to split
every block $B$ in such a way that it becomes \emph{compatible} with
$S$ and $C\setminus S$, i.e.~we group the elements $s\in B$ by the
value of $H\chi_{S}^C\cdot\xi (s) \in H3$. The set $W$ depends on the
functor. But in most cases $W=H2$ and $w(C) = H \chi_C: HY \to H2$ are
sufficient.

In an implementation, we do not require a refinement interface to
provide $w$ explicitly, because the algorithm will compute the values
of $w$ incrementally using \eqref{eqSplitterLabels}, and $\flat$ need
not be implemented because we assume the input coalgebra to be already
\emph{encoded} via $\flat$:
\begin{definition}
  Given an interface of $H$ (\autoref{refinementInterface}),
  an \emph{encoding} of a morphism $\xi: X\to HY$ is given by a set
  $E$ and maps
    \begin{align*}
    \op{graph}&: E \to X×A×Y &
    \op{type}&: X \to H1
    \end{align*}
    such that $(\flat\cdot \xi(x))(a,y) = |\{e\in E\mid \op{graph}(e) =
    (x,a,y)\}|$, and with $\op{type} = H!\cdot
    \xi$.
\end{definition}
\noindent 
Intuitively, an encoding presents the morphism $\xi$ as a graph with
edge labels from $A$.

\begin{lemma} \label{encodingCanonical}
  Every morphism $\xi: X\to HY$ has a canonical encoding where $E$ is
  the obvious set of edges of $\flat\cdot\xi: X\to \Bagf(A×Y)$.
  If $X$ is finite, then so is $E$.
\end{lemma}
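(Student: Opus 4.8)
The plan is to read off the encoding directly from the multiset $\flat\cdot\xi$, so that both conditions defining an encoding hold essentially by construction. Recall that a bag in $\Bagf(A\times Y)$ is a finitely supported function $A\times Y\to\N$ recording, for each pair $(a,y)$, its multiplicity. First I would define the edge set as the ``disjoint union of the multiplicities'',
\[
  E = \{(x,a,y,j)\mid x\in X,\ (a,y)\in A\times Y,\ 1\le j\le (\flat\cdot\xi(x))(a,y)\},
\]
with $\op{graph}(x,a,y,j)=(x,a,y)$ the projection forgetting the index $j$, and $\op{type} = H!\cdot\xi$.

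Verifying the two encoding conditions is then immediate. The fibre of $\op{graph}$ over a triple $(x,a,y)$ consists exactly of the tuples $(x,a,y,j)$ with $1\le j\le(\flat\cdot\xi(x))(a,y)$, so its cardinality is precisely $(\flat\cdot\xi(x))(a,y)$; this is the first required identity. The second identity, $\op{type}=H!\cdot\xi$, holds by definition. The encoding is \emph{canonical} in that it uses only the data $\flat\cdot\xi$ and $H!\cdot\xi$, with no arbitrary choices beyond the (harmless) linear indexing of the parallel edges sharing a label.

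For the finiteness claim I would argue as follows. Since $\flat$ takes values in \emph{finite} bags, each $\flat\cdot\xi(x)$ has finite support and finite total multiplicity $\sum_{(a,y)}(\flat\cdot\xi(x))(a,y)$; hence the edges emanating from a fixed $x$ form a finite set. If $X$ is finite, then $E$ is a finite union of such finite sets, and is therefore finite.

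The hard part is essentially nonexistent: the entire content is making precise what the \emph{obvious} set of edges of a $\Bagf$-valued map should be, namely one edge per unit of multiplicity. The only point that needs a little care is the bookkeeping for parallel edges, i.e.\ pairs $(a,y)$ of multiplicity greater than one, which is exactly what the auxiliary index $j$ is there to handle.
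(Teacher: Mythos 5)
Your construction is exactly the paper's: the paper forms $E$ as the coproduct $\coprod_{e\in X\times A\times Y}\op{cnt}(e)$ of finite ordinals, where $\op{cnt}$ is the uncurrying of $\flat\cdot\xi$, which is precisely your indexed set of tuples $(x,a,y,j)$, with $\op{graph}$ the evident projection and $\op{type}=H!\cdot\xi$; the finiteness argument via finite support of each $\flat\cdot\xi(x)$ is also the same. The proposal is correct and takes essentially the same approach.
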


\begin{example} \label{exampleH2} In the following examples, we take
  $W = H2$ and $w(C) = H\chi_C: HY\to H2$. We use the helper function
  $\op{val} := \fpair{H(=2), \id, H(=1)}: H3\to H2×H3×H2$, where
  $(=x): 3\to 2$ is the equality check for $x\in \{1,2\}$, and in each
  case define $\op{update} = \op{val}\cdot \op{up}$ for some function
  $\op{up}: \Bagf A× H2\to H3$. We implicitly convert sets into bags.
\begin{enumerate}
\item\label{exampleH2.1} For the monoid-valued functor $G^{(-)}\!$,
  for an Abelian group $(G,+,0)$, we take labels $A = G$ and define
  $\flat(f) = \{ (f(y),y) \mid y\in Y, f(y) \neq 0\}$ (which is is
  finite because $f$ is finitely supported). With $W= H2= G×G$, the
  weight $w(C) = H\chi_C: HY\to G×G$ is the accumulated weight of
  $Y\setminus C$ and $C$. Then the remaining functions are
\begin{align*}
\op{init}(h_1, e) = (0, \groupsum e) 
\quad
\text{and}
\quad
\op{up}(e, (r,c)) =
    (r, c - \groupsum e,  \groupsum e),
\end{align*}
where $\groupsum: \Bagf G \to G$ is the obvious summation map. 

\item Similarly to the case $\R^{(-)}$, one has the following
  $\op{init}$ and $\op{up}$ functions for the distribution functor
  $\Dist$: put $\op{init}(h_1, e) = (0,1) \in \Dist 2 \subset [0,1]^2$
  and $\op{up}(e, (r,c)) = (r, c - \groupsum e, \groupsum e)$ if the
  latter lies in $\Dist 3$, and $(0,0,1)$ otherwise.

  %The previous $\op{init}$ and $\op{up}$ functions of $\R^{(-)}$
  %can be restricted to an interface for the distribution functor $\Dist
  %$\subset \R^{(-)}$. 
\item Similarly, one obtains a refinement interface for $\Bagf =
  \N^{(-)}$ adjusting the one for $\Z^{(-)}\!$; in fact,
  $\op{init}$ remains unchanged and $\op{up}(e, (r,c)) = (r,
  c-\groupsum e, \groupsum e)$ if the middle component is a natural
  number and $(0,0,0)$ otherwise. 

\item Given a polynomial functor $H_\Sigma$
  for a signature $\Sigma$ with bounded arity (i.e.~there exists $k$
  such that every arity is at most $k$),
  the labels $A = \N$ encode the indices of the parameters:
\begin{align*}
&\flat(\sigma(y_1,\ldots,y_n)) = \{ (1,y_1),\ldots,(n,y_n) \}\quad
\op{init}(\sigma(0,\ldots,0), f) = \sigma(1,\ldots,1)
\\
&\op{up}(I, \sigma(b_1,\ldots,b_n))
   = \sigma(b_1 + (1\in I),\ldots,b_i + (i\in I),\ldots,b_n + (n\in I))
\end{align*}
Here $b_i + (i\in I)$ means $b_i + 1$ if $i\in I$ and $b_i$ otherwise.
Since $I$ are the indices of the parameters in the subblock, $i\in I$
happens only if $b_i = 1$.
\end{enumerate}
\end{example}
One example where $W=H2$
does not suffice is the powerset functor $\Pot$:
Even if we know for a $t\in
\Pot Y$ that it contains elements in $C \subseteq Y$, in $S\subseteq
C$, and outside $C$ (i.e.~we know $\Pot{\chi_S}(t), \Pot{\chi_C} \in
\Pot
2$), we cannot determine whether there are any elements in $C\setminus
S$ -- but as seen in \autoref{expl:no-respect}, we need to include
this information.
\begin{example} \label{examplePowerset} The interface for the powerset
  functor needs to count the edges into blocks $C
  \supseteq
  S$ in order to know whether there are edges into $C\setminus
  S$, as described by Paige and Tarjan~\cite{PaigeTarjan87}.  What
  happens formally is that first the interface for $\N^{(-)}$
  is implemented for edge weights at most $1$,
  and then the middle component of the result of $\op{update}$
  is adjusted. So $W
  = \N^{(2)}$, and the encoding $\flat: \Potf Y\hookrightarrow
  \Bagf(1×Y)$ is the obvious inclusion. Then
\begin{align*}
    \op{init}(h_1, e) &= (0,|e|)
    \quad
    \quad
    w(C)(M) = \Bagf{\chi_C}(M) = (|M\setminus C|, |C\cap M|)
    \\
    \op{update}(n, (c, r))
        &= \fpair{\Bagf{(=2)}\!,\ (\geZero 0)^3\!,\ \Bagf{(=1)}}
            (r, c - |n|, |n|)
        \\
        &= \big(
            (r + c - |n|,|n|),
            (r \geZero  0,
             c - |n| \geZero  0,
             |n| \geZero  0),
            ( r + |n|, c - |n|)
        \big),
\end{align*}
where $x \geZero  0$ is $0$ if $x = 0$ and $1$ otherwise. 
%For typing reasons, let ${(>0)}^3: \N^{(3)}\to \B^{(3)}$ with $f\mapsto y\mapsto
%(f(y) > 0)$ and call the obvious (but not natural) inclusion
%$\inj_{\B}^Y: \B^{(Y)}\to \N^{(Y)}$ .
\end{example}
\begin{assumption}\label{interface-linear}
  From now on, assume a $\Set$-functor
  $H$
  with a refinement interface such that $\op{init}$
  and $\op{update}$
  run in linear time and elements of $H3$
  can be compared in constant time.
\end{assumption}
\begin{example}
  \label{exampleTime}
  The refinement interfaces in Examples~\ref{exampleH2}
  and~\ref{examplePowerset} satisfy \autoref{interface-linear}.
\end{example}
\begin{remark}
  In the implementation, we encode the partitions
  $\unnicefrac{X}{P}$,
  $\unnicefrac{Y}{Q}$
  as doubly linked lists of the blocks they contain, and each block is
  in turn encoded as a doubly linked list of its elements. The
  elements $x\in
  X$ and $y\in
  Y$ each hold a pointer to the corresponding list entry in the blocks
  containing them. This allows removing elements from a block in
  $\CO(1)$.
%
%  Furthermore, each block $B$ of $\unnicefrac{X}{P}$ maintains a list
%  of markings $\op{mark}_B \subseteq X×\N$. This will be used to mark
%  those elements that are affected by the current splitting.
\end{remark}
\noindent The algorithm maintains the following mutable
data structures:
\begin{itemize}
\item An array $\op{toSub}:
  X\to \Bagf E$, mapping $x\in
  X$ to its outgoing edges ending in the currently processed subblock.
\item A pointer mapping edges to memory
addresses: $\op{lastW}: E \to \N$.
\item A store of last values $\op{deref}: \N\to W$.
\item For each block $B$ a list of markings $\op{mark}_B \subseteq
B×\N$.
\end{itemize}
\begin{notation}
    In the following we write $e = x\xrightarrow{a}y$ in lieu of $\op{graph}(e) =
    (x,a,y)$.
\end{notation}

\begin{definition}[(Invariants)]
Our correctness proof below establishes the following properties that we call \emph{the invariants}: 
\begin{enumerate}
\item \label{invariant_toSub}
    For all $x\in X, \op{toSub}(x) = \emptyset$, i.e.~$\op{toSub}$ is
    empty everywhere.

\item \label{invariant_lastW}
    For $e_i = x_i\xrightarrow{a_i}y_i, i\in \{1,2\}$,
\(
    \op{lastW}(e_1)
    = \op{lastW}(e_2)
\ \Longleftrightarrow\ %
    x_1=x_2 \text{ and }[y_1]_{\kappa_Q} = [y_2]_{\kappa_Q}.
\)

\item\label{invariant_deref} For each $e = x\xrightarrow{a} y$, $C :=
[y]_{\kappa_Q} \in \unnicefrac{Y}{Q}$,
\(
    w(C, \xi(x)) = \op{deref}\cdot \op{lastW}(e).
\)

\item \label{invariant_Hchi}
    For $x_1,x_2\in B\in \unnicefrac{X}{P}$, $C\in \unnicefrac{Y}{Q}$,
    $(x_1,x_2) \in \ker (H\chi_C\cdot \xi)$.

\end{enumerate}
\end{definition}
\noindent In the following code listings, we use square brackets for
array lookups and updates in order to emphasize they run in constant
time. We assume that the functions
$\op{graph}: E \to X \times A \times Y$ and $\op{type}: X \to H1$ are
implemented as arrays.  In the initialization step the predecessor
array $\op{pred}: Y\to \Potf E$,
\( \op{pred}(y) = \{ e\in E \mid e = x\xrightarrow{a} y \} \)
is computed.  Sets and bags are implemented as lists. We only insert
elements into sets not yet containing them.

We say that we \emph{group} a finite set $Z$ by $f: Z \to Z'$ to
indicate that we compute $[-]_f$. This is done by sorting the elements
of $z\in Z$ by a binary encoding of $f(z)$ using any
$\CO(|Z|\cdot \log|Z|)$ sorting algorithm, and then grouping elements
with the same $f(z)$ into blocks. In order to keep the overall
complexity for the grouping operations low enough, one needs to use a
possible majority candidate during sorting, following Valmari and
Franceschinis~\cite{ValmariF10}. %, see \autoref{SplitComplexity}.
\begin{figure}[!h]
\begin{algorithmic}[1]
    \For{$e\in E$, $e = x\xrightarrow{a} y$}
        \State add $e$ to $\op{toSub}[x]$ and $\op{pred}[y]$.
    \EndFor
    \For{$x\in X$}
        \State $p_X \gets$ new cell in $\op{deref}$ containing
        $\op{init}(\op{type}[x],
        \Bagf(\pi_2\cdot\op{graph})(\op{toSub}[x]))$
        \State \textbf{for} $e\in \op{toSub}[x]$ \textbf{do}
        \label{algoLineInit} $\op{lastW}[e] = p_X$
        \State $\op{toSub}[x] := \emptyset$
        \label{algoLineInitEmptyset}
    \EndFor
    \State $\unnicefrac{X}{P} \gets $ group $X$ by $\op{type}: X \to H1$,
    $\unnicefrac{Y}{Q} \gets \{ Y\}$.
\end{algorithmic}
\vspace{-1mm}
\caption{The initialization procedure}
\vspace{-3mm}
\label{figAlgoInit}
\end{figure}
The algorithm computing the initial partition is listed in
\autoref{figAlgoInit}.
\begin{lemma}\label{lemmaInit}
The initialization procedure runs in time $\CO(|E|+|X|\cdot\log|X|)$ and
makes the invariants true.
\end{lemma}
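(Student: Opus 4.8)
The plan is to analyse the initialization procedure (\autoref{figAlgoInit}) line by line, bounding the running time of each loop and then verifying that each of the four invariants holds upon termination. For the \emph{complexity} claim, I would first observe that the procedure consists of three phases: the edge loop over~$E$, the state loop over~$X$, and a final grouping of~$X$ by $\op{type}$. The edge loop runs once per $e\in E$, and each iteration performs two list insertions (into $\op{toSub}[x]$ and $\op{pred}[y]$); since we only insert into a list, not into a set requiring a membership test, each insertion is $\CO(1)$, so this phase costs $\CO(|E|)$. For the state loop, the key point is that across \emph{all} iterations the inner loop over $e\in\op{toSub}[x]$ touches each edge exactly once (every edge sits in exactly one $\op{toSub}[x]$, namely for its source~$x$), so the total work of the inner loops is $\CO(|E|)$; the call to $\op{init}$ on line~4 runs in linear time by \autoref{interface-linear}, where ``linear'' means linear in the size of its bag argument $\Bagf(\pi_2\cdot\op{graph})(\op{toSub}[x])$, and these bag sizes again sum to $\CO(|E|)$ over all~$x$. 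Hence the state loop is also $\CO(|E|+|X|)$. Finally, grouping $X$ by $\op{type}:X\to H1$ uses the sort-based grouping procedure described before the algorithm, costing $\CO(|X|\cdot\log|X|)$; comparisons of $H1$-values are constant time (a special case of the $H3$-comparison assumption). Summing the three phases gives $\CO(|E|+|X|\cdot\log|X|)$.

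For the \emph{invariants}, I would check them in turn against the state of the data structures after the final line. Invariant~\ref{invariant_toSub} ($\op{toSub}$ empty everywhere) is immediate: line~\ref{algoLineInitEmptyset} resets $\op{toSub}[x]:=\emptyset$ at the end of each iteration of the state loop, and every $x\in X$ is visited. Invariant~\ref{invariant_Hchi} holds because after initialization $\unnicefrac{Y}{Q}=\{Y\}$, so the only compound block is $C=Y$; then $\chi_Y$ is constant, $H\chi_Y\cdot\xi$ is constant on~$X$, and so \emph{any} two elements lie in its kernel, in particular any $x_1,x_2$ in the same $P$-block. The heart of the invariant check is \ref{invariant_lastW} and~\ref{invariant_deref}, which concern $\op{lastW}$ and $\op{deref}$.

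For these two, I would argue as follows. In the state loop, for each fixed~$x$ a single fresh cell $p_X$ is allocated in $\op{deref}$, and \emph{every} outgoing edge $e$ of $x$ gets $\op{lastW}[e]=p_X$ on line~\ref{algoLineInit}. Thus two edges $e_1=x_1\xrightarrow{a_1}y_1$ and $e_2=x_2\xrightarrow{a_2}y_2$ satisfy $\op{lastW}(e_1)=\op{lastW}(e_2)$ exactly when they were assigned in the same iteration, i.e.\ exactly when $x_1=x_2$. Since after initialization $\unnicefrac{Y}{Q}=\{Y\}$ is the one-block partition, the condition $[y_1]_{\kappa_Q}=[y_2]_{\kappa_Q}$ is vacuously true for all $y_1,y_2$, so the right-hand side of invariant~\ref{invariant_lastW} reduces to $x_1=x_2$, matching exactly. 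For invariant~\ref{invariant_deref}, with $C=[y]_{\kappa_Q}=Y$, the value stored in the cell for~$x$ is $\op{init}(\op{type}[x],\Bagf(\pi_2\cdot\op{graph})(\op{toSub}[x]))$. Here $\op{type}[x]=H!\cdot\xi(x)$ by the definition of an encoding, and the bag $\Bagf(\pi_2\cdot\op{graph})(\op{toSub}[x])$ equals $\Bagf\pi_1\cdot\flat\cdot\xi(x)$ since $\op{toSub}[x]$ collects precisely the outgoing edges of $x$ and $\op{graph}$ recovers their labels, again by the encoding property. The left commuting triangle of~\eqref{eqSplitterLabels} then states exactly that $\op{init}\cdot\fpair{H!,\Bagf\pi_1\cdot\flat}=w(Y)$, so the stored value is $w(Y,\xi(x))=w(C,\xi(x))$, which is what invariant~\ref{invariant_deref} requires.

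The main obstacle is the amortized complexity bookkeeping for the state loop rather than the invariants themselves: one must be careful that the linear-time guarantee of $\op{init}$ from \autoref{interface-linear} is charged against the total edge count $|E|$ and not naively against $|X|$ times a worst-case out-degree, and that the ``only insert into lists'' convention keeps each edge operation constant-time. Once it is spelled out that each edge is handled a constant number of times across both loops, the $\CO(|E|)$ bound for the loop phases is routine, and the $\CO(|X|\cdot\log|X|)$ term arises solely from the final sort-based grouping.
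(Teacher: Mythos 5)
Your complexity analysis and your verification of invariants \ref{invariant_toSub}, \ref{invariant_lastW}, and \ref{invariant_deref} are correct and follow essentially the same route as the paper (the paper is terser, but the amortized charging of the state loop against $|E|$ and the appeal to the $\op{init}$ axiom of \eqref{eqSplitterLabels} are exactly the intended arguments).

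However, your argument for invariant~\ref{invariant_Hchi} contains a genuine error. You claim that since $\chi_Y$ is constant, $H\chi_Y\cdot\xi$ is constant on $X$, so that \emph{any} two elements lie in its kernel. Functors do not preserve constancy: $\chi_Y$ factors as $Y\xrightarrow{!}1\rightarrowtail 2$, so $H\chi_Y = H(1\rightarrowtail 2)\cdot H!$, and $H!\colon HY\to H1$ is constant only when $H1$ is a singleton. For $H=\Potf$, for instance, $\Potf\chi_Y$ sends $\emptyset\mapsto\emptyset$ and every nonempty set to $\{1\}$, so $H\chi_Y\cdot\xi$ distinguishes states with and without successors --- precisely the distinction that the initial partition in \autoref{figPartitionQ0} is built to record. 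Your argument would make invariant~\ref{invariant_Hchi} hold even if line~7 were omitted, which cannot be right: the grouping of $X$ by $\op{type}$ is exactly what establishes this invariant. The correct argument (the paper's) is that $\ker(H\chi_Y\cdot\xi)=\ker(H!\cdot\xi)$ because $H$ preserves the monomorphism $1\rightarrowtail 2$ (\autoref{kernelMono}), and $\unnicefrac{X}{P}$ is by construction the partition induced by $\op{type}=H!\cdot\xi$; hence $x_1,x_2$ in the same block satisfy $H!\cdot\xi(x_1)=H!\cdot\xi(x_2)$ and therefore lie in $\ker(H\chi_Y\cdot\xi)$. This is a localized fix, but as written that step of your proof is false.
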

\vspace{-1mm}
\noindent The algorithm for one refinement step along a morphism
$\xi: X\to HY$ is listed in \autoref{figSplitAlgo}.
\begin{figure}[b!]
\begin{subfigure}[t]{.45\textwidth}
\textsc{Split}$(\unnicefrac{X}{P}, \unnicefrac{Y}{Q}, S\subseteq C\in \unnicefrac{Y}{Q})$
\begin{algorithmic}[1]
\State{$\op{M} \gets \emptyset \subseteq \unnicefrac{X}{P}×H3$}
\noshowkeys\label{algoFirstLine}
\For{$y\in S, e\in \op{pred}[y]$}
    \State $x\xrightarrow{a}y \gets e$
    \State $B \gets$ block with $x\in B\in \unnicefrac{X}{P}$
    \If{$\op{mark}_B$ is empty}
        \State $w_C^x \gets \op{deref}\cdot\op{lastW}[e]$
        \State $v_\emptyset \gets
        \pi_2\cdot\op{update}(\emptyset,w_C^x)$
        \State add $(B,v_\emptyset)$ to
        $\op{M}$
    \EndIf
    \If{$\op{toSub}[x] = \emptyset$}
        \State add $(x,\op{lastW}[e])$ to $\op{mark}_B$\label{algo10}
    \EndIf
    \State add $e$ to $\op{toSub}[x]$
\EndFor
\algstore{splitAlgo}
\end{algorithmic}
\end{subfigure}\hfill%
\begin{subfigure}[t]{.52\textwidth}
\begin{algorithmic}[1]
\algrestore{splitAlgo}
\For{$(B,v_\emptyset) \in \op{M}$}
    \noshowkeys\label{algoSplitLoop}
    \State $B_{\neq\emptyset} \gets \emptyset \subseteq X × H3$
    \For{$(x,p_{C})$ in $\op{mark}_B$}
        \noshowkeys\label{algoLineMarkedX}
        \State $\ell \gets \Bagf(\pi_2\cdot \op{graph})(\op{toSub}[x])$
            \noshowkeys\label{algoLineLabels}
        \State $(w_S^x, v^x,w_{C\setminus S}^x) \gets
            \op{update}(\ell, \op{deref}[p_{C}]\mathrlap{)}$%
\noshowkeys\label{algoLineUpdate}
        \State $\op{deref}[p_{C}] \gets w^x_{C\setminus S}$
        \State $p_S\gets $ new cell containing $w_S^x$
            \noshowkeys\label{algoLineNewCell}
        \State \textbf{for} $e \in \op{toSub}[x]$ \textbf{do} $\op{lastW}[e] \gets p_S$
            \noshowkeys\label{algoLineSetLastW}
        \State $\op{toSub}[x] \gets \emptyset$
        \If{$v^x \neq v_\emptyset$} \noshowkeys\label{algoLineIfVXVempty}
            \State remove $x$ from $B$
            \State insert $(x,v^x)$ into $B_{\neq\emptyset}$
            \noshowkeys\label{algoLineInsertVX}
        \EndIf
    \EndFor
    \label{algoLastLine}
    \State \label{algoLineSplit}
        \(
        \begin{array}[t]{l}
        B_1×\{v_1\},\ldots,B_\ell×\{v_\ell\} \gets
            \\
            \quad\text{group }B_{\neq\emptyset}\text{ by }\pi_2:X×H3\to H3
        \end{array}
        \)
    \State insert $B_1,\ldots,B_\ell \gets$ into $\unnicefrac{X}{P}$
\EndFor
\end{algorithmic}
\end{subfigure} \\[2mm]
\begin{subfigure}[t]{.45\textwidth}
    \caption{Collecting predecessor blocks}
    \label{algoPredCollecting}
\end{subfigure}\hfill%
\begin{subfigure}[t]{.52\textwidth}
    \caption{Splitting predecessor blocks} \noshowkeys\label{algoPredSplitting}
\end{subfigure}
%\\[-8mm]
%\begin{minipage}{.40\textwidth}
\caption{Refining $\unnicefrac{X}{P}$ w.r.t~$
\chi_S^C: Y\to 3$ and
$\unnicefrac{Y}{Q}$ along $\xi: X\to HY$}
\label{figSplitAlgo}
%\end{minipage}
%\vspace{-5mm}
\end{figure}
\noindent In the first part, all blocks $B\in \unnicefrac{X}{P}$ are
collected that have an edge into $S$, together with
$v_\emptyset\in H3$ which represents $H\chi_S^C\cdot\xi(x)$ for any
$x\in B$ that has no edge into $S$. For each $x\in X$, $\op{toSub}[x]$
collects the edges from $x$ into $S$. The markings $\op{mark}_B$ list
those elements $x\in B$ that have an edge into $S$, together with a
pointer to $w(C,x)$.

In the second part, each block $B$ with an edge into $S$
is refined w.r.t.~$H\chi_S^C\cdot \xi$. First, for any $(x,p_C)\in
\op{mark}_B$, we compute $w(S,x)$, $v^x = H\chi_S^C\cdot\xi(x)$, and
$w(C\setminus S,x)$ using $\op{update}$. Then, the weight of all edges
$x \to C\setminus S$ is updated to $w(C\setminus S, x)$ and the
weight of all edges $x\to S$ needs to be stored in a new cell
containing $w(S,x)$. For all unmarked $x\in B$, we
know that $H\chi_S^C\cdot\xi(x) = v_\emptyset$; so all $x$ with
$v^x=v_\emptyset$
stay in $B$. All other $x\in B$ are removed and distributed to new
blocks w.r.t.~$v^x$.

\begin{theorem} \label{thmSplitCorrect}
    \textsc{Split}$(\unnicefrac{X}{P}, \unnicefrac{Y}{Q} ,S\subseteq C\in \unnicefrac{Y}{Q})$ refines
    $\unnicefrac{X}{P}$ by $\smash{H\chi_S^C\cdot \xi: X\to H3}$.
\vspace{-1pt}
\end{theorem}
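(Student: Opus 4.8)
The plan is to show that when the procedure terminates, the updated partition $\unnicefrac{X}{P}$ is the common refinement of the input partition with $\ker(H\chi_S^C\cdot\xi)$. Since both parts of \textsc{Split} only remove elements from existing blocks and create new blocks, and never merge elements of distinct input blocks, it suffices to prove that each input block $B$ is partitioned exactly into the non-empty fibres of $H\chi_S^C\cdot\xi$ restricted to $B$. Concretely, I will establish three facts: (i) the value $v^x\in H3$ computed for each marked element $x$ equals $H\chi_S^C\cdot\xi(x)$; (ii) the value $v_\emptyset\in H3$ attached to a processed block equals $H\chi_S^C\cdot\xi(x')$ for every $x'$ in that block with no edge into $S$; and (iii) a block with no edge into $S$ carries a constant value of $H\chi_S^C\cdot\xi$ and is thus correctly left untouched.

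For (i) I would first read off, using the invariant that $\op{toSub}$ is empty on entry (Invariant~\ref{invariant_toSub}), that after the collecting loop (\autoref{algoPredCollecting}) the list $\op{toSub}[x]$ holds precisely the edges from $x$ into $S$; by the encoding this makes $\ell=\Bagf(\pi_2\cdot\op{graph})(\op{toSub}[x])$ (line~\ref{algoLineLabels}) equal to $\op{fil}_S(\flat(\xi(x)))$. Invariant~\ref{invariant_lastW} shows that the pointer $p_C$ recorded in $\op{mark}_B$ does not depend on the chosen edge and that cells for distinct sources are distinct, so the in-place writes $\op{deref}[p_C]\gets w_{C\setminus S}^x$ cannot corrupt a later read; hence Invariant~\ref{invariant_deref} gives $\op{deref}[p_C]=w(C,\xi(x))$ at line~\ref{algoLineUpdate}. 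Instantiating the right-hand diagram of \eqref{eqSplitterLabels} at $S\subseteq C$ and at the point $\xi(x)\in HY$ then yields
\[
  \op{update}\bigl(\op{fil}_S(\flat(\xi(x))),\,w(C,\xi(x))\bigr)
  = \fpair{\,w(S,\xi(x)),\;H\chi_S^C\cdot\xi(x),\;w(C\setminus S,\xi(x))\,},
\]
so its middle component $v^x$ is $H\chi_S^C\cdot\xi(x)$, as claimed.

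The hard part will be (ii), reconciling the single value $v_\emptyset$ with all unmarked elements of a block. The value is computed in the collecting loop from the first representative $x_0\in B$ as $v_\emptyset=\pi_2\cdot\op{update}(\emptyset,w(C,\xi(x_0)))$, yet it must equal $H\chi_S^C\cdot\xi(x')$ for \emph{every} $x'\in B$ without an edge into $S$. My approach is to read $v_\emptyset$ as the degenerate instance of the same interface diagram in which the subblock is taken to be $\emptyset\subseteq C$: as $\op{fil}_\emptyset$ is constantly the empty bag, this instance identifies $\pi_2\cdot\op{update}(\emptyset,w(C,\xi(x_0)))$ with $H\chi_\emptyset^C\cdot\xi(x_0)$, so $v_\emptyset=H\chi_\emptyset^C\cdot\xi(x_0)$. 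For $x'$ without an edge into $S$ we have $\op{fil}_S(\flat(\xi(x')))=\emptyset$, so the $S$-instance and the $\emptyset$-instance of the diagram feed $\op{update}$ the \emph{same} argument and therefore share their middle component, giving $H\chi_S^C\cdot\xi(x')=H\chi_\emptyset^C\cdot\xi(x')$. Finally $\chi_\emptyset^C=\iota\cdot\chi_C$ for the inclusion $\iota\colon 2\to 3$, so Invariant~\ref{invariant_Hchi} (for the block $B$ and the compound block $C$) gives $H\chi_\emptyset^C\cdot\xi(x')=H\iota\cdot H\chi_C\cdot\xi(x')=H\iota\cdot H\chi_C\cdot\xi(x_0)=v_\emptyset$. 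The same computation settles (iii): in a block never reached by an edge into $S$ every element has value $H\chi_\emptyset^C\cdot\xi=H\iota\cdot H\chi_C\cdot\xi$, which by Invariant~\ref{invariant_Hchi} is constant on the block, so no split is needed.

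Assembling the three facts finishes the argument: in a processed block $B$ the retained elements are exactly those of $H\chi_S^C$-value $v_\emptyset$ (the unmarked ones together with the marked ones surviving the test in line~\ref{algoLineIfVXVempty}), while the marked elements of value distinct from $v_\emptyset$ are removed and regrouped by $\pi_2\colon X\times H3\to H3$ in line~\ref{algoLineSplit}, i.e.\ by their $H\chi_S^C$-value; hence $B$ is split precisely into the fibres of $H\chi_S^C\cdot\xi$ over $B$, untouched blocks need no splitting, and elements of distinct input blocks are never merged. This is exactly the assertion that \textsc{Split} refines $\unnicefrac{X}{P}$ by $H\chi_S^C\cdot\xi$. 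I expect step (ii) to be the genuine obstacle, since it is the only point where the $S=\emptyset$ instance of the interface axiom must be combined with Invariant~\ref{invariant_Hchi}; it is also precisely the step that would fail for a naive splitter $\chi_S$ in place of $\chi_S^C$, as \autoref{expl:no-respect} illustrates.
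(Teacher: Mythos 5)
Your proposal is correct and follows essentially the same route as the paper's proof: it establishes the same postconditions of the collecting phase (your fact (i) corresponds to Lemma~\ref{p1Properties} and the marked case of Lemma~\ref{BneqPartialMap}, your facts (ii) and (iii) to the unmarked case and the final argument), and it uses the same two key ingredients, namely the interface axiom~\eqref{eqSplitterLabels} instantiated both at $S\subseteq C$ and at the degenerate splitter $\emptyset\subseteq C$, combined with Invariant~\ref{invariant_Hchi} via $\ker(H\chi_\emptyset^C\cdot\xi)=\ker(H\chi_C\cdot\xi)$. No gaps.
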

\begin{lemma}\label{lemmaAfterSplit}
    After running $\textsc{Split}$, the invariants hold.
\vspace{-1pt}
\end{lemma}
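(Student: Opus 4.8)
The plan is to read this as the inductive step that accompanies \autoref{lemmaInit}: assuming the four invariants hold before the call to \textsc{Split}, I re-establish each of them afterwards. The one nontrivial fact I import is \autoref{thmSplitCorrect}, which guarantees that \textsc{Split} refines $\unnicefrac{X}{P}$ exactly by $H\chi_S^C\cdot\xi$; on the $Y$-side the sole change is that the compound block $C$ is replaced by the two blocks $S$ and $C\setminus S$ in $\unnicefrac{Y}{Q}$. Throughout I therefore compare the pre-state (block $C$) with the post-state (blocks $S$ and $C\setminus S$) and check invariants \ref{invariant_toSub}--\ref{invariant_Hchi} against the \emph{new} partitions.

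For invariant \ref{invariant_toSub} I note that an edge enters $\op{toSub}[x]$ only in the collecting phase, and only for sources $x$ that have an edge into $S$; each such $x$ is marked exactly once (at its first $S$-edge, when $\op{toSub}[x]$ is still empty) and hence occurs in some $\op{mark}_B$. The splitting phase visits every marked $x$ and resets $\op{toSub}[x]$ to $\emptyset$, so $\op{toSub}$ is empty everywhere again. For invariant \ref{invariant_lastW} I track the $\op{lastW}$-pointers by edge target: edges into blocks other than $C$ are untouched; edges into $C\setminus S$ keep their old cell $p_C$ (only its stored value is overwritten); and edges from a marked $x$ into $S$ are redirected to a cell $p_S$ that is \emph{freshly allocated for that $x$}. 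Freshness is exactly what re-establishes the biconditional: the new cells are pairwise distinct and distinct from every old cell, so after the call two edges share a $\op{lastW}$-pointer iff they have equal source and their targets lie in one block of the refined $\unnicefrac{Y}{Q}$, with $S$ now separated from $C\setminus S$.

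For invariant \ref{invariant_deref} the workhorse is the right-hand diagram of \autoref{refinementInterface}. For a marked $x$, the bag $\ell$ formed from $\op{toSub}[x]$ is exactly $\op{fil}_S(\flat(\xi(x)))$, and the cell $p_C$ still holds $w(C,\xi(x))$ by the pre-call invariant, so $\op{update}(\ell,w(C,\xi(x)))=(w(S,\xi(x)),H\chi_S^C(\xi(x)),w(C\setminus S,\xi(x)))$; writing $w(C\setminus S,\xi(x))$ back into $p_C$ and $w(S,\xi(x))$ into the new cell $p_S$ restores the invariant for the $C\setminus S$- and $S$-edges of $x$. The delicate case, which I expect to be the main obstacle, is an \emph{unmarked} source $x$ whose $C$-edges all land in $C\setminus S$: its cell is never touched, so I must show $w(C,\xi(x))=w(C\setminus S,\xi(x))$ from the interface axioms alone. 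I obtain this by invoking the same diagram twice at $t=\xi(x)$ with the empty filter $\op{fil}_S(\flat(\xi(x)))=\emptyset$ (legitimate since $x$ has no edge into $S$): once for the chosen $S$, yielding $w(C\setminus S,\xi(x))=\pi_3\,\op{update}(\emptyset,w(C,\xi(x)))$, and once for the subblock $\emptyset\subseteq C$, where $C\setminus\emptyset=C$ and the diagram gives $\pi_3\,\op{update}(\emptyset,w(C,\xi(x)))=w(C,\xi(x))$; equating the two right-hand sides proves the claim.

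For invariant \ref{invariant_Hchi} I use that every new block $B'$ sits inside an old block $B$ and, by \autoref{thmSplitCorrect}, inside a single class of $H\chi_S^C\cdot\xi$. For an unchanged compound block $C'\neq C$ the pre-call invariant for $B$ and $C'$ already yields $(x_1,x_2)\in\ker(H\chi_{C'}\cdot\xi)$ for $x_1,x_2\in B'\subseteq B$. For the two new compound blocks I factor $\chi_S=(=2)\cdot\chi_S^C$ and $\chi_{C\setminus S}=(=1)\cdot\chi_S^C$, so that $H\chi_S=H(=2)\cdot H\chi_S^C$ and $H\chi_{C\setminus S}=H(=1)\cdot H\chi_S^C$; applying $H(=2)$ resp.\ $H(=1)$ to the equality $H\chi_S^C(\xi(x_1))=H\chi_S^C(\xi(x_2))$ supplied by \autoref{thmSplitCorrect} yields $(x_1,x_2)\in\ker(H\chi_S\cdot\xi)$ and $(x_1,x_2)\in\ker(H\chi_{C\setminus S}\cdot\xi)$. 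This exhausts all blocks of the refined $\unnicefrac{Y}{Q}$. Apart from the unmarked-source computation above, the only point requiring care is the bookkeeping of cell identities in invariants \ref{invariant_lastW} and \ref{invariant_deref}, including the harmless ``dead'' cells left behind by sources whose $C$-edges all land in $S$, which do not affect the invariants since these quantify only over edges.
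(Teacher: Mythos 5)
Your proof is correct and follows essentially the same route as the paper's: each invariant is re-established by a case analysis on whether an edge's target lies in $S$, in $C\setminus S$, or in an untouched block, using the $\op{update}$ axiom of the refinement interface for invariants 2 and 3 and the grouping by $H\chi_S^C\cdot\xi$ (via \autoref{thmSplitCorrect} and the factorizations $H\chi_S = H(=2)\cdot H\chi_S^C$, $H\chi_{C\setminus S} = H(=1)\cdot H\chi_S^C$) for invariant 4. Your explicit treatment of invariant 3 for an \emph{unmarked} source $x$ whose edges land in $C\setminus S$ --- deriving $w(C\setminus S,\xi(x))=w(C,\xi(x))$ by instantiating the interface axiom a second time at the empty subblock $\emptyset\subseteq C$ --- covers a sub-case that the paper's proof passes over silently (its justification there applies only to marked sources), and is a welcome addition.
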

\begin{lemma}\label{lemmaTimeSplit}
  Lines $\ref{algoFirstLine}-\ref{algoLastLine}$ in \textsc{Split} run
  in time $\smash{\CO(\sum_{y\in S}|\op{pred}(y)|)}$.
\vspace{-1pt}
\end{lemma}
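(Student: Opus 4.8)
The plan is to account for the work done in lines~\ref{algoFirstLine}--\ref{algoLastLine} by charging each unit of cost to an edge $e\in\op{pred}(y)$ for some $y\in S$, so that the total is $\CO(\sum_{y\in S}|\op{pred}(y)|)$. First I would observe that the first loop (\autoref{algoPredCollecting}) iterates exactly once per pair $(y,e)$ with $y\in S$ and $e\in\op{pred}[y]$, so its iteration count is precisely $\sum_{y\in S}|\op{pred}(y)|$. Inside each iteration, every operation runs in $\CO(1)$: retrieving the block $B$ containing $x$ uses the stored pointer from $x$ to its list entry; the emptiness test on $\op{mark}_B$ and on $\op{toSub}[x]$, the lookups $\op{deref}\cdot\op{lastW}[e]$, and all insertions into lists are constant-time by the data-structure conventions fixed in the preceding remark (doubly linked lists, array lookups in square brackets). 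The one subtlety is $v_\emptyset\gets\pi_2\cdot\op{update}(\emptyset,w_C^x)$: I would note that $\op{update}$ is applied to the \emph{empty} bag, so by \autoref{interface-linear} (where $\op{update}$ runs in linear time in its bag argument) this call is $\CO(1)$.

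Next I would bound the second loop (\autoref{algoPredSplitting}). The key point is that the outer loop ranges over $(B,v_\emptyset)\in\op{M}$ and the inner loop over $(x,p_C)\in\op{mark}_B$, and the total number of such marked elements $x$ across all $B$ is bounded by the number of edges into $S$ that were processed in the first part, since an $x$ is marked (line~\ref{algo10}) only when it first receives an edge into $S$ and $\op{toSub}[x]$ was still empty. Hence the combined iteration count of both loops is again $\CO(\sum_{y\in S}|\op{pred}(y)|)$. The per-iteration work needs care: line~\ref{algoLineLabels} builds $\ell\gets\Bagf(\pi_2\cdot\op{graph})(\op{toSub}[x])$ and line~\ref{algoLineSetLastW} iterates over $\op{toSub}[x]$, so both cost $\CO(|\op{toSub}[x]|)$; but $\op{toSub}[x]$ contains exactly the edges from $x$ into $S$, and summing $|\op{toSub}[x]|$ over all marked $x$ again yields $\sum_{y\in S}|\op{pred}(y)|$. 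Crucially, line~\ref{algoLineUpdate} calls $\op{update}(\ell,\op{deref}[p_C])$ on this bag $\ell$, which by \autoref{interface-linear} runs in time $\CO(|\ell|)=\CO(|\op{toSub}[x]|)$, so these calls telescope into the same edge-count bound.

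Finally I would treat the grouping step in line~\ref{algoLineSplit}. Grouping $B_{\neq\emptyset}$ by $\pi_2\colon X\times H3\to H3$ costs $\CO(|B_{\neq\emptyset}|\cdot\log|B_{\neq\emptyset}|)$ with the sorting-based \op{group} routine described before \autoref{figAlgoInit}. The apparent logarithmic factor is the main obstacle, since a naive bound would introduce an extra $\log$ and spoil the claim. I would resolve this by invoking the possible-majority-candidate refinement of Valmari and Franceschinis~\cite{ValmariF10}, already flagged in the text: elements of $B_{\neq\emptyset}$ sharing the value $v_\emptyset$ need not be sorted, and since $|B_{\neq\emptyset}|$ is bounded by the number of marked elements of $B$, which is itself bounded by the number of edges into $S$ landing in $B$, the grouping cost is absorbed into $\CO(\sum_{y\in S}|\op{pred}(y)|)$ rather than contributing a separate term. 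Comparisons of $H3$-values are $\CO(1)$ by \autoref{interface-linear}. Summing the three contributions, all bounded by the number of edges into $S$, gives the stated bound $\CO(\sum_{y\in S}|\op{pred}(y)|)$.
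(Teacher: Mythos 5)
Your analysis of lines \ref{algoFirstLine}--\ref{algoLastLine} proper matches the paper's proof: the first loop has exactly $\sum_{y\in S}|\op{pred}(y)|$ constant-time iterations (including the call $\op{update}(\emptyset,-)$, which is $\CO(1)$ by \autoref{interface-linear}), and in the second loop the only non-constant-time statements (lines \ref{algoLineLabels}, \ref{algoLineUpdate} and \ref{algoLineSetLastW}) are each linear in $|\op{toSub}[x]|$; since every edge appended to some $\op{toSub}[x]$ in part (a) comes from $\op{pred}[y]$ for some $y\in S$, these costs sum over all marked $x$ to $\sum_{y\in S}|\op{pred}(y)|$. Up to that point your accounting is the same as the paper's.

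The gap is in your final paragraph on the grouping step. First, line \ref{algoLineSplit} is not actually in the range covered by the lemma: the label \ref{algoLastLine} sits at the end of the inner loop, so the grouping is deliberately excluded from this statement. Second, and more substantively, your claim that the grouping cost is ``absorbed into $\CO(\sum_{y\in S}|\op{pred}(y)|)$'' is false. Even with the possible-majority-candidate refinement, one still sorts the non-majority part of $B_{\neq\emptyset}$, which costs $\Theta(k\log k)$ comparisons for $k$ the size of that part; this is superlinear in the number of edges into $S$ and cannot be charged linearly to those edges within a single call. What the PMC trick actually buys is that the sorted set has size $\CO(|\Middleblock_B|)$, where $\Middleblock_B$ consists of the states with edges into both $S$ and $C\setminus S$; each such occurrence strictly increases the number of $Q$-blocks a state has edges into, so the sizes of all middle blocks over the \emph{entire run} sum to at most $|E|$, giving the amortized bound $\CO(|E|\cdot\log|X|)$ of \autoref{SplitComplexity}. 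That is a global argument over all calls to \textsc{Split}, not a per-call linear bound, and it is precisely why the paper separates the grouping cost from this lemma rather than folding it into $\CO(\sum_{y\in S}|\op{pred}(y)|)$.
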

\begin{lemma}\label{lemmaTime}
    For $S_i\subseteq C_i \in \unnicefrac{Y}{Q_i}, 0\le i< k$,
    with $2\cdot |S_i| \le |C_i|$ and $Q_{i+1} = \ker
    \fpair{\kappa_{Q_i}, \smash{\chi_{S_i}^{C_i}}}$:
    \begin{enumerate}[topsep=1pt]
    \item For each $y\in Y$, $|\{i < k \mid y\in S_i\}| \le \log_2 |Y|$.
    \item\label{lemmaTime2} $\textsc{Split}(S_i\subseteq C_i\in
    \unnicefrac{Y}{Q_i})$ for all $0\le i < k$ takes at most
    $\CO(|E|\cdot \log |Y|)$ time in total.
    \end{enumerate}
\end{lemma}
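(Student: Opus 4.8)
The plan is to prove the two items in turn, with the first feeding directly into the second; the whole argument is the coalgebraic incarnation of the classical Paige--Tarjan ``process the smaller half'' analysis.

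For the first item I would track the size of the $Q_i$-block containing a fixed $y\in Y$. Write $n_i(y)$ for the cardinality of the block $B\in\unnicefrac{Y}{Q_i}$ with $y\in B$. Since $Q_{i+1}=\ker\fpair{\kappa_{Q_i},\chi_{S_i}^{C_i}}=Q_i\cap\ker\chi_{S_i}^{C_i}$, the partition $\unnicefrac{Y}{Q_{i+1}}$ refines $\unnicefrac{Y}{Q_i}$, so $n_{i+1}(y)\le n_i(y)$ for every $y$; moreover $\chi_{S_i}^{C_i}$ sends everything outside $C_i$ to $0$, so the only block actually split is $C_i$, which breaks into $S_i$ and $C_i\setminus S_i$. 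Hence whenever $y\in S_i$ we have $n_i(y)=|C_i|$ (as $S_i\subseteq C_i$ forces $C_i$ to be the $Q_i$-block of $y$) and $n_{i+1}(y)=|S_i|\le\tfrac12|C_i|=\tfrac12 n_i(y)$, using the standing hypothesis $2\cdot|S_i|\le|C_i|$. Thus each index $i$ with $y\in S_i$ at least halves $n_i(y)$, while no step increases it. Since $n_0(y)\le|Y|$ and $n_i(y)\ge 1$ throughout, the number of halvings is at most $\log_2|Y|$, which is exactly the claim.

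For the second item I would combine \autoref{lemmaTimeSplit} with a double-counting argument driven by the first item. By \autoref{lemmaTimeSplit}, lines~\ref{algoFirstLine}--\ref{algoLastLine} of $\textsc{Split}(S_i\subseteq C_i)$ cost $\CO(\sum_{y\in S_i}|\op{pred}(y)|)$. Summing over $i$ and exchanging the order of summation gives
\[
  \sum_{i<k}\ \sum_{y\in S_i}|\op{pred}(y)|
  = \sum_{y\in Y}|\op{pred}(y)|\cdot\bigl|\{\,i<k\mid y\in S_i\,\}\bigr|
  \le \log_2|Y|\cdot\sum_{y\in Y}|\op{pred}(y)|.
\]
Since every edge $e=x\xrightarrow{a}y$ has a unique target $y$, the predecessor sets partition $E$ and $\sum_{y\in Y}|\op{pred}(y)|=|E|$, so this portion totals $\CO(|E|\cdot\log|Y|)$.

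It remains to account for the grouping in line~\ref{algoLineSplit}, which \autoref{lemmaTimeSplit} does not cover, and this is where I expect the main difficulty. In each call the elements fed into the grouping are exactly the entries of $B_{\neq\emptyset}$, whose total number across all blocks is bounded by the number of markings, hence again by $\sum_{y\in S_i}|\op{pred}(y)|$; so the same double-counting caps the \emph{number} of grouped elements over the whole run by $|E|\log_2|Y|$. The delicate point is that a naive comparison sort groups $n$ elements in time $\CO(n\log n)$, which would introduce a spurious second logarithmic factor. To avoid this one keeps the largest resulting sub-block in place (reusing $B$'s data structure) and relocates only the rest, identifying that sub-block without a full sort via the possible-majority-candidate technique of Valmari and Franceschinis; verifying that this keeps the global grouping cost within the budget of the non-grouping part is the crux of the argument. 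Adding the two contributions then yields the stated $\CO(|E|\cdot\log|Y|)$ bound.
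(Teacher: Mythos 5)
Your argument for item~1 is correct and is essentially the paper's halving argument in different bookkeeping: the paper tracks the chain $2\cdot|S_{i_{j+1}}|\le|C_{i_{j+1}}|\le|S_{i_j}|$ of successive subblocks containing $y$, whereas you track the size of the $Q_i$-block of $y$; both yield at most $\log_2|Y|$ halvings. Likewise, your treatment of the portion of item~2 covered by \autoref{lemmaTimeSplit} --- summing $\sum_{y\in S_i}|\op{pred}(y)|$ over $i$, exchanging the order of summation, and applying item~1 --- is exactly the paper's computation.

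The gap is in the grouping step of line~\ref{algoLineSplit}, which you correctly single out as the crux but then do not carry out. Your intermediate bound --- that the total number of elements ever placed into some $B_{\neq\emptyset}$ is at most $|E|\log_2|Y|$ --- does not close the argument even granting the possible-majority-candidate trick: what the PMC device buys is that only the \emph{non-majority} elements of each $B_{\neq\emptyset}$ need to be comparison-sorted, so what you must bound is the total size of that non-majority part, and your $|E|\log_2|Y|$ count says nothing about it. The bound that actually works (\autoref{SplitComplexity}, adapted from Valmari--Franceschinis) is that the total size, over all iterations, of the \emph{middle blocks} $\Middleblock_B^i$ --- the elements $x$ having edges into both $S_i$ and $C_i\setminus S_i$ --- is at most $|E|$, with \emph{no} logarithmic factor. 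This rests on a charging argument absent from your proposal: whenever $x$ lies in a middle block at step $i$, the number of distinct blocks of $\unnicefrac{Y}{Q}$ into which $x$ has an edge strictly increases, and that counter is bounded by the out-degree of $x$, so summing over $x$ gives $|E|$. One also has to check that, whichever class is the majority candidate, each grouping costs $\CO\big(|\Middleblock_B^i|\cdot\log|\Middleblock_B^i|\big)$ up to constants, whence the total grouping cost is $\CO(|E|\cdot\log|X|)$. Without this middle-block accounting your proof of item~2 is incomplete.
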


\noindent Bringing Sections~\ref{sec:cat},~\ref{sec:opti},
and~\ref{sec:efficient} together, take a coalgebra $\xi:X\to HX$ for a
zippable \Set-functor~$H$ with a given refinement interface where
$\op{init}$ and $\op{update}$ run in linear and comparison in
constant time. Instantiate
\autoref{catPT} with the $\op{select}$ routine from
{\autoref{exampleSelects}}.\ref{exampleSelectsChi}, making
$q_{i+1} = k_{i+1} \cdot \kappa_{P_i} = \chi_{S_{i}}^{C_{i}}$ with
$2\cdot |S|\le |C|$, $S_i,C_i\subseteq X$. Replace line \ref{step:P}
by
\begin{equation}
    \unnicefrac{X}{P_{i+1}} = \textsc{Split}(\unnicefrac{X}{P_{i}},
    \unnicefrac{X}{Q_{i}}, S_{i}\subseteq C_{i}).
    \tag{\ref{kernelOptimization}'}
\end{equation}
By \autoref{thmSplitCorrect} this is equivalent to
\eqref{kernelOptimization}, and hence, by
\autoref{optimizationSummary}, to the original line \ref{step:P},
since $\chi_{S_i}^{C_i}$ respects compound blocks. By
Lemmas~\ref{lemmaInit} and~\ref{lemmaTime}.\ref{lemmaTime2}, we have

\begin{theorem} 
    %Encode a coalgebra $\xi: X\to HX$ for a zippable $\Set$-functor as
    %edges $E \to X×A×Y$, let $\op{select}$ return $\chi_S^C$ and call
    %the $\textsc{Split}$ in line \ref{step:P} of \autoref{catPT}. 
  The above instance of \autoref{catPT} computes the quotient modulo
  behavioural equivalence of a given coalgebra in time
  $\CO((m + n)\cdot \log n )$,
  for $m=|E|$, $n=|X|$.
\end{theorem}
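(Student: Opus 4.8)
The plan is to obtain the statement by assembling the correctness and complexity facts already proved, since the genuinely hard work---the amortized analysis of splitting---is encapsulated in \autoref{lemmaTime}. Throughout I use that the instance fixes $\C=\Set$ and a coalgebra $\xi\colon X\to HX$, so that $Y=X$ and every bound phrased in terms of $|Y|$ becomes a bound in $n=|X|$. I would separate the argument into correctness and runtime, treating the runtime as the substantive part.

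For correctness I would simply verify the hypotheses of \autoref{optimizationSummary} for this instance. Instantiating \op{select} as in \autoref{exampleSelects}.\ref{exampleSelectsChi} yields in each iteration a splitter $q_{i+1}=\chi_{S_i}^{C_i}$ with $S_i\subseteq C_i$ and $2|S_i|\le|C_i|$; this routine respects compound blocks (\autoref{examplesRespectCompoundBlocks}), and it discards all new information only when no proper compound block is left, i.e.\ precisely when $P_i=Q_i$ and the loop has already terminated. Hence \op{select} never discards all new information inside the loop, \autoref{optimizationSummary} applies, and---using that \textsc{Split} realizes the optimized Step~\ref{step:P} of \eqref{kernelOptimization} by \autoref{thmSplitCorrect}---the algorithm terminates and returns the simple quotient, which is exactly the quotient modulo behavioural equivalence.

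For the runtime I would sum three contributions. Initialization costs $\CO(|E|+|X|\log|X|)$ by \autoref{lemmaInit}. The hypotheses of \autoref{lemmaTime} hold, as just noted ($2|S_i|\le|C_i|$, and $Q_{i+1}=\ker\fpair{\kappa_{Q_i},\chi_{S_i}^{C_i}}$ from Step~\ref{step:Q}), so all \textsc{Split} calls together run in $\CO(|E|\cdot\log|X|)$ by \autoref{lemmaTime}.\ref{lemmaTime2}. Finally, the per-iteration overhead of running \op{select} and updating $\unnicefrac{X}{Q_i}$ to $\unnicefrac{X}{Q_{i+1}}$---choosing a subblock $S_i$ of size at most $|C_i|/2$ and moving it out of its compound block---I would charge at $\CO(|S_i|)$ per step via the Paige--Tarjan ``process the smaller half'' discipline; by the first part of \autoref{lemmaTime} each $y\in X$ lies in some $S_i$ for at most $\log_2|X|$ indices, so $\sum_i|S_i|\le|X|\log_2|X|$ and this overhead is $\CO(|X|\log|X|)$. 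Adding the three terms and using $|E|\log|X|\ge|E|$ gives $\CO(|E|+|X|\log|X|+|E|\log|X|)=\CO((|E|+|X|)\log|X|)=\CO((m+n)\log n)$.

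I expect the only real obstacle to be this last, purely operational point: \autoref{lemmaTime}.\ref{lemmaTime2} accounts only for the \textsc{Split} calls, so one must argue separately that the surrounding loop machinery---selecting a half-sized subblock, maintaining the worklist of compound blocks, the size counters, and the element-to-block back-pointers of the two nested partitions $\unnicefrac{X}{P}$ and $\unnicefrac{X}{Q}$---runs in time proportional to $|S_i|$ at each step. Once that invariant on the data structures is in place, the amortization expressed by the first part of \autoref{lemmaTime} is reused verbatim, and the theorem reduces to the addition of the three bounds above.
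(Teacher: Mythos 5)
Your proposal is correct and follows essentially the same route as the paper, which likewise obtains correctness from \autoref{thmSplitCorrect} together with \autoref{optimizationSummary} (using that $\chi_{S_i}^{C_i}$ respects compound blocks and never discards all new information before termination) and the time bound by adding \autoref{lemmaInit} to \autoref{lemmaTime}.\ref{lemmaTime2}. Your extra accounting for the \op{select} and partition-maintenance overhead via the smaller-half discipline addresses a point the paper leaves implicit, and you handle it correctly.
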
 
\begin{example}\label{ex:comp}
\begin{enumerate}
\item For $H=\mathcal{I}×\Potf$, we obtain the classical
  Paige-Tarjan algorithm~\cite{PaigeTarjan87} (with initial partition
  $\mathcal{I}$), with the same complexity
  $\CO((m+n)\cdot \log n)$.

\item For $HX= \mathcal{I}×\R^{(X)}$, we solve Markov chain lumping
  with an initial partition $\mathcal{I}$ in time
  $\CO((m+n)\cdot \log n)$, like the best known algorithm (Valmari and Franceschinis~\cite{ValmariF10}).

\item For infinite $A$, we need to decompose the functor $\Potf (A \times (-))$
  for labelled transition systems into $\Potf$ and $A\times(-)$, and thus obtain
  run time $\CO(m\cdot \log m)$, like in \cite{DovierEA04} but slower than
  Valmari's $\CO(m\cdot \log n)$~\cite{Valmari09}. For fixed finite $A$, we run in time
  $\CO(m\log n)$.

\item Hopcroft's classical automata minimization~\cite{Hopcroft71} is
  obtained by $HX = 2×X^A$, with running time
  $\CO(n\cdot\log n)$ for fixed alphabet $A$. For non-fixed
  $A$ the best known complexity is $\CO(|A| \cdot
  n\cdot\log n)$~\cite{Gries1973,Knuutila2001}. By using
  decomposition into $2 \times \Potf$ and $\N \times
  (-)$ we obtain $\CO(|A| \cdot n\cdot\log n + |A|\cdot
  n\cdot\log|A|)$.

\item We quotient simple (resp.~general) Segala systems~\cite{Segala95} by
  bisimilarity after decomposition into three sorts
  (cf.~\autoref{segala-twosorted}). The time bound
  $\CO((n+m)\log(n+m))$ slightly improves on the previous bound
  \cite{BaierEM00}.

\end{enumerate}
\end{example}

\section{Conclusions and Future Work}
\label{sec:conc}
We have presented a generic algorithm that quotients coalgebras by
behavioural equivalence. We have started from a category-theoretic
procedure that works for every mono-preserving functor on a category
with image factorizations, and have then developed an improved
algorithm for \emph{zippable} endofunctors on $\Set$. Provided the
given type functor can be equipped with an efficient implementation of
a \emph{refinement interface}, we have finally arrived at a concrete
procedure that runs in time $\CO((m+n)\log n)$ where $m$ is the number
of edges and $n$ the number of nodes in a graph-based representation
of the input coalgebra. We have shown that this instantiates to (minor
variants of) several known efficient partition refinement algorithms:
the classical Hopcroft algorithm~\cite{Hopcroft71} for minimization of
DFAs, the Paige-Tarjan algorithm for unlabelled transition
systems~\cite{PaigeTarjan87}, and Valmari and Franceschinis's lumping
algorithm for weighted transition systems~\cite{ValmariF10}. Moreover,
we obtain a new algorithm for simple Segala systems that is
asymptotically faster than previous
algorithms~\cite{BaierEM00}. Coverage of Segala systems is based on
modularity results in multi-sorted
coalgebra~\cite{SchroderPattinson11}.

It remains open whether our approach can be extended to, e.g., the
monotone neighbourhood functor, which is not itself zippable and also
does not have an obvious factorization into zippable functors. We do
expect that our algorithm applies beyond weighted systems.
%\subparagraph*{Acknowledgements} The authors wish to thank Antti
%Valmari for providing his source code.
%Erwin R.\ Catesbeiana has offered suggestions on identifying non-bisimilar states.

\newpage
%using the factorization trick, it remains open how to
%handle functors that do not obviously factor through $\Set$:
% only needed to convince us that we cannot use it;
% mathematically is the same example as before

% 
% Bibliography
%
\bibliographystyle{plainurl}% the recommended bibstyle
\bibliography{refs}

%%%% BEGIN of appendix
\newpage
\renewcommand{\thesection}{\Alph{section}}
\makeatletter
\setcounter {section}{0}\setcounter {subsection}{0}\gdef \thesection {\@Alph \c@section }
\makeatother
\section{Omitted Details and Proofs}

\subsection*{Details for \autoref{sec:prelim}}

\begin{remark}\label{pullbackOfKernelPair}
  \begin{enumerate}[label=(\arabic*)]
%  \item Let $f$ be any morphism and let $f = m \cdot e$ be its
%    factorization into the regular epimorphism $e$ and the
%    monomorphism $m$. We will simply speak of \emph{(image)
%      factorizations}. We call the codomain of $e$ the \emph{image} of
%    $f$, and denote it by $\Im(f)$. Note that regular epis are closed
%    under composition and right
%    cancellation~\cite[Prop.~14.14]{joyofcats}.
% 
  \item The following \emph{diagonalization property} holds for image
    factorizations: given a commutative square $m\cdot f= g\cdot e$
    where $m$ is a monomorphism and $e$ a regular epimorphism, there
    exists a (necessarily unique) \emph{diagonal} $d$ such that
    $m \cdot d = g$ and $d \cdot e = f$. In particular, image
    factorizations are unique up to isomorphism, so $\Im(f)$ is
    well-defined up to isomorphism.

  \item For any object $X$ there is a bijective correspondence between
    kernels of morphisms $f$ with domain $X$ and regular quotients of
    $X$. Indeed, in one direction take the coequalizer of a given
    kernel pair $\ker f$. In the reverse direction, take the kernel of
    a given regular epimorphism $e: X \epito Y$
    (see~\cite[Prop.~11.22(2)]{joyofcats}).
    
  \item It follows that two morphisms $f: X \to Y$ and $g: X \to Y'$
    have the same kernel iff they have the same image:
    \[
      \Im f = \Im g \qquad\iff \qquad \ker f = \ker g.
    \]
    To see this, take the image factorizations of $f = m \cdot e$ and
    $g = m'\cdot e'$, respectively, and use that $\ker f = \ker e$ and
    $\ker g = \ker e'$. 
    %(As usual in category theory, the above equations are meant to
    %denote canonical isomorphisms.)

%  \item
%    The kernel $\ker f = \ker e$ is the internal equivalence relation
%    corresponding to the quotient $\Im(f)$.

  \item A \emph{relation} is a jointly monic parallel pair of
    morphisms $f,g: E\rightrightarrows X$ (not necessarily a kernel
    pair). We write $\kappa_E: X \twoheadrightarrow \unnicefrac{X}{E}$
    for their coequalizer; we refer to the object $\unnicefrac{X}{E}$ as
    the \emph{quotient} of $X$ modulo $E$, and to $\kappa_E$ as the
    \emph{quotient map}. Indeed, in $\Set$, $\unnicefrac{X}{E}$ is the usual
    quotient of $X$ modulo the equivalence relation generated by
    $\{(fx,gx) \mid x \in E\}$. When $f$ and $g$ and $X$ are clear
    from the context we just write the object $E$ for the relation.

  \item We say that a morphism $f: X\to Y$ is \emph{well-defined} on
    (the equivalence classes of) a relation
    $\pi_1, \pi_2: E \rightrightarrows X$ if
    \[
    \begin{mytikzcd}
        E \arrow{r}{\pi_1}
        \arrow{d}{\pi_2}
        & X \arrow{d}{f}
        \\
        X \arrow{r}{f}
        & Y
      \end{mytikzcd}
    \]
    commutes. Then by the universal property of
    $\kappa_E: X \to \unnicefrac{X}{E}$ we obtain a unique morphism
    $f': \unnicefrac{X}{E}\to Y$ such that $f = f' \cdot \kappa_E$. In
    $\Set$, this is the usual well-definedness of the map~$f$ on the
    equivalence classes in $\unnicefrac{X}{E}$ witnessed by the map
    $f'$.

  \item Following the standard terminology in $\Set$, we say that a
    quotient $\unnicefrac{X}{E_1}$ is \emph{finer} than (or a
    \emph{refinement} of) another quotient $\unnicefrac{X}{E_2}$ if the
    quotient map $\kappa_{E_2}$ is well-defined on~$E_1$. This induces
    a refinement relation on kernels, described as follows:
    $\pi_1, \pi_2: E \rightrightarrows X$ is \emph{finer} than
    $\pi_1', \pi_2': E' \rightrightarrows X$ if there exists a
    morphism $m: E \to E'$ (necessarily monic) such that
    $\pi_i' \cdot m = \pi_i$, $i = 1,2$.
  \end{enumerate}
\end{remark}

\begin{lemma}\label{kernelMono} We have $\ker(m\cdot f) = \ker f$ for
  every $f: X \to Y$ and every monic $m: Y \monoto Z$.
\end{lemma}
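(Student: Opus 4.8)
The plan is to argue directly from the universal property of the kernel as a pullback (\autoref{pullbackNotation}). Recall that $\ker f$ is the pullback of $f$ along itself, hence characterised up to isomorphism as the universal relation $\pi_1,\pi_2: \ker f\rightrightarrows X$ satisfying $f\cdot\pi_1 = f\cdot\pi_2$; concretely, a pair of morphisms $a,b: D\to X$ factors through $\ker f$ --- necessarily uniquely, since the projections of a kernel pair are jointly monic --- precisely when $f\cdot a = f\cdot b$. The analogous statement holds for $\ker(m\cdot f)$, with the factorisation condition now reading $(m\cdot f)\cdot a = (m\cdot f)\cdot b$.

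The key step is the observation that these two conditions coincide: since $m$ is a monomorphism, $m\cdot(f\cdot a) = m\cdot(f\cdot b)$ holds if and only if $f\cdot a = f\cdot b$. Thus $\ker f$ and $\ker(m\cdot f)$ enjoy exactly the same universal property and are therefore isomorphic as jointly monic spans into $X$; that is, $\ker(m\cdot f) = \ker f$, as claimed.

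There is essentially no real obstacle here, as the statement is routine; the only point requiring a little care is to record that ``the same universal property'' yields an \emph{equality of relations} and not merely an isomorphism of the underlying objects. Alternatively, one could deduce the lemma from the image characterisation in \autoref{pullbackOfKernelPair}: taking an image factorisation $f = m'\cdot e$ with $e$ a regular epi and $m'$ monic, the composite $m\cdot f = (m\cdot m')\cdot e$ is again an image factorisation, as monomorphisms are closed under composition, whence $\Im(m\cdot f) = \Im f$ and so $\ker(m\cdot f) = \ker f$.
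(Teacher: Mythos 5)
Your proposal is correct and follows essentially the same route as the paper, which also verifies directly that $\ker f$ satisfies the universal property of the pullback of $m\cdot f$ along itself, using monicity of $m$ to pass between the conditions $f\cdot a = f\cdot b$ and $m\cdot f\cdot a = m\cdot f\cdot b$. The alternative argument via image factorizations that you sketch at the end is also valid but is not needed.
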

\begin{proof}
    This can be shown by checking directly that $\ker f$ is the kernel
    of $m\cdot f$ and using that $m$ is monic.
\end{proof}

\begin{lemma} \label{lem:regepi-pushout}
  $\C$ has pushouts of regular epimorphisms (i.e.\ of spans containing
  at least one regular epimorphism).
\end{lemma}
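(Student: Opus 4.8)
The plan is to realise the pushout directly as a quotient of one of the two objects, rather than via the usual ``coproduct-then-coequalizer'' recipe, which is unavailable here because $\C$ is not assumed to have coproducts. Since pushouts are symmetric in the two legs of the span, I may assume without loss of generality that the span is $B \xleftarrow{f} A \xrightarrow{e} C$ with $e$ a regular epimorphism, and I aim to build its pushout as a suitable quotient of $B$.

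First I would form the kernel pair $k_1,k_2\colon \ker e \rightrightarrows A$ of $e$, which exists since $\C$ is finitely complete, and then let $q\colon B \epito P$ be the coequalizer of the pair $f\cdot k_1,\; f\cdot k_2\colon \ker e \rightrightarrows B$, which exists since $\C$ has coequalizers. Intuitively $P$ is $B$ with $f(a)$ and $f(a')$ glued whenever $e(a)=e(a')$. To supply the second leg of the cocone, I would observe that $q\cdot f$ coequalizes $k_1,k_2$ (since $q$ coequalizes $f\cdot k_1,\,f\cdot k_2$); because $e$ is the coequalizer of its own kernel pair, by the bijective correspondence established in Remark~\ref{pullbackOfKernelPair}, there is then a unique $p\colon C\to P$ with $p\cdot e = q\cdot f$. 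This exhibits $(P,q,p)$ as a cocone on the span.

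It then remains to check the universal property. Given any competing cocone $u\colon B\to T$, $v\colon C\to T$ with $u\cdot f = v\cdot e$, I would note that $u\cdot f\cdot k_1 = v\cdot e\cdot k_1 = v\cdot e\cdot k_2 = u\cdot f\cdot k_2$, using $e\cdot k_1 = e\cdot k_2$; hence $u$ coequalizes $f\cdot k_1,\,f\cdot k_2$ and factors uniquely as $u = w\cdot q$ through the coequalizer $q$. The remaining identity $w\cdot p = v$ follows from $w\cdot p\cdot e = w\cdot q\cdot f = u\cdot f = v\cdot e$ by cancelling the (regular, hence ordinary) epimorphism $e$; and $w$ is unique since $q$ is epic.

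The only genuinely non-routine step is the recognition that the pushout can be constructed as a quotient of $B$ alone, whose correctness hinges entirely on the fact that a regular epimorphism is the coequalizer of its own kernel pair (Remark~\ref{pullbackOfKernelPair}); once the candidate $(P,q,p)$ is in place, the verification is a short diagram chase. Notably, this argument uses only finite completeness and coequalizers, and no appeal to image factorizations is required.
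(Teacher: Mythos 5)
Your proof is correct and follows essentially the same route as the paper's: form the kernel pair of the regular epi $e$, coequalize its composites with the other leg $f$ to obtain $q$, induce the second leg $p$ from the fact that $e$ coequalizes its own kernel pair, and verify the universal property by factoring a competitor through $q$ and cancelling the epi $e$. The paper's argument is identical up to notation, so there is nothing to add.
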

\begin{proof}
  Let $X\xleftarrow{e}Y\xrightarrow{h}W$ be a span, with $e$ a
  regular epi. Let $(\pi_1,\pi_2)$ be the kernel pair of $e$, and
  let $q:W\to Z$ be the coequalizer of $h\pi_1$ and $h\pi_2$
  (both exist by our running assumptions). Then $e$ is the
  coequalizer of $\pi_1,\pi_2$, so that there exists $r:X\to Z$ such
  that $re=qh$. We claim that
  \begin{equation*}
    \begin{mytikzcd}
      Y \arrow[->>]{r}[above]{e} \arrow{d}[left]{h} & 
      X\arrow{d}[right]{r} \\
      W\arrow[->>]{r}[below]{q} & Z
    \end{mytikzcd}
  \end{equation*}
  is a pushout. Uniqueness of mediating morphisms is clear since $q$
  is epic; we show existence. So let
  $W\xrightarrow{g} U \xleftarrow{f} X$ be a competitor, i.e.\
  $fe=gh$. Then $gh\pi_1=fe\pi_1=fe\pi_2=gh\pi_2$, so by
  the coequalizer property of $q$ we obtain $k:Z\to U$ such that
  $kq=g$. It remains to check that $kr=f$. Now
  $kre=kqh=gh=fe$, which implies the claim because $e$ is
  epic.
\end{proof}

\begin{lemma}\label{lem:coalg-colims}
  $\Coalg(H)$ has all coequalizers and pushouts of regular
  epimorphisms.
\end{lemma}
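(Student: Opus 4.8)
The plan is to reduce everything to the fact, already recalled in the preliminaries, that the forgetful functor $U\colon \Coalg(H)\to\C$ creates colimits. Recall what creation means here: for a diagram $D\colon I\to\Coalg(H)$ whose underlying diagram $UD$ has a colimit $(C,(\iota_i)_i)$ in $\C$, there is a unique coalgebra structure $c\colon C\to HC$ making every $\iota_i$ a coalgebra morphism, and $((C,c),(\iota_i)_i)$ is then a colimit of $D$ in $\Coalg(H)$. The structure $c$ arises from the universal property of the colimit $C$ applied to the cocone $(H\iota_i\cdot c_i)_i$, where $c_i$ is the structure of $D(i)$; the point I would stress is that this uses only the universal property of the colimit, and in particular requires no preservation of colimits by $H$. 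I would first record the consequence that $U$ preserves coequalizers, and hence that every regular epimorphism in $\Coalg(H)$ is carried by a regular epimorphism in $\C$: a regular epi $e$ is by definition a coequalizer of some parallel pair of coalgebra morphisms, and applying the (colimit-preserving) functor $U$ exhibits $Ue$ as the coequalizer of the underlying pair.

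For coequalizers, given a parallel pair $f,g\colon (C,c)\rightrightarrows(D,d)$ in $\Coalg(H)$, the underlying pair $Uf,Ug\colon C\rightrightarrows D$ has a coequalizer in $\C$ by \autoref{ass:C}. Since $U$ creates colimits, this coequalizer lifts uniquely to a coequalizer of $f,g$ in $\Coalg(H)$.

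For pushouts of regular epimorphisms, I would take a span $(D,d)\xleftarrow{e}(C,c)\xrightarrow{h}(W,w)$ in $\Coalg(H)$ with $e$ a regular epimorphism. By the observation above, $Ue$ is a regular epimorphism in $\C$, so the underlying span $D\xleftarrow{Ue}C\xrightarrow{Uh}W$ again contains a regular epimorphism. By \autoref{lem:regepi-pushout} this span has a pushout in $\C$, and creation of colimits by $U$ lifts it to a pushout of $e$ and $h$ in $\Coalg(H)$.

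The only mildly delicate point is the link between regular epimorphisms in $\Coalg(H)$ and in $\C$, which is exactly what lets me invoke \autoref{lem:regepi-pushout} for the lifted span; once this is in place the statement is an immediate application of colimit creation to the two diagram shapes at hand. I do not anticipate a genuine obstacle, since the engine of the argument—that $U$ creates colimits regardless of the behaviour of $H$—is already available from the preliminaries.
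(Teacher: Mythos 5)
Your proof is correct and follows exactly the paper's argument: the paper's own proof is the one-line observation that the forgetful functor $\Coalg(H)\to\C$ creates all colimits, so the claim follows from \autoref{ass:C} and \autoref{lem:regepi-pushout}. You merely spell out the details the paper leaves implicit, in particular the (correct and worth noting) point that a regular epimorphism in $\Coalg(H)$ is carried by a regular epimorphism in $\C$, which is what licenses the appeal to \autoref{lem:regepi-pushout}.
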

\begin{proof}
Since the forgetfulfunctor $\Coalg(H) \to \C$ creates all colimits, the
statement follows directly by our running assumptions and
Lemma~\ref{lem:regepi-pushout}.
\end{proof}

\paragraph*{Proof of \autoref{lemmaSimple}}
Uniqueness up to isomorphism means:
\begin{lemma}
  Let $(C,c)$ be a coalgebra, and let $e_i:(C,c)\to(D_i,d_i)$,
  $i=1,2$, be quotients with $(D_i,d_i)$ simple. Then $(D_1,d_1)$ and
  $(D_2,d_2)$ (more precisely the quotients $e_1$ and $e_2$) are
  isomorphic.
\end{lemma}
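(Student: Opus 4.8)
The plan is to realize both quotients as image factorizations of a single coalgebra morphism and then invoke uniqueness of image factorizations. Concretely, I would first form the pushout of the two regular epimorphisms $e_1$ and $e_2$ in $\Coalg(H)$; this exists because $\C$, and hence $\Coalg(H)$, has pushouts of spans containing a regular epimorphism. Write $f_i\colon (D_i,d_i)\to (P,p)$ for the pushout injections, so that $f_1\cdot e_1 = f_2\cdot e_2 =: g$ in $\Coalg(H)$.

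The first key observation is that each $f_i$ is carried by a monomorphism in $\C$. Indeed, $f_i$ is a coalgebra morphism whose domain $(D_i,d_i)$ is simple, and by the characterization of simplicity (every coalgebra morphism with simple domain is carried by a mono) it follows that $f_i$ is monic. Hence, for each $i$, the equation $g = f_i\cdot e_i$ exhibits a factorization of $g$ into the regular epimorphism $e_i$ followed by the monomorphism $f_i$. Here I use \autoref{ass:sec3}, i.e.\ that $H$ preserves monos, so that the image-factorization structure lifts from $\C$ to $\Coalg(H)$ and all morphisms involved are genuine coalgebra morphisms rather than mere maps in $\C$.

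Thus $g$ has two image factorizations, one with intermediate object $(D_1,d_1)$ and one with $(D_2,d_2)$. Since image factorizations are unique up to isomorphism — by the diagonalization property of the (regular epi, mono) factorization system — there is an isomorphism $\phi\colon (D_1,d_1)\to (D_2,d_2)$ of coalgebras with $\phi\cdot e_1 = e_2$ (and $f_2\cdot\phi = f_1$). Concretely, $\phi$ arises as the diagonal of the commuting square $f_2\cdot e_2 = f_1\cdot e_1$, and its inverse as the diagonal of the symmetric square; the two diagonals compose to identities because $e_1$ and $e_2$ are epic. This $\phi$ witnesses that $e_1$ and $e_2$ are isomorphic quotients, which is exactly the claim.

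The only substantial step is the second paragraph: translating the abstract definition of simplicity into the concrete statement that the pushout legs $f_1,f_2$ are monic, and confirming that the image-factorization argument takes place inside $\Coalg(H)$ and not merely in $\C$. Once both $f_i$ are known to be monomorphisms, the conclusion is purely formal, resting on uniqueness of image factorizations. I therefore expect the main obstacle to be bookkeeping — verifying existence of the pushout of regular epis in $\Coalg(H)$ and that its legs are coalgebra morphisms with the claimed monic carriers — rather than any genuinely difficult argument.
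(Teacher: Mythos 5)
Your proof is correct, and it shares the paper's starting point---form the pushout $D_1 \xrightarrow{f_1} E \xleftarrow{f_2} D_2$ of $e_1$ and $e_2$ in $\Coalg(H)$, which exists by \autoref{lem:coalg-colims}---but it concludes differently. The paper observes that regular epimorphisms are stable under pushout (this is what the construction in \autoref{lem:regepi-pushout} yields, and it is created in $\Coalg(H)$), so each $f_i$ is itself a quotient of the simple coalgebra $(D_i,d_i)$ and hence an isomorphism directly from the definition of simplicity; no factorization of morphisms inside $\Coalg(H)$ and no hypothesis on $H$ is needed. You instead invoke the equivalent characterization of simplicity (every coalgebra morphism out of a simple coalgebra is carried by a mono) to make each $f_i$ monic, and then appeal to uniqueness of (regular epi, mono) factorizations of $g = f_1 e_1 = f_2 e_2$. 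That route is sound, but both of its ingredients---the characterization of simplicity you quote and the lifting of image factorizations to $\Coalg(H)$---genuinely require \autoref{ass:sec3}, i.e.\ that $H$ preserves monomorphisms, which you correctly flag; the paper's argument does not need this, which matters because \autoref{lemmaSimple} is stated in the preliminaries before that assumption is imposed (though it is harmless over $\Set$, where mono-preservation is w.l.o.g.). A small economy you leave on the table: once you know each $f_i$ is both monic and the pushout of a regular epi---hence itself a regular epi---it is an isomorphism outright, with no need for the diagonal-fill-in argument.
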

\begin{proof}
  By Lemma~\ref{lem:coalg-colims}, there is a pushout
  $D_1\xrightarrow{f_1}E\xleftarrow{f_2}D_2$ of
  $D_1\xleftarrow{e_1}C\xrightarrow{e_2}D_2$ in $\Coalg(H)$. Since
  regular epimorphisms are generally stable under pushouts,
  % Check this:
  % fh=ge pushout, e regepi, coeq of p,q: claim
  % that f is coeq of hp,hq. So let rhp = rhq. 
  % then have k s.t. rh = ke; so r,k competitor of f,g
  % hence have s s.t. sf=r (and sg=k). So s is
  % the required mediating map. f is epi
  % since epis are stable under pushout, so s is unique.
  $f_1$ and $f_2$ are regular
  epimorphisms, hence isomorphisms because $D_1$ and $D_2$ are simple;
  this proves the claim.
\end{proof}

\paragraph*{Behavioural equivalence between coalgebras}
\begin{remark}
  Using elementwise notation for intuition, `elements' $x\in C$ and
  $y\in D$ of coalgebras $(C,c)$ and $(D, d)$ are \emph{behaviourally
    equivalent} (written $x \sim y$) if they can be merged by coalgebra
  morphisms: $x \sim y$ iff there exists a coalgebra $(E,e)$ and
  coalgebra morphisms $f:(C,c)\to (E,e)$, $g: (D,d) \to (E,e)$ such that $f(x)=g(y)$. Under
  our running assumptions, any two behaviourally equivalent elements
  can be identified under a regular quotient, so that a simple
  quotient of a coalgebra already identifies all behaviourally
  equivalent elements: Reformulated in proper categorical terms, we
  claim that every pullback of two coalgebra morphisms
  $f,g:(C,d)\to (D,d)$ is contained in the kernel pair of some
  morphism $e:(C,d)\to (E,e)$.  Indeed, by
  Lemma~\ref{lem:coalg-colims} we can take $e=qf=qg$ where $q$ is the
  coequalizer of $f$ and $g$ in $\Coalg(H)$.
  
  A \emph{final coalgebra} is a terminal object in the category of
  coalgebras, i.e.\ a coalgebra $(C,c)$ such that every coalgebra
  $(D,d)$ has a unique coalgebra morphism into $(C,c)$.  There are
  reasonable conditions under which a final coalgebra is guaranteed to
  exist, e.g.\ when $\C$ is a locally presentable category (in
  particular, when $\C=\Set$) and $H$ is accessible. If $(C,c)$ is a
  final coalgebra and $H$ preserves monos, then we can describe the
  simple quotient of a coalgebra $(D,d)$ as the image of $(D,d)$ under
  the unique morphism into $(C,c)$; in particular, in this case every
  coalgebra has a simple quotient.
 % , e.g.~for the type functor $\Pot(A\times -)$ of labelled
% transitions systems, behavioural equivalence is (Milner's classical
% notion of) bisimilarity.
\end{remark}

\subsection*{Details for \autoref{sec:cat}}

\paragraph*{Notes on \autoref{ass:sec3}}
For $\C = \Set$, the assumption that $H$ preserves monos is
w.l.o.g. First note that every endofunctor on sets preserves non-empty
monos. Moreover, for any set functor $H$ there exists a set functor
$H'$ that is naturally isomorphic to $H$ on the full subcategory of
all non-empty sets~\cite[Theorem~3.4.5]{AdamekT90}, and hence has
essentially the same coalgebras as $H$ since there is only one
coalgebra structure on $\emptyset$.

\paragraph*{Proof of Lemma~\ref{lem:inc}}
\begin{enumerate}
\item $P_{i+1}$ finer than $P_i$ and $Q_{i+1}$ finer than $Q_i$:
  Let $p: \prod_{j\le i+1} K_j \to \prod_{j\le i} K_j$ be the product
  projection. Clearly we have $\bar q_i = p \cdot \bar q_{i+1}$ and
  therefore, for the kernel pair
  $\pi_1,\pi_2: Q_{i+1} \rightrightarrows X$ we clearly have
  \[
  \bar q_i \cdot \pi_1 = p \cdot \bar q_{i+1} \cdot \pi_1 = p \cdot
  \bar q_{i+1} \cdot \pi_2 = \bar q_i \cdot \pi_2.
  \]
  Hence, we obtain a unique $Q_{i+1} \to Q_i$ commuting with the
  projections of the kernel pairs.

  Similarly, for the kernel pair
  $\pi_1, \pi_2: P_{i+1} \rightrightarrows X$ we have
  \[
  (H\bar q_i \cdot \xi) \cdot \pi_1 = H p \cdot H\bar q_{i+1} \cdot
  \xi \cdot \pi_1 = H p \cdot H\bar q_{i+1} \cdot \xi \cdot \pi_2 =
  (H\bar q_i \cdot \xi) \cdot \pi_2.
  \]
  Thus, there exists a unique morphism $P_{i+1} \to P_i$ commuting
  with the kernel pair projections.

\item $P_i$ finer than $Q_{i+1}$:
%\begin{proof}
  Induction on $i$. Since $Q_{i+1}=\ker\bar q_{i+1}$ and
  $\bar q_{i+1}=\fpair{q_0,\dots,q_{i+1}}$, it suffices to show that
  $P_i$ is finer than $\ker q_j$ for $j=0,\dots,i+1$. For $j\le i$, we
  have by Lemma~\ref{lem:inc} that $P_i$ is finer than $P_j$, which is
  finer than $\ker q_j$ by induction. Moreover, $P_i$ is finer than
  $\ker q_{i+1}$ because $q_{i+1}$ factors through
  $X\to\unnicefrac{X}{P_i}$ by construction.
%\end{proof}
\end{enumerate}

\paragraph*{Proof of \autoref{propQuot}}
%\begin{proof}
  Since $Q_i=\ker\bar q_i$, the image factorization of $\bar q_i$ has
  the form $\bar q_i=m \cdot \kappa_{Q_i}$. By definition of $P_i$ and
  since $H$ preserves monos, we thus have
  $P_i = \ker(H\bar q_i\cdot \xi)=\ker(H\kappa_{Q_i}\cdot\xi)$, and
  hence obtain $\unnicefrac{\xi}{Q_i}$ as in~\eqref{eq:xiQuotient} by
  the coequalizer property of $\kappa_{P_i}$. % and ,
  % $H\kappa_{Q_i}\cdot\xi$ is well-defined on~$P_i$. Thus their
  % coequalizer $\kappa_{P_i}$ induces the desired morphism:
  %   \[
  %   \begin{mytikzcd}[baseline = (bot.base)]
  %       P_i
  %           \arrow[shift left=1]{r}{\pi_1}
  %           \arrow[shift right=1]{r}[swap]{\pi_2}
  %       & X
  %           \arrow{r}{\xi}
  %           \arrow[->>]{d}[left]{\kappa_{P_i}}
  %       & HX
  %           \arrow[->]{d}{H\kappa_{Q_i}}
  %           \arrow{dr}{H\bar q_i}
  %       \\
  %       & |[alias=bot]| \unnicefrac{X}{P_i}
  %           \arrow[dashed]{r}{\unnicefrac{\xi}{Q_i}}
  %       & H \unnicefrac{X}{Q_i}
  %           \arrow[>->]{r}{Hm}
  %       & H \prod_{j\le i} K_j
  %   \end{mytikzcd}
  %   \qedhere
  %   \]
%\end{proof}

\paragraph*{Proof of \autoref{soundness}}
%\begin{proof}
  We claim that
  \begin{equation}\label{eq:hq-to-hp}
    \text{if $\ker h$ is finer than $Q_i$, then $\ker h$ is finer than $P_i$}.
  \end{equation}
  This is seen as follows: If $\ker h$ is finer than $Q_i$, then
  $\kappa_{Q_i}:X\to\unnicefrac{X}{Q_i}$ factors through $h:X\to D$, so
  that $H\kappa_{Q_i}\cdot\xi$ factors through $Hh\cdot\xi$ and hence
  through $h$, since $Hh\cdot\xi=d\cdot h$. Since
  $P_i=\ker(H\kappa_{Q_i}\cdot\xi)$, this implies that $\ker h$ is
  finer than $P_i$.  The claim of the lemma is then proved by
  induction: for $i=0$, the claim for $Q_0=X\times X$ is trivial, and
  the one for $P_0$ follows by~\eqref{eq:hq-to-hp}. The inductive step
  is by \autoref{PfinerthanQ} and~\eqref{eq:hq-to-hp}.
%\end{proof}

\paragraph*{Proof of \autoref{correctness}}
%\begin{proof}
  Let $h: (\unnicefrac{X}{Q_i},\unnicefrac{\xi}{Q_i}) \to (D,d)$ be a
  quotient. Then $h\cdot \kappa_{Q_i}: (X,\xi) \to (D,d)$ is a
  quotient of $(X,\xi)$, so by \autoref{soundness}, $\ker( h\cdot\kappa_{Q_i})$
  is finer than $Q_i$. Of course, $Q_i$ is also
  finer than $\ker( h\cdot\kappa_{Q_i})$, so~$h$ is an isomorphism.
%\end{proof}

\paragraph*{Proof of \autoref{finalchain}}
    If we have
    $k_{i+1} := \id_{\unnicefrac{X}{P_i}}$, then the inclusions
    $P_i\monoto Q_{i+1}$
    become isomorphisms: we have
    $q_{i+1} = \kappa_{P_i}:X\to \unnicefrac{X}{P_i}$ for all $i$, so
    the $q_i$ successively refine each other, so that
    $Q_{i+1}=\ker\bar q_{i+1}=\ker\fpair{q_0,\dots,q_{i+1}}=\ker
    q_{i+1}=P_i$.

    We show $Q_i = \ker \xi^{(i)}$ for $i\ge 0$
    by induction on $i$, with trivial base case. For the
    inductive step, first note that from $\ker\bar q_{i}=\ker q_{i}$
    and because $q_{i}$ is a regular epi, we obtain a mono $m$ such
    that $\bar q_{i+1}=m q_{i+1}$; similarly, the inductive hypothesis
    implies that we have a mono $n$ such that $\xi^{(i)}=n q_i$. Since
    $H$ preserves monomorphisms, this implies that
    \begin{equation*}
      Q_{i+1}=P_i=\ker(H\bar q_i\xi)=\ker(H
      q_i\xi)=\ker(H\xi^{(i)}\xi)=\ker(\xi^{(i+1)}).
    \end{equation*}

\paragraph*{Proof of \autoref{noprogress}}

First note that \op{select} does not retain any new information in
$k_{i+1}$ iff $q_{i+1}=k_{i+1}\kappa_{P_i}$ factors through
$f_i\kappa_{P_i}=\kappa_{Q_i}$. Now we reason as follows:
$Q_i=\ker\bar q_i$ is finer than
$Q_{i+1}=\ker\fpair{\bar q_i,q_{i+1}}$ iff $Q_i$ is finer than
$\ker q_{i+1}$ iff $q_{i+1}$ factors through $\kappa_{Q_i}$.

\subsection*{Details for \autoref{sec:opti}}

\paragraph*{Proof of \autoref{lem:closure}}
  Let $F, G$ be endofunctors.

(1)~Suppose that both $F$ and $G$ are zippable. To see that $F \times
G$ is zippable one uses that monos are closed under products:
\[
\begin{mytikzcd}[column sep=15mm,row sep=5mm]
    F(A+B) × G(A+B)
    \ar[r,>->,"\op{unzip}_{F,A,B} × \op{unzip}_{G,A,B}"{yshift=2mm}]
    \ar[to path={
            |- (\tikztotarget) \tikztonodes
        },
        rounded corners,
        ]{dr}[below,near end]{\op{unzip}_{F×G,A,B}}
    &
    F(A+1) × F(1+B)
    ×G(A+1) × G(1+B)
    \ar[phantom,d,"\cong" {sloped}]
    \\
    &
    \big(F(A+1)
    ×G(A+1)\big)
    × \big(F(1+B)
    × G(1+B)\big)
\end{mytikzcd}
\]

(2)~Suppose again that $F$ and $G$ are zippable. To see that $F+G$ is zippable consider the diagram below: 
\[
\begin{mytikzcd}[column sep=15mm,row sep=5mm]
    F(A+B) + G(A+B)
    \ar[r,>->,"\op{unzip}_{F,A,B} + \op{unzip}_{G,A,B}"{yshift=2mm}]
    \ar[to path={
            |- (\tikztotarget) \tikztonodes
        },
        rounded corners,
        ]{dr}[below,near end]{\op{unzip}_{F×G,A,B}}
    &
    \big(F(A+1) × F(1+B)\big)
    + \big(G(A+1) × G(1+B)\big)
    \ar[>->,d,"\fpair{(\pi_1+\pi_1), (\pi_2+\pi_2)}" {right}]
    \\
    &
    \big(F(A+1)
    +G(A+1)\big)
    × \big(F(1+B)
    + G(1+B)\big)
\end{mytikzcd}
\]
The horizontal morphism is monic since monos are closed under
coproducts in $\C$. 
% (m+n)f=(m+n)g: If f(x) = inl y then (m+n) f(x)= inl m y
% if g(x) = inr y then (m + n)
The vertical morphism is monic since for any sets
$A_i$ and $B_i$, $i = 1, 2$, the following morphism clearly is a
monomorphism:
\[
  (A_1×B_1) + (A_2×B_2)
  \xrightarrow{\fpair{(\pi_1+\pi_1), (\pi_2+\pi_2)}}
  (A_1+A_2) × (B_1+B_2)
\]

(3)~Suppose now that $F$ is a subfunctor of $G$ via $s: F
\rightarrowtail G$, where $G$ is zippable. Then the following diagram
shows that $F$ is zippable, too:
\[
    \begin{mytikzcd}[column sep = 2cm]
        F(A+B)
            \arrow{r}{\op{unzip}_{F,A,B}}
            \arrow[>->]{d}[left]{s_{A×B}}
            \arrow[>->,
                  to path={
                    -- ([xshift=-4mm]\tikztostart.west)
                    |- ([yshift=-4mm]\tikztotarget.south)
                    -- (\tikztotarget)
                  },
                  rounded corners,
                  ]{dr}{}
        & F(A+1) × F(1+B)
            \arrow{d}{s_{A+1} × s_{1+B}}
        \\
        G(A+B)
            \arrow[>->]{r}{\op{unzip}_{G,A,B}}
        & G(A+1) × G(1+B)
    \end{mytikzcd}
\]

\paragraph*{Proof of \autoref{lem:additive}}
  Let $\alpha:H(X+Y) \rightarrowtail HX × HY$ be componentwise
  monic. Then the square
    \[
    \begin{mytikzcd}[column sep = 3cm]
        H(A+B) \ar[d,>->,"\alpha_{A,B}"'] \ar[r,"\op{unzip}"]
        & H(A+1)×H(1+B)
          \ar[d,"\alpha_{A,1}×\alpha_{1,B}"]
        \\
        HA×HB
          \ar[r,>->,"\fpair{HA×H!,H!×HB}"]
        & (HA×H1)×(H1×HB)
    \end{mytikzcd}
    \]
    commutes, by naturality of $\alpha$ in each of the components. The
    bottom morphism is monic because it has a left inverse,
    $\pi_1×\pi_2$. Therefore, $\op{unzip}$ is monic as well.

\begin{remark}\label{rem:extensive}
  Out of the above results, only zippability of the identity and
  coproducts of zippable functors depend on our assumptions on $\C$
  (see~\autoref{ass:C}). Indeed, zippable functors are closed under
  coproducts as soon as monomorphism are closed under coproducts,
  which is satisfied in most categories of interest. Zippability of
  the identity holds whenever $\C$ is extensive, i.e.~it has
  well-behaved set-like coproducts. Formally, a category is
  extensive~\cite{clw93} if it has finite coproducts and pullbacks
  along coproduct injections such that coproducts are
  \begin{enumerate}[label=(\arabic*)]
  \item \emph{disjoint}, i.e., coproduct injections are monomorphic
    and the pullback of distinct coproduct injections is $0$ (the
    initial object),
  \item \emph{universal}, i.e., the pullbacks of a morphism
    $h: Z \to A+B$ along the coproduct injections, yields a
    coproduct $Z = X + Y$ and $h = f + g$:
    \[
      \begin{mytikzcd}
        X \ar[r, "x"]
        \ar[d, "f"]
        & Z
        \ar[d, "h"]
        &\ar[l, "y"]
        Y 
        \ar[d, "g"]
        \\
        A
        \ar[r, "\inl"]
        & 
        A + B
        &
        B
        \ar[l, "\inr"]
      \end{mytikzcd}
    \]
  \end{enumerate}
  In an extensive category coproducts commute with pullbacks, and
  therefore monomorphisms are closed under coproducts. 

  Examples of extensive categories are the categories of sets, posets and
  graphs as well as any presheaf category. In addition, the categories
  of unary algebras and of J\'onsson-Tarski algebras (i.e.~algebras
  $A$ with one binary operation $A \times A \to A$ that is an
  isomorphism) are extensive. More generally, any topos is
  extensive.

  The category of monoids is not extensive.
\end{remark}

% Then we have the
%   following equivalence:
%     \[
%     \begin{mytikzcd}
%     X
%         \arrow[->>]{r}{\kappa_{P_i}}
%     & \unnicefrac{X}{P_i}
%         \arrow[->>]{r}{f_i}
%         \arrow{d}[swap]{k_{i+1}}
%     & \unnicefrac{X}{Q_i}
%         \arrow[dashed]{dl}{\exists k'}
%     \\
%     & K_{i+1}
%     \end{mytikzcd}
%     \quad\text{iff}\quad
%     Q_i = Q_{i+1}
%     \]
%     For $\Rightarrow$, using $f_i \cdot \kappa_{P_i} = \kappa_{Q_i}$,
%     we have
%     $\ker\fpair{\bar q_i, q_{i+1}} = \ker\fpair{\bar q_i, k'\cdot
%       \kappa_{Q_i}} = \ker{\bar q_i}$, where the second equation holds
%     since $k'\cdot \kappa_{Q_i}$ is clearly well-defined for
%     $Q_i$. For $\Leftarrow$, if
%     $Q_i = \ker \fpair{\bar q_i, k_{i+1}\cdot \kappa_{P_i}}$, then
%     $q_{i+1}$ is well-defined on $Q_i$ and induces $k'$ such that
%     $k \cdot f_i\cdot\kappa_{P_i} = q_{i+1} = k_{i+1} \cdot
%     \kappa_{P_i}$, whence the above triangle commutes using that
%     $\kappa_{P_i}$ is epic.

\paragraph*{Details for \autoref{non-zippable}}
The following example shows that the optimized algorithm is not correct for the
non-zippable functor $\Potf\Potf$. The $\op{select}$ routine here $\chi_S^C$ even
fulfills the latter assumption that $\op{select}$ respects compounds blocks
(\autoref{defRespectCompounds}).
\begin{example}
Consider the following coalgebra $\xi: X\to HX$ for $HX = 2×\Potf\Potf X$:
\begin{center}
\begin{tikzpicture}[
        n/.style = {
               execute at begin node=\(,%
               execute at end node=\),%
               inner sep = 1mm,
        },
    ]
    \begin{scope}[
        grow'                   = right,
        sibling distance        = 7mm,
        level distance          = 12mm,
        edge from parent/.style = {
            draw,
            %-latex,
            ->,
            shorten <= 1pt,
            shorten >= 1pt,
        },
        level 1/.append style = {
            sibling distance = 12mm,
        },
        level 2/.append style = {
            sibling distance = 7mm,
        },
        every node/.style       = {n,font=\footnotesize},
        sloped,
        dummy/.style    = {circle,
                           draw=black,
                           fill=black,
                           inner sep=1pt,
                           outer sep=1pt,
                           minimum width=0,
                           minimum height=0,
        },
        final/.style    = {
            circle,
            draw=lipicsYellow,
            line width=1pt,
            inner sep =1pt,
        },
    ]
        \node (a1) {a_1}
        child { node[dummy] {}
            child { node[final] (a2) {a_2} }
            child { node (a3) {a_3}
                child { node[dummy] {}
                    child { node (a6) {a_6} }
                }
            }
        }
        child { node[dummy] {}
            child { node (a4) {a_4} }
            child { node (a5) {a_5}
                child { node[dummy] {}
                    child { node[final] (a7) {a_7} }
                }
            }
        }
        ;
        \node (b1) at ([xshift=6cm]a1){b_1}
        child { node[dummy] {}
            child { node[final] (b2) {b_2} }
            child { node (b3) {b_3}
                child { node[dummy] {}
                    child { node[final] (b6) {b_6} }
                }
            }
        }
        child { node[dummy] {}
            child { node (b4) {b_4} }
            child { node (b5) {b_5}
                child { node[dummy] {}
                    child { node (b7) {b_7} }
                }
            }
        }
        ;
    \end{scope}
\end{tikzpicture}
\end{center}
States $x$ with $\pi_1(\xi(x)) = 1$ are indicated by the circle. When computing
only
\begin{equation}
  P_{i+1}' := \ker \fpair{H\bar q_i\cdot \xi,Hq_{i+1}\cdot\xi}
  = \ker P_i' \cap \ker (Hq_{i+1}\cdot\xi)
  \tag*{\eqref{kernelOptimization}}
\end{equation}
instead of $P_i$, then $a_1$ and $b_1$ are not distinguished,
although they are behaviourally different.
\end{example}
In order to simplify the partitions, we define abbreviations for the circle and
non-circle states without successors, and the rest:
\[
    F:= \{a_2,a_7, b_2,b_6\}
    \quad
    N:=\{a_4,a_6, b_4,b_7\}
    \quad
    C:=\{a_1,a_3,a_5, b_1,b_3,b_5\}
\]
Running the optimized algorithm, i.e.~computing $Q_i$ and $P_{i}'$, one obtains
the following sequence of partitions.
\renewcommand{\arraystretch}{1.2}%
\newcommand{\myblocklist}[1]{
    \big\{ #1
    \big\}
  }%
\newcommand{\mytopbotrule}{
    \noalign{
\global\dimen1\arrayrulewidth
\global\arrayrulewidth1.3pt
}\hline
\noalign{
\global\arrayrulewidth\dimen1 
}
  }%
\begin{center}%
\begin{tabular}{@{\hspace{0mm}}LL@{\hspace{1mm}}L@{\hspace{1mm}}L@{\hspace{0mm}}}
  \mytopbotrule
  i & q_i & \unnicefrac{X}{Q_i} & \unnicefrac{X}{P_i'}
\\ \hline
\\[-4.5mm]
    0
    & !: X\to 1
    & \myblocklist{X}
    &
        \myblocklist{F, N,
        C }
    \\[1mm]
    1
    & \kappa_{P_0'}: X\twoheadrightarrow \unnicefrac{X}{P_0'}
    & \myblocklist{F,N, C}
    & \myblocklist{F,N, \{a_1,b_1\}, \{a_3,b_5\}, \{a_5,b_3\}}
\\[1mm]
    2
    & \chi_S^C: X\to 3,\text{ for } S = \{a_3,b_5\}
    & \myblocklist{F,N, \{a_1,b_1\}, \{a_3,b_5\}, \{a_5,b_3\}}
    & \myblocklist{F,N, \{a_1,b_1\}, \{a_3,b_5\}, \{a_5,b_3\}}
\\[1mm] \mytopbotrule
\end{tabular}
\end{center}
Note that in the step $i=2$ one obtains the same result for $S' :=\{a_5,b_3\}$
or $S'' := \{a_1,b_1\}$. For $S$ as in the table, $a_1$ and $b_1$ are not split
in $\unnicefrac{X}{P_2'}$ because:
\begin{align*}
  H\chi_S^C \cdot \xi(a_1)
  &= H\chi_S^C \myblocklist{\{a_2,a_3\}, \{a_4,a_5\}} \\
    &= \phantom{H\chi_S^C}\myblocklist{\{0,2\}, \{0,1\}}
    \\ &
    = \phantom{H\chi_S^C}\myblocklist{\{0,1\}, \{0,2\}} \\
    &= H\chi_S^C \myblocklist{\{b_2,b_3\}, \{b_4,b_5\}}
    = H\chi_S^C \cdot \xi(b_1)
\end{align*}
Now the algorithm terminates because $\unnicefrac{X}{Q_2} =
\unnicefrac{X}{P_2}$, but without distinguishing $a_1$ from $b_1$.

\paragraph*{Proof of \autoref{compoundBlockEquivalences}}

\begin{proof}
%We prove each statement assuming the previous one.
%\begin{itemize}
    %\item
    4.~$\Rightarrow$ 1. In \Set, kernels are equivalence relations. Obviously, $\ker
    a\cup \ker b$ is both reflexive and symmetric. For transitivity,
    take $(x,y),(y,z) \in \ker a\cup \ker b$. Then $x, z \in [y]_a\cup
    [y]_b$. If $[y]_a\subseteq [y]_b$, then $x,z \in [y]_b$ and $(x,z)
    \in \ker b$; otherwise $(x,z) \in \ker a$.
    
  %\item
    1.~$\Rightarrow$ 2. In \Set, monomorphisms are stable under pushouts, so it is
    sufficient to show that $\ker a \cup \ker b$ is the kernel of the
    pushout of the epi-parts of $a$ and $b$. In other words, w.l.o.g.~we may 
    assume that $a$ and $b$ are epic, and we need to check that $\ker
    a\cup \ker b$ is the kernel of $p:= p_A\cdot a = p_B\cdot b$, with
    \[
    \begin{mytikzcd}[baseline=(bot.base), ampersand replacement=\&]
        D \ar[->>,d,swap,"b"] \ar[->>,r,"a"]
        \& A
        \ar[d,"p_A"]
        \\
        |[alias=bot]|
        B
        \ar[r,"p_B"]
        \& P.
        \pullbackangle{135}
    \end{mytikzcd}
    \]
    Let $\ker a\cup \ker b$ be the kernel of some $y: D\to Y$. Then,
    $y$ makes the projections of $\ker a$ (resp.~$\ker b$) equal and hence
    the coequalizer $a$ (resp.~$b$) induces a unique $y_A$ (resp.~$y_B$):
    \[
    \begin{mytikzcd}
        \ker a
        \ar[r,hook]
        \ar[rr,shiftarr={yshift=5mm},"\pi_1"]
        \ar[rr,shiftarr={yshift=-5mm},"\pi_2"']
        &
        \ker a \cup \ker b
        \ar[r,shift left = 1,"{\pi_1}"]
        \ar[r,shift left = -1,"{\pi_2}"']
        & D
        \ar[r,"y"]
        \ar[dr,->>,"a"']
        & Y
        \\
        &&& A
        \ar[u,dashed,"\exists! y_A"']
    \end{mytikzcd}
    \quad
    \begin{mytikzcd}
        %|[xshift=-1.3cm]|
        \ker b
        \ar[r,shift left = 1,"\pi_1"]
        \ar[r,shift left = -1,"\pi_2"']
        & D
        \ar[r,"y"]
        \ar[dr,->>,"b"']
        & Y
        \\
        && B
        \ar[u,dashed,"\exists! y_B"']
    \end{mytikzcd}
    \]
    Because of $y_B \cdot b = y = y_A \cdot a$,  $(y_A,y_B)$ is a competing
    cocone for the pushout. This induces a cocone morphism $y_P: (P,p_A,p_B)\to
    (Y,y_A,y_B)$, and we have 
    \[
        y_P\cdot p = y_P\cdot p_A\cdot a = y_A\cdot a = y.
    \]
    With this, we are ready to show that $\ker a\cup \ker b$ is a kernel for
    $p$. Consider two morphisms $c_1: C\to D$, $c_2: C\to D$ with $p\cdot c_1 =
    p\cdot c_2$, then we have
    \[
        y\cdot c_1 = y_P\cdot p \cdot c_1 = y_P\cdot p \cdot c_2 =  y\cdot c_2.
    \]
    This induces a unique cone morphism $C\to \ker a \cup \ker b$ as desired.

    %\item 
    2. $\Rightarrow$ 3. Take $x,y,z \in D$ with $a(x)=a(y)$ and
    $b(y) = b(z)$. Then $a(x)$ and $b(z)$ are identified in the
    pushout $P$: 
    \[
      p(x) = p_A \cdot a(x) = p_A \cdot a(y) = p_B\cdot b(y) = p_B
      \cdot b(z) = p(z). 
    \]
    This shows that $(x,z)$ lies in $\ker a\cup \ker b$, hence we have
    that $a(x) = a(z)$ or $b(x) = b(z)$.

  %\item 
    3. $\Rightarrow$ 4. For a given $y\in D$, there is nothing to show
    in the case where $[y]_a \subseteq [y]_b$. Otherwise if
    $[y]_a \not\subseteq [y]_b$, then there is some $x\in [y]_a$,
    i.e.~such that $a(x) = a(y)$, with $b(x) \neq b(y)$. Now let
    $z \in [y]_b$, i.e.~$b(y) = b(z)$. Then, by assumption,
    $a(x) = a(y) = a(z)$ or $b(x) = b(y) = b(z)$. Since the latter
    does not hold, we have $a(y) = a(z)$, i.e.~$z\in [y]_a$.
%    \qedhere
%\end{itemize}
\end{proof}

\paragraph*{Proof of \autoref{examplesRespectCompoundBlocks}}
\begin{enumerate}
\item For $S\in Y$ and $C := [S]_z$, $\chi_{\{S\}}^C$ respects
  compound blocks by
  Lemma~\ref{compoundBlockEquivalences}.\ref{classInclusion}:
    \begin{itemize}
        \item For $p \in Y\setminus C$, $z(p) \neq z(S)$ and so $[p]_z
        \subseteq Y\setminus C = [p]_k$.

        \item For $p \in C$, $z(p) = z(S)$ and so $[p]_k \subseteq C =
        [p]_z$.
    \end{itemize}

\item The \op{select} routine returning the identity respects compound
blocks, because for any morphism $a: D\to A$, $\ker a\cup \ker \id_D =
\ker a$ is a kernel.

\item The constant $k=\mathbin{!}$ respects compound blocks, because for all
$p\in Y$: $[p]_z \subseteq Y = [p]_!$.
\end{enumerate}

\paragraph*{Proof of \autoref{propZippable}}

\begin{lemma} \label{zippableExtended}
    Let $H$ be zippable and $f: A\to C$, $g: B\to D$. Then 
    \[
        H(A+B) \xrightarrow{\fpair{H(A+g),H(f+B)}} H(A+D) × H (C+B)
    \]
    is monic.
\end{lemma}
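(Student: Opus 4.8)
The plan is to reduce the claim directly to the zippability of $H$ by post-composing the morphism in question with a suitable ``collapsing'' morphism, and then to invoke left-cancellation of monomorphisms. Write
\(
  u := \fpair{H(A+g), H(f+B)} : H(A+B) \to H(A+D)\times H(C+B)
\)
for the morphism whose monicity we must establish. The key observation is that the unique morphisms $!_C : C\to 1$ and $!_D : D\to 1$ into the terminal object induce a morphism
\[
  p := H(A + !_D)\times H(!_C + B) : H(A+D)\times H(C+B) \to H(A+1)\times H(1+B),
\]
and I claim that $p\cdot u = \op{unzip}_{H,A,B}$.

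To verify this I would compute componentwise, using functoriality of $H$ and the fact that coproducts of morphisms compose, i.e.\ $(A+!_D)\cdot(A+g) = A+(!_D\cdot g)$. The first component of $p\cdot u$ is then $H\big(A + (!_D\cdot g)\big) = H(A + !_B)$, since $!_D\cdot g$ and $!_B$ are both the unique morphism $B\to 1$; this is exactly $H(A+!)$, the first component of $\op{unzip}_{H,A,B}$ (\autoref{D:zip}). Symmetrically, the second component is $H\big((!_C\cdot f)+B\big) = H(!_A+B) = H(!+B)$. Hence $p\cdot u = \op{unzip}_{H,A,B}$, as claimed. Since $H$ is zippable, $\op{unzip}_{H,A,B}$ is a monomorphism, so the composite $p\cdot u$ is monic; by the standard left-cancellation property (if $p\cdot u$ is monic then so is $u$), we conclude that $u$ is monic.

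I do not expect a genuine obstacle here: the argument is essentially the remark that $\op{unzip}$ is the special case $f=\,!_A,\ g=\,!_B$ of the asserted morphism, together with the fact that monomorphisms reflect along post-composition. The only point requiring (trivial) care is the identification $!_D\cdot g = !_B$ and $!_C\cdot f = !_A$ via uniqueness of morphisms into $1$, which is precisely what makes the extra data $f,g$ vanish under the projections and recovers the plain $\op{unzip}$ morphism. Notably, this proof is fully category-theoretic and uses nothing specific to $\Set^{\Sorts}$.
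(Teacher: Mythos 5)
Your proof is correct and is essentially identical to the paper's: both post-compose $\fpair{H(A+g),H(f+B)}$ with $H(A+!)\times H(!+B)$, identify the composite with $\op{unzip}_{H,A,B}$ via finality of $1$, and conclude by left-cancellation of monomorphisms. No gaps.
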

\begin{proof}
    By finality of $1$, the diagram
    \[
    \begin{mytikzcd}[column sep = 3cm]
        H(A+B)
        \arrow{d}[left]{\fpair{H(A+g),H(f+B)}}
        \arrow[>->]{dr}[sloped,above]{
            \op{unzip}_{H,A,B}
            = \fpair{H(A+!),H(!+B)}
        }
        \\
        H(A+D) × H (C+B)
        \arrow{r}{H(A+!) × H (!+B)}
        &
        H(A+1) × H (1+B)
    \end{mytikzcd}
    \]
    commutes. Since the diagonal arrow is monic, so is
    $\fpair{H(A+g),H(f+B)}$.
\end{proof}

\begin{proof}[Proof of \autoref{propZippable}]
    Define
    \begin{align*}
        D_A &= \{ x\in D\mid [x]_a \subseteq [x]_b \}
        \overset{d_A} \hookrightarrow D
    &
        D_B &= \{ x\in D\mid [x]_b \subsetneq [x]_a \}
        \overset{d_B} \hookrightarrow D
    \end{align*}
    By construction and Lemma~\ref{compoundBlockEquivalences}, we have
    the isomorphism
    $\phi = [d_A, d_B]^{-1}: D \overset{\cong}{\to} D_A+D_B$. We
    denote the images of the restrictions of $a$ and $b$ to $D_A$ and
    $D_B$, respectively, by $a': D_A \twoheadrightarrow A'$,
    $b': D_B \twoheadrightarrow B'$. We claim that we can
    define maps $c_A$ and $c_B$ such that
    \[
    \begin{mytikzcd}
    D \arrow{d}[left]{a}
    & D_A
        \arrow[hook]{rr}{d_A}
        \ar[d,->>,swap,"a'"]
        \arrow[hook']{l}[above]{d_A}
    && D
        \arrow{d}{b}
    \\
    A & \arrow[hook']{l}[above]{i_A}
    A'
        \arrow{r}{c_B}
    & B\setminus B'
        \arrow[hook]{r}{j_B}
    & B
    \end{mytikzcd}
    \quad\quad
    \begin{mytikzcd}
    D \arrow{d}[left]{b}
    & D_B
        \arrow[hook]{rr}{d_B}
        \ar[d,->>,swap,"b'"]
        \arrow[hook']{l}[above]{d_B}
    && D
        \arrow{d}{a}
    \\
    B & \arrow[hook']{l}[above]{i_B}
    B'
        \arrow{r}{c_A}
    & A\setminus A'
        \arrow[hook]{r}{j_A}
    & A
    \end{mytikzcd}
    \]
    \begin{enumerate}
    \item Put $c_B(a(x)) = b(x)$ for $x\in D_A$. Firstly, $c_B$ is
      well-defined, because \( a(x_1) = a(x_2) \)
      implies $x_2 \in [x_1]_a \subseteq [x_1]_b$ and so $bx_1=bx_2$.
      Secondly, to see that $c_B(\alpha)$ is in $B\setminus B'$,
      assume $b(x) \in B'$ for $x\in D_A$. Then there is $y\in D_B$
      such that $b(y) = b(x)$. Hence $a(x) = a(y)$ because $y \in D_B$
      and $x\in [y]_b$. This leads to the following contradiction:
    \[
        [x]_b = [y]_b \subsetneq
        [y]_a = [x]_a \subseteq [x]_b.
    \]

  \item Analogously, put $c_A(b(x)) = a(x)$ for $x\in
    D_B$.
    Well-definedness is proved similarly as for $c_B$; the image
    restricts to $A\setminus A'$ by the same argument as before but
    with $\subseteq$ and $\subsetneq$ swapped in the last line.
    \end{enumerate}

    Next we consider the diagram below, which commutes by construction
    of $c_A,c_B$:
    \[
    \begin{mytikzcd}[column sep = 17mm]
    HD
        \arrow{rr}{H\fpair{a,b}}
        \arrow{d}[swap]{H\phi}
    && H(A×B)
        \arrow{dd}[sloped,above]{\fpair{H\pi_1,H\pi_2}}
    \\
    H(D_A+D_B)
        \arrow{d}[swap]{H(a'+b')}
    & \begin{array}{c}
        H\big((A'+A\setminus A') \\
        ×(B\setminus B' + B')\big)
    \end{array}
        \arrow[draw=myorange,>->]{ur}[sloped,below]{
        \begin{array}{l}
            H\big([i_A,j_A] \\[-1mm]
            \phantom{H}×[j_B,i_B]\big)
        \end{array}
        }
    \\
    H(A'+B')
        \arrow[draw=myblue,>->]{r}[above,pos=0.7]{\begin{array}{c}
            H\fpair{(A'+c_A),\\[-1mm]
            (c_B + B')}
            \end{array}}
        \arrow[>->,
               draw=myorange,
               end anchor={[yshift=6mm]},
              ]{ur}[sloped,above]{\begin{array}{c}
            \fpair{H(A'+c_A),\\[-1mm]
            H(c_B + B')}
            \end{array}}
    & \begin{array}{c}H(A'+A\setminus A') \\
    × H(B\setminus B' + B')
    \end{array}
    \arrow[draw=myblue,>->]{r}[above]{\begin{array}{c}
        H[i_A,j_A] \\[-1mm] ×H[j_B,i_B]
    \end{array}}
    & HA × HB
    \end{mytikzcd}
    \]
    The left hand morphism at the bottom is monic because $H$ is
    zippable and by Lemma~\ref{zippableExtended}. The second morphism
    is an isomorphism. Thus, the diagonal from $H(A'+B')$ to $H(A×B)$
    is also monic and we can conclude using Lemma~\ref{kernelMono} for
    each of the colored monomorphisms:
    \[
      \ker \big( \fpair{Ha, Hb}\big) = 
        \ker \big(\fpair{H\pi_1,H\pi_2}
            \cdot H\fpair{a,b}\big)\,
        {\color{myblue}=}
        \ker \big(H(a'+b')\cdot H\phi\big)\,
        {\color{myorange}=}
        \ker H\fpair{a,b}.
    \qedhere
    \]
\end{proof}

\paragraph*{Proof of \autoref{optimizationSummary}}

\begin{proposition} \label{kernelComposition}
    Whenever $\ker (a: D\to A) = \ker (b: D\to B)$ then
    $\ker (a\cdot g) = \ker (b\cdot g)$, for $g: W\to D$.
\end{proposition}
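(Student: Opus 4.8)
The plan is to reduce the statement to two facts already in hand: a morphism shares its kernel with the regular-epi part of its image factorization (Remark~\ref{pullbackOfKernelPair}), and precomposition with a monomorphism leaves kernels unchanged (Lemma~\ref{kernelMono}). The whole point is that equal kernels force $a$ and $b$ to have, up to isomorphism, a common quotient map, after which the claim becomes a one-line computation.

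First I would take image factorizations $a = m_a\cdot e_a$ and $b = m_b\cdot e_b$ with $e_a,e_b$ regular epis and $m_a,m_b$ monos. By Remark~\ref{pullbackOfKernelPair} we have $\ker a = \ker e_a$ and $\ker b = \ker e_b$, so the hypothesis $\ker a = \ker b$ yields $\ker e_a = \ker e_b$. Since a regular epi is (up to iso) the coequalizer of its own kernel pair, $e_a$ and $e_b$ coequalize the same pair and are therefore isomorphic: there is an isomorphism $\iota$ with $e_b = \iota\cdot e_a$. Absorbing $\iota$ into $m_b$, I may assume a single regular epi $e := e_a = e_b$ with $a = m_a\cdot e$ and $b = m_b\cdot e$.

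Then the result drops out immediately. Precomposing with $g$ gives $a\cdot g = m_a\cdot(e\cdot g)$ and $b\cdot g = m_b\cdot(e\cdot g)$, and applying Lemma~\ref{kernelMono} to the monos $m_a$ and $m_b$ yields
\[
  \ker(a\cdot g) = \ker(e\cdot g) = \ker(b\cdot g),
\]
as required. The only step that needs genuine care — and the one I would check most closely — is the identification of the regular-epi parts $e_a$ and $e_b$ up to isomorphism; this relies on the bijective correspondence between kernels and regular quotients guaranteed by the image-factorization structure of Assumption~\ref{ass:C}, so it is essential that we argue within \C rather than appealing to a set-level argument. Everything after that reduction is routine.
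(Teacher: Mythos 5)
Your proof is correct, but it takes a genuinely different route from the paper's. The paper proves \autoref{kernelComposition} purely by pasting pullback squares: $\ker(a\cdot g)$ is the pullback of $a\cdot g$ along itself, which decomposes as the pullback of $\ker a\rightrightarrows D$ along $g$ in both legs, so the result is determined by the span $\ker a$ and by $g$ alone; equality of kernels then transfers immediately. That argument needs nothing beyond finite limits. You instead pass through image factorizations: from $\ker a=\ker e_a$ and $\ker b=\ker e_b$ (\autoref{kernelMono}) you identify the regular-epi parts up to isomorphism via the correspondence between kernels and regular quotients (a regular epi is the coequalizer of its kernel pair, \cite[Prop.~11.22]{joyofcats}, as recalled in \autoref{pullbackOfKernelPair}), and then conclude with \autoref{kernelMono} again after precomposing with $g$. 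This is sound under \autoref{ass:C}, and your closing caveat is exactly right: the step that needs the extra structure (coequalizers and image factorizations) is the identification of $e_a$ with $e_b$. What the paper's route buys is economy of hypotheses --- it works in any finitely complete category and avoids invoking the factorization system at all; what your route buys is a cleaner conceptual picture, namely that morphisms with equal kernels share a common quotient map, after which the claim is a one-line computation.
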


\begin{proof}
  The kernel $\ker(a \cdot g)$ can be obtained uniquely
  from $\ker a$ by pasting pullback squares as shown below:
  \[
    \begin{mytikzcd}
    \ker\big( a \cdot g\big)
        \arrow{r}
        \arrow{d}
        \pullbackangle{-45}
    & \bullet
        \pullbackangle{-45}
    \arrow{r}
    \arrow{d}
    & W
    \arrow{d}{g}
    \\
    \bullet
        \arrow{d}
        \arrow{r}
        \pullbackangle{-45}
    & \ker a
        \arrow{d}
        \arrow{r}
        \pullbackangle{-45}
    & D
    \arrow{d}{a}
    \\
    W
    \arrow{r}{g}
    & D
    \arrow{r}{a}
    & a
    \end{mytikzcd}
    \]
    So if $\ker a = \ker b$, then $\ker (a\cdot g) = \ker
    (b\cdot g)$.
\end{proof}
\autoref{optimizationSummary} is immediate from 
\begin{lemma}\label{optimizationCorrectness}
  If $H: \Set\to\Set$ is zippable and $\op{select}$ respects compound blocks,
  then $P_{i+1} = \ker(H\fpair{\bar q_i,q_{i+1}}\cdot \xi) = P_i \cap \ker
  (Hq_{i+1}\cdot \xi)$.
\end{lemma}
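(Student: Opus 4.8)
The plan is to derive the identity by chaining the two auxiliary results already at hand, \autoref{propZippable} and \autoref{kernelComposition}, and then to discharge the single side condition of \autoref{propZippable} using that \op{select} respects compound blocks.

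First I would rewrite the right-hand side as one kernel. Since $\ker\fpair{f,g}=\ker f\cap\ker g$ and $P_i=\ker(H\bar q_i\cdot\xi)$ by construction, we have
\[
  P_i\cap\ker(Hq_{i+1}\cdot\xi)
  = \ker\fpair{H\bar q_i\cdot\xi,\;Hq_{i+1}\cdot\xi}
  = \ker\big(\fpair{H\bar q_i,Hq_{i+1}}\cdot\xi\big),
\]
where the last step uses that pairing commutes with precomposition. On the left-hand side, $P_{i+1}=\ker\big(H\fpair{\bar q_i,q_{i+1}}\cdot\xi\big)$ holds by definition of Step~\ref{step:P} (recalling $\bar q_{i+1}=\fpair{\bar q_i,q_{i+1}}$). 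Hence it suffices to prove
\[
  \ker\big(\fpair{H\bar q_i,Hq_{i+1}}\cdot\xi\big)
  = \ker\big(H\fpair{\bar q_i,q_{i+1}}\cdot\xi\big).
\]

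Second, I would obtain this from \autoref{propZippable} applied to the span $\bar q_i,q_{i+1}$ with common domain~$X$: it yields $\ker\fpair{H\bar q_i,Hq_{i+1}}=\ker H\fpair{\bar q_i,q_{i+1}}$ as kernels of morphisms out of $HX$, using that $H$ is zippable (and may be assumed to preserve monos, which is w.l.o.g.\ for \Set-functors, cf.\ \autoref{ass:sec3}). Precomposing these equal kernels with $\xi:X\to HX$ via \autoref{kernelComposition} gives exactly the displayed equation, completing the reduction.

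The step I expect to be the crux is verifying the hypothesis of \autoref{propZippable}, namely that $\ker\bar q_i\cup\ker q_{i+1}$ is a kernel; this is where the assumption on \op{select} is consumed. Let $z:\unnicefrac{X}{P_i}\twoheadrightarrow\unnicefrac{X}{Q_i}$ witness that $P_i$ is finer than $Q_i$, and let $k_{i+1}:\unnicefrac{X}{P_i}\to K_{i+1}$ be the output of the \op{select} step of \autoref{catPT}. That \op{select} respects compound blocks (\autoref{defRespectCompounds}) says precisely that $\ker k_{i+1}\cup\ker z$ is a kernel. Now $q_{i+1}=k_{i+1}\cdot\kappa_{P_i}$, while $\kappa_{Q_i}=z\cdot\kappa_{P_i}$ together with $\ker\bar q_i=Q_i=\ker\kappa_{Q_i}$ shows that $\ker q_{i+1}$ and $\ker\bar q_i$ are the preimages of $\ker k_{i+1}$ and $\ker z$, respectively, along $\kappa_{P_i}$. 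Since preimage commutes with union, $\ker\bar q_i\cup\ker q_{i+1}$ is the preimage of the kernel $\ker k_{i+1}\cup\ker z$ along the surjection $\kappa_{P_i}$; as the preimage of an equivalence relation is again an equivalence relation, and kernels in \Set{} are exactly equivalence relations, it is a kernel. Everything else is bookkeeping with the two cited propositions.
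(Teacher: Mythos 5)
Your proof is correct and follows essentially the same route as the paper's: both reduce the statement to \autoref{propZippable} followed by \autoref{kernelComposition}, with the compound-blocks assumption discharging the kernel-union hypothesis. The only (harmless) difference is the point of application: the paper applies \autoref{propZippable} to the span $(f_i,k_{i+1})$ on $\unnicefrac{X}{P_i}$, where respecting compound blocks is literally the required hypothesis, and then precomposes with $H\kappa_{P_i}\cdot\xi$, whereas you apply it to $(\bar q_i,q_{i+1})$ on $X$ and transfer the hypothesis by taking preimages of $\ker k_{i+1}\cup\ker z$ along $\kappa_{P_i}$ --- a valid, equally short variation.
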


\begin{proof}[Proof (\autoref{optimizationCorrectness})] For
  $H$ zippable and $f_i$, $k_i$ as in \autoref{catPT} we have proved:
\begin{itemize}[topsep=1mm,leftmargin=6mm,itemsep=0mm]
\item[]\op{select} respects compound blocks
${\Leftrightarrow}$
    $\ker f_i\cup \ker k_{i+1}$ is a kernel

\item[$\overset{\ref{propZippable}}{\Rightarrow}$]
$ \ker\fpair{Hf_i, Hk_{i+1}} = \ker H\fpair{f_i, k_{i+1}} $
\item[$\overset{\text{\ref{kernelComposition}}}{\Rightarrow}$]
$ \ker(\fpair{Hf_i, Hk_{i+1}}\cdot H\kappa_{P_i}\cdot\xi) = \ker (H\fpair{f_i,
k_{i+1}}\cdot H\kappa_{P_i}\cdot\mathrlap{\xi )}$

\item[$\overset{\phantom{\ref{propZippable}}}{\Rightarrow}$]
    $\ker \big(\fpair{H\bar q_i\cdot \xi, H q_{i+1}\cdot \xi}\big)
    = \ker \big(H\fpair{\bar q_i, q_{i+1}} \cdot \xi\big)
    = \ker \big(H\bar q_{i+1} \cdot \xi\big)
    = P_{i+1}$
\qedhere
\end{itemize}
\end{proof}

\paragraph*{Details for \autoref{sortedCoalgebra}}
Note that any coalgebra $\xi: X \to FGX$ can be trivially decomposed
into $(\xi, \id): (X, GX) \to (FGX, GX)$. (More generally, trivially
decomposing a coalgebra via $\xi \mapsto (\xi, \id)$ and composing a
multi-sorted coalgebra via $(x,y) \mapsto Fy \cdot x$, respectively,
are the object mappings of an adjoint pair of functors, see
\cite{SchroderPattinson11}.)  Furthermore, if
$q: (X,\xi) \epito (Y,\zeta)$ is a quotient coalgebra, one sees that
$(\zeta, \id)$ is a quotient coalgebra of $(x,y)$ via
$(q, Gq \cdot y)$. Conversely, if $(x',y')$ is any quotient coalgebra
of $(x,y)$ via some $(q_1, q_2)$, say, than $Fy' \cdot x'$ is a
quotient coalgebra of $\xi = Fy \cdot x$. Consequently, $\xi$ is a
simple coalgebra iff $(x,y)$ is, and therefore the optimized algorithm
in the multi-sorted setting computes the correct partition for a
composition of \Set-functors.

Now suppose that, $(x',y'): (X',Y') \to (FY',GX')$ is the simple
quotient of $(x,y)$ via $(q_1, q_2): (X,Y) \to (X',Y')$, say. Then,
$(X', Fy'\cdot x')$ is clearly a quotient of $(X, \xi)$ via $q_1$. In order to show simplicity suppose that we have an $FG$-coalgebra morphism $h$ from $(X', Fy'\cdot x')$ to some $(Z, \zeta)$. Then $(h, Gh \cdot y')$ is clearly an $H$-coalgebra morphism from $(x',y')$ to $(\zeta, \id)$ (we consider the two sorts separately below -- the right-hand component is trivial and the left-hand component states that $q$ is an $FG$-coalgebra morphism from $(X', Fy' \cdot x')$ to $(Z, \zeta)$:
\[
  \begin{mytikzcd}
    X' \arrow{r}{x'}
    \arrow{d}{h}
    & 
    FY' \arrow{d}{F(G h \cdot y')}
    \\
    Z \arrow{r}{\zeta}
    &
    FGZ
  \end{mytikzcd}
  \qquad
  \begin{mytikzcd}
    Y' 
    \arrow{r}{y'}
    \arrow{d}{Gh \cdot y'}
    &
    GX' \arrow{d}{Gh}
    \\
    GZ \arrow{r}{\id}
    &
    GZ
  \end{mytikzcd}
\]
Since $(x',y')$ is simple in $\Set^2$ we know that $(h, Gh\cdot y')$ is monic, whence $h$ is injective and we are done.
 
\subsection*{Details for \autoref{sec:efficient}}

\paragraph*{Proof of \autoref{encodingCanonical}}
  Define $E$ as follows. Compose $\xi$ with $\flat$ and the inclusion
  into the set of all maps $A \times Y \to \N$:
  \[
    \begin{mytikzcd}
      X \arrow{r}{\xi}
      & HY \arrow{r}{\flat}
      & \Bagf(A×Y) \descto{r}{=}
      &[-7mm] \N^{(A×Y)} \arrow[hook]{r}
      & \N^{A×Y}.
    \end{mytikzcd}
  \]
  Its uncurrying is a map $\op{cnt}: X×A×Y\to \N$, and we let
  \[
    E :=
    \coprod_{\mathclap{e \in X×A×Y}}
    \op{cnt}(e),
  \]
  where each $\op{cnt}(e) \in \N$ is considered as a finite ordinal
  number. By copairing we then obtain a unique morphism $\op{graph}: E \to X
  \times A \times Y$ defined on the coproduct components as 
  \[
    \op{cnt}(e) \xrightarrow{!} 1 \xrightarrow{e} X \times A \times Y,
  \]
  and we put $\op{type} = H! \cdot \xi$. 
  Note that if $X$ is finite, then so is $E$, since all
  $\flat\cdot\xi(x)$ are finitely supported.

\paragraph*{Details for \autoref{exampleH2}}
\begin{proof}
  We verify \eqref{eqSplitterLabels} for each of the three examples. 

 In general, for $S\subseteq C\subseteq Y$, we have $\op{val}\cdot H\chi_S^C =
\fpair{H\chi_{S}, H\chi_S^C, H\chi_{C\setminus S}}$. Hence to verify the
axiom for $\op{update} = \op{val}\cdot \op{up}$ it suffices to verify that
$\op{up}\cdot\fpair{\op{fil}_S\cdot \flat, H\chi_C } = H\chi_S^C$;
in fact, using $w(C) = H \chi_C$ we have: 
\begin{align*}
  \op{update}\cdot \fpair{\op{fil}_S \cdot\flat, w(C)} &=                           \op{val} \cdot \op{up} \cdot \fpair{\op{fil}_S \cdot\flat, H\chi_C}
  \\
  &= \op{val} \cdot H \chi^C_S \\
  &= \fpair{H\chi_S, H\chi^C_S, H\chi_{C\setminus S}} \\
  &= \fpair{w(S), H\chi^C_S, w(C\setminus S)}.
\end{align*}
\begin{enumerate}
\item For any $f\in HY = G^{(Y)}$, we have:
\begin{align*}
    \op{init}(H!(f), \Bagf\pi_1\cdot\flat(f)) &=
    \big(0, {\textstyle\sum}\Bagf\pi_1\cdot\flat(f)\big)
    \\ & =
    \big(0, \sum_{\mathclap{\substack{y\in Y,\\ f(y)\neq 0}}}
    f(y)\big) =
    \big(0, \sum_{\mathclap{\substack{y\in Y}}} f(y)\big) =
  G^{(\chi_Y)}(f) = w(Y)(f), 
\\
    \op{up}(\op{fil}_S(\flat(f)), H\chi_C(f))
    &=
    \op{up}\big(
        \{f(y)\mid y\in S\},
        \big(\sum_{\mathclap{y\in Y\setminus C}} f(y),
         \sum_{\mathclap{y\in C}} f(y)\big)
         \big)
    \\ &
    = \big(\sum_{\mathclap{y\in Y\setminus C}} f(y),
           \sum_{\mathclap{y\in C}} f(y)- \sum_{\mathclap{y\in S}} f(y),
           \sum_{\mathclap{y\in S}} f(y)
      \big)
    \\ &
    = \big(\sum_{\mathclap{y\in Y\setminus C}} f(y),
           \sum_{\mathclap{y\in C\setminus S}} f(y),
           \sum_{\mathclap{y\in S}} f(y)
      \big)
    = H\chi_S^C(f).
\end{align*}

\item The axiom for $\op{init}$ clearly holds since for any
  $f \in \Dist Y$, we have
  $\groupsum \Bag\pi_1 \cdot \flat(f) = \sum_{y \in Y} f(y) =
  1$. 

  For the axiom for $\op{up}$ the proof is identical as in the
  previous point; in fact, note that for an $f \in \Dist Y$ all
  components of the triple 
  $ 
  \big(\sum_{y\in Y\setminus C} f(y),
  \sum_{y\in C\setminus S} f(y),
  \sum_{y\in S} f(y)
  \big)
  $
  are in $[0,1]$ and their sum is $\sum_{y \in Y} f(y) = 1$. Thus,
  this triple lies in $\Dist 3$ and is equal to $\Dist \chi^C_S(f)$. 

\item For the refinement interface for $\N^{(-)}$ we argue similarly:
  if $f$ lies in $\N^{(Y)}$ then $\sum_{y \in Y} f(y)$ lies in $\N$
  and so do the components of the triple in the proof of the axiom of
  $\op{up}$, whence we obtain $\N^{(\chi^C_S)}(f)$. 

\item Let $t=\sigma(y_1,\ldots,y_n) \in H_\Sigma Y$ with $\sigma$ of arity $n$,
let $c_i = \chi_C(y_i)$, and $I= \{ 1 \le i \le n \mid y_i \in S\}$.
\begin{align*}
    \op{init}(H_\Sigma!(t), \Bagf\pi_1\cdot \flat(t))
    &= \op{init}(\sigma(0,\ldots,0), \Bagf\pi_1(\{(1,y_1),\ldots,(n,y_n) \}))
    \\ &
     = \op{init}(\sigma(0,\ldots,0), \{1,\ldots,n \})
     = \sigma(1,\ldots,1)
    \\ &
     = \sigma(\chi_Y(y_1),\ldots,\chi_Y(y_n))
     = H_\Sigma \chi_Y (t).
     \\
\op{up}(\op{fil}_S\cdot \flat(t), H_\Sigma\chi_C(t)) &=
    \op{up}(\op{fil}_S(\{(1,y_1),\ldots,(n,y_n)\}),
    \sigma(c_1,\ldots,c_n))
    \\ &
    = \op{up}(I, \sigma(c_1,\ldots,c_n))
    \\ &
    = \sigma(c_1 + (1\in I),\ldots,c_i + (i\in I),\ldots, c_n + (n\in I))
    \\ &
    = \sigma(\chi_S^C(y_1),\ldots,\chi_S^C(y_i),\ldots, \chi_S^C(y_n))
    \\ &
    =H_\Sigma\chi_S^C(t).
\end{align*}
In the penultimate step it is used that:
\[
\begin{array}[b]{lcl}
y_i \in Y\setminus C
&\Rightarrow& c_i + (i\in I) =
0 + 0 = 0 = \chi_S^C(y_i)
\\
y_i \in C\setminus S
&\Rightarrow& c_i + (i\in I) =
1 + 0 = 1 = \chi_S^C(y_i)
\\
y_i \in S
&\Rightarrow& c_i + (i\in I) =
1 + 1 = 2 = \chi_S^C(y_i)
\end{array}
\qedhere
\]
\end{enumerate}
\end{proof}

\paragraph*{Information in $w(C)$}
The functor-specific $w(C)$ is not always $H\chi_C$, but always has
at least this information:
\begin{proposition}\label{prop:atleast}
    For any refinement interface, $H\chi_C = H (=1) \cdot \pi_2\cdot
    \op{update}(\emptyset)\cdot w(C)$.
\end{proposition}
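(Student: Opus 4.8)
The plan is to read off the claimed equation directly from the second commuting diagram in \eqref{eqSplitterLabels} by instantiating it at the special case $S=\emptyset$. That axiom, spelled out as an equation of morphisms $HY\to W\times H3\times W$, reads
\[
  \fpair{w(S),\,H\chi_S^C,\,w(C\setminus S)}
  = \op{update}\cdot(\op{fil}_S\times W)\cdot\fpair{\flat,w(C)},
\]
and this holds for every $S\subseteq C\subseteq Y$. Taking $S=\emptyset$ (with $C\subseteq Y$ arbitrary) is the whole trick; the rest is bookkeeping.

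First I would record two elementary observations. On the left, with $S=\emptyset$ we have $C\setminus S=C$, so the middle and right components of the target become $H\chi_\emptyset^C$ and $w(C)$. On the right, $\op{fil}_\emptyset\colon\Bagf(A\times Y)\to\Bagf A$ is the constant map to the empty bag, since $\op{fil}_\emptyset(f)(a)=\sum_{y\in\emptyset}f(a,y)=0$; hence $\op{fil}_\emptyset\cdot\flat$ is the constant $\emptyset$, and the composite $(\op{fil}_\emptyset\times W)\cdot\fpair{\flat,w(C)}$ sends $t\in HY$ to $(\emptyset,w(C)(t))$. Therefore the right-hand side is precisely $\op{update}(\emptyset)\cdot w(C)$, reading $\op{update}(\emptyset)$ as the partial application $v\mapsto\op{update}(\emptyset,v)$. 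Combining, the axiom at $S=\emptyset$ yields
\[
  \fpair{w(\emptyset),\,H\chi_\emptyset^C,\,w(C)}
  = \op{update}(\emptyset)\cdot w(C),
\]
and projecting to the middle component gives $\pi_2\cdot\op{update}(\emptyset)\cdot w(C)=H\chi_\emptyset^C$.

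It remains to relate $H\chi_\emptyset^C$ to $H\chi_C$. Unfolding \autoref{D:chi}, the map $\chi_\emptyset^C\colon Y\to 3$ sends $x\notin C$ to $0$ and $x\in C$ to $1$ (the value $2$ is never attained since $S=\emptyset$); thus $\chi_\emptyset^C$ factors as $\chi_C\colon Y\to 2$ followed by the inclusion $2\hookrightarrow 3$ with $0\mapsto0$, $1\mapsto1$. Since $(=1)\colon 3\to 2$ restricts to the identity on $\{0,1\}$, we get $(=1)\cdot\chi_\emptyset^C=\chi_C$, and applying $H$ and using functoriality yields $H(=1)\cdot H\chi_\emptyset^C=H\chi_C$. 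Pre-composing the displayed projection identity with $H(=1)$ then gives $H(=1)\cdot\pi_2\cdot\op{update}(\emptyset)\cdot w(C)=H\chi_C$, as required.

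There is no serious obstacle here; the proof is essentially a specialization-and-unwinding argument. The only points demanding care are the correct interpretation of $\op{update}(\emptyset)$ as a partial application and the verification that $\op{fil}_\emptyset\cdot\flat$ is the constant empty bag (so that the interface axiom genuinely collapses to the stated composite), together with the small retraction identity $(=1)\cdot(2\hookrightarrow3)=\id_2$ that turns $H\chi_\emptyset^C$ back into $H\chi_C$. I would present all three observations explicitly, since each is where a reader might otherwise stumble, and none requires computation beyond what is written above.
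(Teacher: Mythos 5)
Your proof is correct and follows essentially the same route as the paper: instantiate the $\op{update}$ axiom of \eqref{eqSplitterLabels} at $S=\emptyset$, observe that $\op{fil}_\emptyset\cdot\flat$ is constantly the empty bag, project to the middle component, and convert $H\chi_\emptyset^C$ into $H\chi_C$ via $H(=1)$. The three points you flag for care are exactly the content of the paper's commuting diagram, so nothing is missing.
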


\begin{proof}
    The axiom for $\op{update}$ and definition of $\op{fil}_\emptyset$ makes
    the following diagram commute:
    \[
    \begin{mytikzcd}[column sep=1.4cm,baseline=(bot.base)]
    HY
    \arrow[
        rounded corners,
        to path={
            -- ([yshift=-5mm]\tikztostart -| \tikztotarget) \tikztonodes
            -- (\tikztotarget)
        },
    ]{dr}[pos=0.4,sloped,below]{\fpair{w(\emptyset),H\chi_\emptyset^C,
    w(C\setminus \emptyset)}}
    \arrow[
        rounded corners,
        to path={
            (\tikztostart.east)
            -- ([yshift=-4.5mm]\tikztostart -| \tikztotarget) \tikztonodes
            -- (\tikztotarget)
        },
    ]{drr}[pos=0.8,below]{H\chi_\emptyset^C}
    \arrow[
        rounded corners,
        to path={
            (\tikztostart.east)
            -- (\tikztostart -| \tikztotarget) \tikztonodes
            -- (\tikztotarget)
        },
    ]{drrr}[pos=0.8,below]{H\chi_C}
    \arrow[shiftarr={xshift=-17mm}]{d}[left]{\fpair{\emptyset!, w(C)}}
    \arrow{d}[left]{\fpair{\flat\cdot \op{fil}_\emptyset,w(C)}}
    \\
    |[alias=bot]|
    \Bagf A × W
    \arrow{r}{\op{update}}
    & W × H3 × W
    \arrow{r}{\pi_2}
    & H3
    \arrow{r}{H(=1)}
    & H2
    \end{mytikzcd}
    \qedhere
    \]
\end{proof}

\paragraph*{Details for \autoref{examplePowerset}}

\begin{proof} We prove \eqref{eqSplitterLabels} for the refinement
  interface of the finite powerset functor. 
  
The axiom for $\op{init}$ is proved analogously as for $\N^{(-)}$ in
the proof for \autoref{exampleH2}.\ref{exampleH2.1}. 

Note that we have $\op{update} = \fpair{\N^{(=2)}\!,\
{(\geZero 0)}^3\!,\ \N^{(=1)}}\cdot \op{up}\cdot \inj_\B^Y$, where
$\op{up}$ is as for $\N^{(-)}$. Now, we need to show the
commutativity of the diagram below:
\[
\begin{mytikzcd}[column sep=9mm]
    \Potf Y
    \cong \B^{(Y)}
    \arrow[
        to path={
            (\tikztostart.north)
            -| ([yshift=3mm]\tikztostart.north)
            -- ([yshift=3mm]\tikztostart.north  -| emptycell.north east)
            -- (\tikztotarget) \tikztonodes
        },
        rounded corners,
    ]{ddrrr}[sloped,above]{\fpair{\N^{(\chi_S)}\cdot
    \inj_\B^Y,\B^{(\chi_S^C)}, \N^{(\chi_{C\setminus S})}\cdot \inj_\B^Y}}
    \arrow{d}[left]{\fpair{\flat,\N^{(\chi_C)}\cdot \inj_\B^Y}}
    \arrow[hook,bend left]{dr}{\inj_{\B}^Y}
    &
    |[alias=emptycell]| {}
    \\
    \Bagf(\N×Y) × \N^{(2)}
    \arrow{d}[swap]{\op{fil}_S × \N^{(2)}}
    &
    \N^{(Y)}
    \arrow{d}[swap]{\N^{(\chi_S^C)}}
    \arrow{l}[sloped,above,pos=0.5,yshift=1mm]{\fpair{\flat,
    \N^{(\chi_C)}}}
    \arrow[bend left]{rd}[sloped,above]{\fpair{\N^{(\chi_S)},\N^{(\chi^C_S)},\N^{(\chi_{C\setminus S})}}}
    \\
    \Bagf \N × \N^{(2)}
    \arrow{r}[sloped,above]{\op{up}}
    %\arrow[shiftarr={yshift=-6mm}]{rrr}{\op{update}}
    &
    \N^{(3)}
    \arrow{r}[near end]{\begin{array}{c}
        \fpair{\N^{(=2)}, \id, \N^{(=1)}}
        \\[-1mm]
        = \op{val}
        \end{array}}
    & \N^{(2)} \!×\N^{(3)}\!× \N^{(2)}
    \arrow{r}[yshift=2pt]{\id×{(> 0)}^3×\id}
    &[6mm] \N^{(2)} × \B^{(3)} × \N^{(2)}\!.
\end{mytikzcd}
\]
The inner left-hand triangle clearly commutes. The square below it
involving $\op{up}$ and the middle lower triangle commute as shown in
\autoref{exampleH2}.\ref{exampleH2.1}. The first and the third
component of the remaining right-hand part clearly commute, and for
the second component let $f\in \B^{(Y)}$ and compute as follows:
\begin{align*}
\B^{\chi_S^C}(f)
&= \left(
    \bigvee_{y\in Y\setminus C} f(y),
    \bigvee_{y\in C\setminus S} f(y),
    \bigvee_{y\in S} f(y)
\right)
\\
&= \left(
    0 \leZero \sum_{\mathclap{y\in Y\setminus C}} \inj_\B^Y f(y),\quad
    0 \leZero \sum_{\mathclap{y\in C\setminus S}} \inj_\B^Y f(y),\quad
    0 \leZero \sum_{\mathclap{y\in S}} \inj_\B^Y f(y)
\right)
\\
&= (\geZero  0)^3 \cdot \big(g \mapsto \big(
    \sum_{y\in Y\setminus C} g(y),
    \sum_{y\in C\setminus S} g(y),
    \sum_{y\in S} g(y)
\big)\big) \cdot \inj_\B^Y(f)
\\
&= (\geZero  0)^3\cdot \N^{(\chi_S^C)}\cdot \inj_\B^Y(f).
\qedhere
\end{align*}
\end{proof}

\paragraph*{Details for \autoref{exampleTime}}
For all those examples using the $\op{val}$-function from \autoref{exampleH2},
first note that $\op{val}$ runs in linear time (with a constant factor of 3,
because $\op{val}$ basically returns three copies of its input).

For all the monoid-valued functors $G^{(-)}$ for an abelian group, for
$\N$ and for $\Dist$, all the operations, including the summation
$\groupsum e$, run linearly in the size of the input. If the elements $g\in G$
have a finite representation, then so have the elements of $G^{(2)}$ and thus
comparing elements of $g_1,g_2\in G^{(2)}$ is running in constant time.

For a polynomial functor $H_\Sigma$ with bounded arities, we assume that the
name of the operation symbol $\sigma\in \Sigma$ is encoded by a constant-size
integer. So we can also assume that comparison of these integers run in constant
time. Since the signature has bounded arities, the maximum arity available in
$\Sigma$ is independent from the concrete morphism $X\to H_\Sigma Y$, so the
comparison of the arguments of two flat $\Sigma$-terms $t_1,t_2\in H_\Sigma 3$
also runs in constant time.
\begin{enumerate}
\item The first parameter of type $H_\Sigma 1$ can be encoded as
simply a operation symbol $\sigma$. Let $t\in H_\Sigma 2$ be fixed.
Then one explicitly implements
\vspace{2mm}
\[
\op{init}(\sigma, f) =\begin{cases}
    \sigma(\smash{\overbrace{1,\ldots,1}^{\mathclap{\text{arity
    $\sigma$ many}}}})
    & \text{if arity}(\sigma) = |f|
    \\
    t & \text{otherwise.}
\end{cases}
\]
Both the check and the construction of $\sigma(1,\ldots,1)$ are
bounded linearly by the size of $f$. The second case runs in constant
time, since we fixed $t$ beforehand.

\item In $\op{up}(I, \sigma(b_1,\ldots,b_n))$, we can not naively
check all the $1 \in I,\ldots,n\in I$ queries, since this would lead
to a quadratic run-time. Instead we precompute all the queries'
results together.

\begin{algorithmic}[1]
    \State define the array $\op{elem}$ with indices $1\ldots n$, and
    each cell storing a value in $2$.
    \State init \op{elem} to $0$ everywhere.
    \State \textbf{for} $i\in I$ with $i\le n$ \textbf{do} $\op{elem}[i] \gets 1$.
    \State return $\sigma(b_1 + \op{elem}[1], \ldots, b_n +
    \op{elem}[n])$.
\end{algorithmic}
The running time of every line is bound by $|I|+n$.
\end{enumerate}

\paragraph*{Proof of \autoref{lemmaInit}}

\begin{proof}
  The grouping line~7 takes $\CO(|X|\cdot \log|X|)$ time. The first
  loop takes $\CO(|E|)$ steps, and the second one takes $\CO(|X| + |E|)$ time in total over all $x\in
  X$ since $\op{init}$ is assumed to run in linear time. For the invariants:
    \begin{enumerate}
    \item By line \ref{algoLineInitEmptyset}.

    \item After the procedure, for $e_i = x_i\xrightarrow{a_i}y_i, i\in \{1,2\}$,
    $\op{lastW}(e_1) = \op{lastW}(e_2)$
    iff $x_1=x_2$.

    \item This is just the axiom for \op{init} in~\eqref{eqSplitterLabels}, because
    $\unnicefrac{Y}{Q} = \{Y\}$ and $\op{deref}\cdot\op{lastW}(e) =
    \op{init}(\op{type}(x), \Bagf\pi_1\cdot \flat\cdot\xi(x))$.

    \item Since $\ker(H\chi_Y\cdot\xi) = \ker(H!\cdot\xi)$, this is
    just the way $\unnicefrac{X}{P}$ is constructed.\qedhere
    \qedhere
    \end{enumerate}
\end{proof}

\paragraph*{Proof of \autoref{thmSplitCorrect}}

\begin{lemma} \label{p1Properties}
Assume that the invariants hold. Then after part (a) of \autoref{figSplitAlgo}, for the given
$S\subseteq C \in \unnicefrac{Y}{Q}$ we have:
\begin{enumerate}
    \item \label{p1PropToSub}
        For all $x\in X$: $\op{toSub}(x) = \{ e\in E\mid e =
        x\xrightarrow{a}y, y\in S\}$
    \item \label{p1PropFilS}
        For all $x\in X$: $\op{fil}_S\cdot \flat\cdot \xi(x) =
        \Bagf(\pi_2\cdot \op{graph})(\op{toSub}(x))$.

    \item \label{p1PropBS}
        $\op{M}: \unnicefrac{X}{P} \partialto H3$ is a partial map with $M(B) = H\chi_\emptyset^C\cdot \xi(x)$, if there exists an $e = x\xrightarrow{a}y, x\in B, y\in S$, and $M(B)$ is undefined otherwise.

    \item \label{p1PropMark} For each $B\in\unnicefrac{X}{P}$, we have a partial map
    $\op{mark}_B: B\partialto \N$ defined by $\op{mark}_B(x) = \op{lastW}(x)$
    if there exists some $e = x\xrightarrow{a}y$, $y\in S$ and $\op{mark}_B(x)$ is undefined otherwise.

    \item \label{p1PropDeref}
        If defined on $x$, $\op{deref}\cdot\op{mark}_B(x) = w(C,\xi(x))$.

    \item \label{p1PropUnmarked}
        If $\op{mark}_B$ is undefined on $x$, then
        $\op{fil}_S(\flat\cdot \xi(x)) = \emptyset$ and then
        $H\chi^C_S \cdot \xi(x) = H\chi^C_\emptyset \cdot \xi(x)$.

\end{enumerate}
\end{lemma}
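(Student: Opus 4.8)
The plan is to read off all six items as the postcondition of the single loop in part~(a) of \autoref{figSplitAlgo}, tracing what it writes into $\op{toSub}$, $\op{M}$ and $\op{mark}$, and combining this with the assumed invariants and the refinement-interface axiom~\eqref{eqSplitterLabels}. Two preconditions are used throughout: by invariant~\ref{invariant_toSub} we have $\op{toSub}(x)=\emptyset$ for all $x$ before the loop, and $\op{mark}_B$ is empty for every block $B$ (it is cleared by the preceding invocation of \textsc{Split}). The loop ranges over the pairs $(y,e)$ with $y\in S$ and $e\in\op{pred}(y)$; since $\op{pred}(y)=\{e\mid e=x\xrightarrow{a}y\}$ and every edge has a unique target, $\bigcup_{y\in S}\op{pred}(y)$ enumerates each edge into $S$ exactly once.

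For item~\ref{p1PropToSub} I would argue that each such edge $e=x\xrightarrow{a}y$ is appended to $\op{toSub}[x]$ once and no other edges are ever added, so $\op{toSub}(x)$ ends up as exactly the set of edges from $x$ into $S$. Item~\ref{p1PropFilS} is then immediate from the encoding property $(\flat\cdot\xi(x))(a,y)=|\{e\mid\op{graph}(e)=(x,a,y)\}|$ together with the definition $\op{fil}_S(f)(a)=\sum_{y\in S}f(a,y)$: both sides count, for each label $a$, the edges $x\xrightarrow{a}y$ with $y\in S$.

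Items~\ref{p1PropMark}, \ref{p1PropDeref} and \ref{p1PropBS} concern the two guards. An element $x$ is enrolled into $\op{mark}_B$ precisely at its first incoming $S$-edge (guard $\op{toSub}[x]=\emptyset$), recording $\op{lastW}[e]$; this makes $\op{mark}_B$ a partial map with the claimed domain, and its value is independent of the triggering edge because, by invariant~\ref{invariant_lastW}, all $S$-edges out of $x$ share one $\op{lastW}$-value (their targets all lie in the single $Q$-block $C$). Item~\ref{p1PropDeref} is then invariant~\ref{invariant_deref} specialised to $[y]_{\kappa_Q}=C$. Likewise $B$ is enrolled into $\op{M}$ at its first incoming $S$-edge (guard $\op{mark}_B$ empty), so $\op{M}$ is a partial map with the stated domain; its value is $\pi_2\cdot\op{update}(\emptyset,w_C^x)$ with $w_C^x=w(C,\xi(x))$ by invariant~\ref{invariant_deref}, and the $S=\emptyset$ instance of~\eqref{eqSplitterLabels} identifies this with $H\chi_\emptyset^C\cdot\xi(x)$. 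Invariant~\ref{invariant_Hchi} (combined with $\chi_\emptyset^C=\iota\cdot\chi_C$ for the inclusion $\iota\colon 2\hookrightarrow 3$) makes this value the same for every $x\in B$, so $\op{M}(B)$ is well defined.

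Item~\ref{p1PropUnmarked} is the crux. If $\op{mark}_B$ is undefined at $x$, then by item~\ref{p1PropMark} $x$ has no incoming $S$-edge, so $\op{toSub}(x)=\emptyset$ by item~\ref{p1PropToSub} and hence $\op{fil}_S(\flat\cdot\xi(x))=\emptyset$ by item~\ref{p1PropFilS}. The remaining equality $H\chi_S^C\cdot\xi(x)=H\chi_\emptyset^C\cdot\xi(x)$ is where the whole point of the interface encoding is used: reading off the middle component of~\eqref{eqSplitterLabels} at $\xi(x)$ for the splitter $S\subseteq C$ gives $H\chi_S^C\cdot\xi(x)=\pi_2\cdot\op{update}(\op{fil}_S(\flat\,\xi(x)),w(C,\xi(x)))=\pi_2\cdot\op{update}(\emptyset,w(C,\xi(x)))$, and the same computation for the splitter $\emptyset\subseteq C$ gives $H\chi_\emptyset^C\cdot\xi(x)=\pi_2\cdot\op{update}(\emptyset,w(C,\xi(x)))$ as well, since $\op{fil}_\emptyset$ is constantly $\emptyset$. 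I expect this step — that an element's $H3$-value under the splitter is recoverable from its $S$-labels and $w(C)$ alone, so that having no $S$-labels forces it to coincide with the $\emptyset$-splitter value — to be the main obstacle, everything else being bookkeeping about the loop and direct applications of the invariants.
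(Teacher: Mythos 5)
Your proposal is correct and follows essentially the same route as the paper's own proof: the same loop-tracing for items 1 and 2, the same use of invariants 2--4 for the well-definedness of $\op{mark}_B$ and $\op{M}$ and for item 5, and the same double application of the $\op{update}$ axiom \eqref{eqSplitterLabels} (once for $S\subseteq C$, once for $\emptyset\subseteq C$) to get item 6. The only cosmetic difference is that you state explicitly the precondition that all $\op{mark}_B$ start out empty, which the paper leaves implicit.
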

\begin{proof}
\begin{enumerate}
\item By lines~2 and~11,
    \begin{align*}
        \op{toSub}(x) &= \{
        e\in \op{pred}(y) \mid y\in S, e = x\xrightarrow{a}y \}
        %\\ &= \{
        %e\in E \mid e = x'\xrightarrow{a}y, y\in S, e = x\xrightarrow{a}y
        %\}
        \\ &= \{
        e\in E \mid y\in S, e = x\xrightarrow{a}y
        \}.
    \end{align*}

\item \(
\begin{aligned}[t]
    \op{fil}_S(\flat\cdot \xi(x))(a)
    &= \sum_{y\in S} (\flat\cdot\xi(x))(a,y)
    = \sum_{y\in S} |\{e\in E\mid e = x\xrightarrow{a} y\}|
    \\ &
    = |\{e\in E\mid e = x\xrightarrow{a} y, y\in S\}|.
\end{aligned}\)

\item By construction $\op{M}$ is defined precisely for those blocks
  $B$ which have at least one element $x$ with an edge
  $e = x \xrightarrow{a} y$ to $S$. Let
  $C = [y]_{\kappa_Q} \in \unnicefrac{Y}{Q}$. Then by
  invariant~\ref{invariant_deref} we know that $M(B)$ is
  \begin{align*}
    \pi_2\cdot \op{update}(\emptyset, \op{deref}\cdot \op{lastW}(e))
    &= \pi_2\cdot \op{update}(\emptyset, w(C,\xi(x)))
    \\ &
    = \pi_2\cdot \op{update}(\op{fil}_\emptyset(\flat\cdot\xi(x)), w(C,\xi(x)))
    \overset{\mathclap{\eqref{eqSplitterLabels}}}{=}
        H\chi_\emptyset^C\cdot\xi(x)
\end{align*}
for some $e=x\xrightarrow{a} y$, $x\in B$, $y\in S$. Since
$\ker(H\chi_\emptyset^C\cdot\xi)= \ker(H\chi_C\cdot\xi)$, invariant
\ref{invariant_Hchi} proves the well-definedness.

\item This is precisely, how $\op{mark}_B$ has been constructed. The
well-definedness follows from invariant \ref{invariant_lastW}. Note
that for any $B$ on which $\op{M}$ is undefined, the list $\op{mark}_B$
is empty.

\item If $\op{mark}_B(x) = p_C$ is defined, then $p_C = \op{lastW}(e)$
for some $e\in \op{toSub}(x)$, and so $\op{deref}(p_C) =
\op{deref}\cdot \op{lastW}(e) = w(C,\xi(x))$ by invariant
\ref{invariant_deref}.

\item If $x\in B$ is not marked in $B$, then $x$ was never contained in line 3. Hence, $\op{toSub}(x) =
\emptyset$, and we have
\[
    \op{fil}_S(\flat\cdot \xi(x))(a)
    = |\{e\in E\mid e = x\xrightarrow{a} y, y\in S\}|
    = |\op{toSub}(x)|
    = 0.
\]
Furthermore we have
\begin{align*}
  H\chi_S^C\cdot \xi(x)
  =&\ %
     \pi_2\cdot \op{update}(\op{fil}_S\cdot \flat\cdot\xi(x), w(C,\xi(x)))
  &\text{by \eqref{eqSplitterLabels}}
  \\=&\ %
       \pi_2\cdot \op{update}(\emptyset,  w(C,\xi(x)))
  &\text{as just shown}
  \\=&\ %
       \pi_2\cdot \op{update}(\op{fil}_\emptyset\cdot \flat\cdot\xi(x), w(C,\xi(x)))
  &\text{by definition}
  \\=&\ %
       H\chi_\emptyset^C\cdot \xi(x)
  &\text{by \eqref{eqSplitterLabels}}
  &\qedhere
\end{align*}
\end{enumerate}
\end{proof}

\begin{lemma} \label{BneqPartialMap} Given $M(B) = v_\emptyset$ in
  line 12. Then after line $\ref{algoLastLine}$, $B_{\neq\emptyset}$
  is a partial map $B \partialto H3$ defined by 
  $B_{\neq\emptyset}(x) = H\chi_S^C\cdot \xi(x)$
  if $H\chi_S^C\cdot \xi(x) \neq v_\emptyset$ and undefined otherwise. 
\end{lemma}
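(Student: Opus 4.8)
The plan is to trace the contents of $B_{\neq\emptyset}$ through the inner loop (lines~\ref{algoLineMarkedX}--\ref{algoLastLine}) and to pin down, for every marked $x$, the value $v^x$ computed in line~\ref{algoLineUpdate}. Since $B_{\neq\emptyset}$ starts out empty and its only modification is the insertion of $(x,v^x)$ in line~\ref{algoLineInsertVX}, guarded by the test $v^x\neq v_\emptyset$ in line~\ref{algoLineIfVXVempty}, the whole lemma reduces to the single identity $v^x = H\chi_S^C\cdot\xi(x)$. That $B_{\neq\emptyset}$ is functional (hence a genuine partial map $B\partialto H3$) is immediate from \autoref{p1Properties}.\ref{p1PropMark}: $\op{mark}_B$ is already a partial map on $B$, so the loop of line~\ref{algoLineMarkedX} visits each marked $x$ exactly once.

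For the identity, I fix a marked $x$ with pointer $p_C=\op{mark}_B(x)$ and read off the two arguments of the $\op{update}$ call. By \autoref{p1Properties}.\ref{p1PropFilS} the label bag assembled in line~\ref{algoLineLabels} is $\ell=\op{fil}_S\cdot\flat\cdot\xi(x)$, and by \autoref{p1Properties}.\ref{p1PropDeref} the stored weight is $\op{deref}[p_C]=w(C,\xi(x))$. Evaluating the right-hand axiom of the refinement interface~\eqref{eqSplitterLabels} at $\xi(x)$ then yields
\[
    \op{update}\big(\op{fil}_S\cdot\flat\cdot\xi(x),\, w(C,\xi(x))\big)
    = \big(w(S,\xi(x)),\, H\chi_S^C\cdot\xi(x),\, w(C\setminus S,\xi(x))\big),
\]
so its middle component is exactly $v^x=H\chi_S^C\cdot\xi(x)$. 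Hence $x$ lands in $B_{\neq\emptyset}$ with value $H\chi_S^C\cdot\xi(x)$ precisely when this value differs from $v_\emptyset$. Finally, to see that the domain description is complete over all of $B$, I invoke \autoref{p1Properties}.\ref{p1PropUnmarked}: an unmarked $x$ satisfies $H\chi_S^C\cdot\xi(x)=H\chi_\emptyset^C\cdot\xi(x)$, and since by \autoref{p1Properties}.\ref{p1PropBS} together with invariant~\ref{invariant_Hchi} all elements of $B$ agree under $H\chi_\emptyset^C\cdot\xi$, this common value is exactly $v_\emptyset=M(B)$; thus any $x$ with $H\chi_S^C\cdot\xi(x)\neq v_\emptyset$ is necessarily marked and therefore handled by the loop.

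The step I expect to require the most care -- the real obstacle -- is justifying that the value read from $\op{deref}[p_C]$ in line~\ref{algoLineUpdate} is still $w(C,\xi(x))$, even though the following line overwrites that very cell with $w_{C\setminus S}^x$ and the outer loop of line~\ref{algoSplitLoop} may already have processed other blocks. Here \autoref{p1Properties}.\ref{p1PropDeref} only guarantees the correct contents \emph{at the start} of part~(b), so I must rule out any intervening overwrite. This is precisely what invariant~\ref{invariant_lastW} delivers: $\op{lastW}(e_1)=\op{lastW}(e_2)$ holds iff $e_1,e_2$ share their source and their $\kappa_Q$-target block, so distinct processed elements $x$ (within $B$ or in earlier blocks) carry pairwise distinct pointers $p_C$. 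Consequently each cell $\op{deref}[p_C]$ is mutated only during the unique iteration that handles its own $x$, the read in line~\ref{algoLineUpdate} indeed returns $w(C,\xi(x))$, and the computation of the previous paragraph goes through unchanged.
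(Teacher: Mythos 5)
Your proof is correct and follows essentially the same route as the paper's: the marked case via \autoref{p1Properties}.\ref{p1PropFilS}, \ref{p1PropDeref} and the $\op{update}$ axiom \eqref{eqSplitterLabels}, and the unmarked case via \autoref{p1Properties}.\ref{p1PropUnmarked}, \ref{p1PropBS} and invariant~\ref{invariant_Hchi}. Your additional argument that the $\op{deref}$ cells are not corrupted by earlier iterations (distinct sources give distinct $\op{lastW}$ pointers by invariant~\ref{invariant_lastW}) is a valid point of care that the paper leaves implicit.
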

\begin{proof}
  Suppose first that $x\in B$ is marked, i.e.~we have $p_C = \op{mark}_B(x)$ and lines 14--23 are executed. Then we have 
  \begin{align*}
        (w^x_S, v^x, w^x_{C\setminus S}) =\ &  \op{update}(\Bagf(\pi_2\cdot \op{graph})(\op{toSub}[x]),
        \op{deref}[p_C])
        && \text{by line \ref{algoLineUpdate}}
        \\ =\ &
        \op{update}(\op{fil}_S\cdot \flat\cdot \xi(x), w(C,\xi(x)))
        && \text{by \autoref{p1Properties}, \ref{p1PropFilS} and
        \ref{p1PropDeref}}
        \\ =\ &
        (w(S,\xi(x)), H\chi_S^C\cdot \xi(x), w(C\setminus S, \xi(x)))
        && \text{by \eqref{eqSplitterLabels}}.
  \end{align*}
  Thus, if $H\chi_S^C\cdot \xi(x) = v_\emptyset$, $B_{\neq\emptyset}$
  remains undefined because of line~\ref{algoLineIfVXVempty}, and
  otherwise gets correctly defined in line~\ref{algoLastLine}.

  Now suppose that $x\in B$ is not marked. Then by
  \autoref{p1Properties} item \ref{p1PropUnmarked} we know that
  $\op{fil}_S(\flat\cdot\xi(x)) = \emptyset$.  Since
  $(B,v_\emptyset) \in \op{M}$, we know by
  \autoref{p1Properties}.\ref{p1PropBS} that
  $v_\emptyset = H\chi_\emptyset^C\cdot\xi(x')$ for some $x' \in
  B$. Invariant~\ref{invariant_Hchi} then implies that
  $H\chi_\emptyset^C\cdot \xi(x) = H\chi_\emptyset^C\cdot\xi(x')$
  since $\ker(H\chi^C_\emptyset \cdot \xi) = \ker(H\chi_C \cdot
  \xi)$. Then, by \autoref{p1Properties}.\ref{p1PropUnmarked}, we have
  \[
    H\chi_S^C\cdot \xi(x)
    = H\chi_\emptyset^C\cdot \xi(x)
    = H\chi_\emptyset^C\cdot \xi(x')
    = v_\emptyset.
    \qedhere
  \]
\end{proof}

\begin{proof}[Proof of \autoref{thmSplitCorrect}]
    After \autoref{BneqPartialMap}, we know that all $B$ in $\op{M}$
    are refined by $H\chi_S^C\cdot \xi$. Now let $B$ be not in
    $\op{M}$. Then $\op{mark}_B$ is undefined everywhere, so for all $x\in
    B$, we have by \autoref{p1Properties} item \ref{p1PropUnmarked} that 
    $H\chi_S^C\cdot \xi(x) = H\chi_\emptyset^C\cdot \xi(x)$,
    which is the same for all $x\in B$ by invariant \ref{invariant_Hchi}.
    Hence any $B$ not in $\op{M}$ is not split by $H\chi_S^C\cdot\xi$.
\end{proof}

\paragraph*{Proof for \autoref{lemmaAfterSplit}}
\begin{proof}
    We denote the former values of $P,Q, \op{deref}, \op{lastW}$ using
    the subscript $\old$.
    \begin{enumerate}
    \item It is easy to see that $\op{toSub}(x)$ becomes non-empty in
      line 11 only for marked $x$, and for those $x$ it is emptied
      again in line 20.

    \item Take $e_1 = x_1\xrightarrow{a_1}y_1$,  $e_2 =
    x_2\xrightarrow{a_2}y_2$.
    \begin{itemize}
    \item[$\Rightarrow$] Assume $\op{lastW}(e_1) = \op{lastW}(e_2)$.
    If $\op{lastW}(e_1) = p_S$ is assigned in line
    \ref{algoLineSetLastW} for some marked $x$, then $x_1=x_2=x$ and
    $y_1,y_2\in S\in \unnicefrac{Y}{Q}$. Otherwise, $\op{lastW}(e_1) =
    \op{lastW}_\old(e_1)$ and so $\op{lastW}_\old(e_1)=
    \op{lastW}_\old(e_2)$ and the desired property follows from the
    invariant for $\op{lastW}_\old$.

    \item[$\Leftarrow$] If $x_1=x_2$ and $y_1,y_2\in D\in
    \unnicefrac{Y}{Q}$, then we perform a case distinction on $D$. If $D = S$,
    then $\op{lastW}(e_1) = \op{lastW}(e_1) = p_S$. If $D = C\setminus
    S$, then $\op{lastW}(e_1) = \op{lastW}_\old(e_1) =
    \op{lastW}_\old(e_2) =  \op{lastW}(e_2)$. Otherwise, $D \in
    \unnicefrac{Y}{Q_\old}\setminus \{C\}$ and again $\op{lastW}(e_i) =
    \op{lastW}_\old(e_i)$, $i\in \{1,2\}$.

    \end{itemize}

    \item Let $e = x\xrightarrow{a} y$, $D :=
    [y]_{\kappa_Q} \in \unnicefrac{Y}{Q}$ and do a case distinction on
    $D$:
    \[
    \begin{array}{lll}
        D = S
        &\Rightarrow&
            \op{deref} \cdot \op{lastW}(e) = \op{deref}(p_S)
            = w_S^x = w(S,\xi(x)),
        \\
        D = C\setminus S
        &\Rightarrow&
            \op{deref} \cdot \op{lastW}(e) = \op{deref}(p_C)
            = w_{C\setminus S}^x = w(C\setminus S,\xi(x)),
        \\
        D \in \unnicefrac{Y}{Q_\old}\setminus \{C\}
        &\Rightarrow&
            \op{deref} \cdot \op{lastW}(e)
            = \op{deref}_\old \cdot \op{lastW}(e)_\old
            = w(D,\xi(x));
    \end{array}
    \]
    note that the first equation in the second case holds due to lines~\ref{algo10} and~\ref{algoLineMarkedX}.
    For the first two cases note that
    $w_S^x = w(S,\xi(x))$ and
    $w_{C\setminus S}^x = w(C\setminus S, \xi(x))$ by line
    \ref{algoLineUpdate}, \autoref{p1Properties}, items~\ref{p1PropFilS}
    and~\ref{p1PropDeref}, and by the axiom for $\op{update}$ in
    \eqref{eqSplitterLabels} (see the computation in the proof of \autoref{BneqPartialMap}).

    \item Take $x_1,x_2\in B'\in \unnicefrac{X}{P}$ and $D\in
    \unnicefrac{Y}{Q}$ and let $B := [x_1]_{P_\old} = [x_2]_{P_\old}
    \in \unnicefrac{X}{P_\old}$.
    \begin{enumerate}
    \item If $\op{M}(B)$ is defined, then
        $H\chi_S^C\cdot \xi(x_1) = H\chi_S^C\cdot \xi(x_2)$ --
        otherwise they would have been put into different blocks in line
        \ref{algoLineSplit}. So $(x_1,x_2) \in \ker(H\chi_D\cdot
        \xi(x_1))$ is obvious for $D = S$ and $D=C\setminus S$. For
        any other $D \in \unnicefrac{Y}{Q_\old}$,
        $(x_1,x_2) \in \ker (H\chi_D\cdot \xi)$ by the invariant of
        the previous partition.

      \item If $\op{M}(B)$ is undefined, then $\op{mark}_B$ is undefined
        everywhere, in particular for $x_1$ and $x_2$. Then, by
        \autoref{p1Properties} item \ref{p1PropUnmarked} we have 
        $H\chi_S^C\cdot \xi(x_i) = H\chi_\emptyset^C\cdot \xi(x_i)$ for $i = 1, 2$. 

        Since $C\in \unnicefrac{Y}{Q_\old}$ and
        $\ker(H\chi^C_\emptyset \cdot\xi) = \ker(H\chi_C \cdot \xi)$,
        we have
        $H\chi_\emptyset^C\cdot\xi(x_1) =
        H\chi_\emptyset^C\cdot\xi(x_2)$ by invariant
        \ref{invariant_Hchi}, and so
        $(x_1,x_2) \in \ker (H\chi_S^C\cdot \xi)$. By case
        distinction on $D$, we conclude:
    \[
    \begin{array}[b]{lll}
        D = S
        &\Rightarrow&
            (x_1,x_2) \in
            \ker (H\chi_S^C\cdot \xi)
            \subseteq
            \ker (\overbrace{H(=2)\cdot H\chi_S^C}^{H\chi_S=H\chi_D} \cdot \xi)
        \\
        D = C\setminus S
        &\Rightarrow&
            (x_1,x_2) \in
            \ker (H\chi_S^C\cdot \xi)
            \subseteq
            \smash{\ker (\underbrace{H(=1)\cdot H\chi_S^C}_{H\chi_{C\setminus
            S}=H\chi_D} \cdot \xi)}
        \\
        D \in \unnicefrac{Y}{Q_\old}\setminus \{C\}
        &\Rightarrow&
            (x_1,x_2) \in \ker(H\chi_D \cdot\xi),
    \end{array}
    \]
    where the last statement holds by invariant~\ref{invariant_Hchi} for $\unnicefrac{Y}{Q_\old}$. \qedhere
    \end{enumerate}
    \end{enumerate}
\end{proof}

\paragraph*{Proof of \autoref{lemmaTimeSplit}}

\begin{proof}
    The for-loop in \autoref{algoPredCollecting} has $\sum_{y\in S}
    |\op{pred}(y)|$ iterations, each consisting of constantly many
    operations taking constant time.
    Since each loop appends one element to some initially empty
    $\op{toSub}(x)$, we have
    \[
        \sum_{y\in S}|\op{pred}(y)| = \sum_{x\in X} |\op{toSub}(x)|
    \]
    In the body of the for-loop in line \ref{algoLineMarkedX}, the
    only statements not running in constant time are
    $\ell \gets \Bagf(\pi_2\cdot \op{graph})(\op{toSub}(x))$ (line
    \ref{algoLineLabels}), $\op{update}(\ell,\op{deref}(p_C))$ (line
    \ref{algoLineUpdate}), and the loop in line
    \ref{algoLineSetLastW}. The time for each of them is linear in the
    length of $\op{toSub}(x)$. The loop in line \ref{algoLineMarkedX}
    has at most one iteration per $x\in X$.  Hence, since each $x$ is
    contained in at most one block $B$ from line~\ref{algoSplitLoop},
    the overall complexity of line \ref{algoSplitLoop} to
    \ref{algoLastLine} is at most
    \( \sum_{x\in X} |\op{toSub}(x)| = \sum_{y\in S}|\op{pred}(y)| \),
    as desired.
\end{proof}

\paragraph*{Details on the overall complexity for sortings in line
\ref{algoLineSplit}}

\begin{remark}
Recall that when grouping $Z$ by $f: Z\to Z'$ one calls an element
$p\in Z'$ a \emph{possible majority candidate} (PMC) if either
\begin{equation}
|\{z\in Z\mid f(z) = p\}|
\ge |\{z\in Z\mid f(z) \neq p\}|
\label{pmcproperty}
\end{equation}
or if no element fulfilling \eqref{pmcproperty} exists. A PMC can be
computed in linear time \cite[Sect.~4.3.3]{Backhouse1986}. When
grouping $Z$ by $f$ using a PMC, one first determines a PMC $p\in
Z'$, and then only sorts and groups $\{z\mid f(z) \neq
p\}$ by $f$ using an $n\cdot \log n$ sorting algorithm.
\end{remark}
\begin{lemma} \label{SplitComplexity} Summing over all iterations, the
total time spent on grouping $B_{\neq\emptyset}$ using a PMC is in
$\CO(|E|\cdot \log|X|)$.
\end{lemma}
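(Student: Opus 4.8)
The plan is to bound the cost of a single call to the grouping subroutine and then sum over all iterations, using two ingredients: (a)~in one call, the set $B_{\neq\emptyset}$ being grouped has size at most the number of edges into the current subblock~$S$, since every $x\in B_{\neq\emptyset}$ is marked and hence carries at least one edge into~$S$; and (b)~the first item of \autoref{lemmaTime}, i.e.\ that each $y\in Y$ lies in a chosen subblock $S_i$ at most $\log_2|Y|$ times.

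First I would analyse the cost of one grouping of $B_{\neq\emptyset}$ by $\pi_2$ with a PMC. Computing a PMC and performing the linear scan that splits off the majority class costs $\CO(|B_{\neq\emptyset}|)$, while sorting the remaining non-majority part costs $\CO(r\cdot\log r)$, where $r$ is the number of non-PMC elements and $r\le|B_{\neq\emptyset}|$. Summing the linear part over all blocks handled in one \textsc{Split} call and using~(a) gives $\sum_B|B_{\neq\emptyset}|\le\sum_{y\in S}|\op{pred}(y)|$, so that over all iterations the PMC-finding cost is $\sum_i\sum_{y\in S_i}|\op{pred}(y)|=\sum_{y}|\op{pred}(y)|\cdot|\{i\mid y\in S_i\}|\le|E|\cdot\log|X|$, by~(b) together with $\sum_y|\op{pred}(y)|=|E|$.

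The key structural fact I would then establish is the defining property of the PMC: the sorted (non-majority) elements all land in newly created blocks of size at most $\tfrac12|B|$. Indeed, if a genuine majority value exists then the non-PMC part has at most $|B_{\neq\emptyset}|/2$ elements, each joining a minority class of size $<|B_{\neq\emptyset}|/2$; and if no majority exists, then every value class, in particular every class receiving a sorted element, already has size $<|B_{\neq\emptyset}|/2\le\tfrac12|B|$. Consequently each element is sorted at most $\log_2|X|$ times over the whole run, since its block size strictly more than halves on each sorting event. Charging the sorting cost of a call to the edges into~$S$ emanating from its non-PMC elements, and invoking~(b) once more to bound how often a fixed edge can feed such a sort, the total sorting cost is again $\CO(|E|\cdot\log|X|)$; adding the two contributions yields the claim.

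The main obstacle is precisely this last amortisation: a naive bound would multiply the $\log r$ factor of the comparison sort with the $\log|X|$ factor counting how often an element is split, producing an unwanted $\CO(|E|\cdot\log^2|X|)$ estimate. The role of the PMC is exactly to prevent this: by never re-sorting the majority class and forcing every sorted element into a block of at most half the size, it lets us amortise the sorting work against the strictly decreasing block sizes so that only a single logarithmic factor survives, following Valmari and Franceschinis~\cite{ValmariF10}.
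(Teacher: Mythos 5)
There is a genuine gap in the sorting part of your amortisation. Both of your charging schemes lose a logarithmic factor: (i) the halving argument shows that a fixed $x$ is a non-PMC (hence actually sorted) element at most $\log_2|X|$ times, so it only bounds the total number of sorted elements over the whole run by $|X|\cdot\log|X|$, and multiplying by the $\CO(\log|X|)$ comparisons that each sorted element incurs gives $\CO(|X|\cdot\log^2|X|)$, which is not $\CO(|E|\cdot\log|X|)$ when, say, $|E|=\Theta(|X|)$; (ii) charging to edges into $S_i$ and invoking \autoref{lemmaTime}(1) bounds the total number of sorted elements by $|E|\cdot\log|Y|$, again yielding $\CO(|E|\cdot\log^2|X|)$ once the per-element sort cost is included. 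You correctly identify the $\log^2$ danger, but the mechanism you invoke --- ``amortise against the strictly decreasing block sizes'' --- is exactly the halving argument of (i); it cannot remove the extra factor, because the per-sorting-event cost of an element is itself $\Theta(\log|X|)$ rather than $\CO(1)$ (unlike in Hopcroft-style analyses, where the work per event is linear in the degree).

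What the paper's proof (following Valmari and Franceschinis) supplies, and what is missing from your proposal, is a bound of $\CO(|E|)$ --- with \emph{no} logarithmic factor --- on the total number of sorted elements over all iterations. It splits $B_{\neq\emptyset}$ into the left block $\Leftblock_B^i$, a single class of the grouping consisting of the elements whose value under $H\chi_{S_i}^{C_i}\cdot\xi$ equals the common value $\ell_B^i$ of $H\chi_{C_i}\cdot\xi$ on $B$, and the middle block $\Middleblock_B^i$, each of whose members has an edge into $S_i$ \emph{and} an edge into $C_i\setminus S_i$. Consequently each membership of $x$ in a middle block strictly increases the number of distinct blocks of $\unnicefrac{Y}{Q}$ reached by $x$'s outgoing edges, a quantity bounded by $x$'s out-degree, whence $\sum_t|\Middleblock_t|\le|E|$. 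The PMC then enters not through halving but to guarantee that each sort touches at most $2\cdot|\Middleblock_B^i|$ elements: either $\ell_B^i$ is the PMC and only $\Middleblock_B^i$ is sorted, or it is not and then $|\Leftblock_B^i|\le|\Middleblock_B^i|$. Only with this $\CO(|E|)$ bound on the number of sorted elements does the single factor $\log|X|$ from the comparison sort yield the claimed $\CO(|E|\cdot\log|X|)$; your linear PMC-finding bound is fine, but the sorting bound needs this middle-block argument.
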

The proof is the same as in the weighted setting of Valmari and
Franceschinis~\cite[Lemma~5]{ValmariF10}. For the convenience of the
reader, an adaptation to our setting is provided:
\begin{proof}
Formally we need to prove that for a family $S_i\subseteq C_i\in
\unnicefrac{Y}{Q_i}$, $1\le i \le k$, the overall time spend on
grouping the $B_{\neq\emptyset}$ in all the runs of $\textsc{Split}$
is in total $\CO(|E|\cdot \log|X|)$.

First, we characterize the subset $\Middleblock_B \subseteq B_{\neq\emptyset}$
of elements that have edges into both $S_i$ and $C_i\setminus S_i$. In the
second part, we show that if we assume each sorting step of $B_{\neq\emptyset}$
is bound by $2\cdot |\Middleblock_B|\cdot \log(2\cdot |\Middleblock_B|)$, then
the overall complexity is as desired. In the third part, we use a PMC to argue
that sorting each $B_{\neq\emptyset}$ is indeed bounded as assumed. Since we
assume that comparing two elements of $H3$ runs in constant time, the time
needed for sorting amounts to the number of comparisons needed while sorting,
i.e.~$\CO(n\cdot \log n)$ many.
\begin{enumerate}
\item
    For a $(B,v_\emptyset)\in \op{M}$ in the $i$th iteration consider
    $B_{\neq\emptyset}$. We define
    \begin{itemize}
    \item the \emph{left block} $\Leftblock_B^i := \{
        x \in B \mid
        H\chi_{C_i}^{C_i}\cdot \xi(x) = H\chi_{S_i}^{C_i}\cdot \xi(x) \neq
        H\chi_{\emptyset}^{C_i}\cdot \xi(x)
    \}$, and

    \item the \emph{middle block} $\Middleblock_B^i := \{
        x \in B \mid H\chi_{C_i}^{C_i}\cdot \xi(x) \neq
        H\chi_{S_i}^{C_i}\cdot \xi(x)
        \neq H\chi_{\emptyset}^{C_i}\cdot \xi(x)
    \}$.
    \end{itemize}
    Since $B_{\neq\emptyset}(x)$ is defined iff $H\chi_{S_i}^{C_i}\cdot\xi(x)
    \neq v_\emptyset$, and since $v_\emptyset = H\chi_\emptyset^C\cdot \xi(x)$
    holds by \autoref{p1Properties} item~\ref{p1PropBS}, the domain of
    $B_{\neq\emptyset}$ is $ \Leftblock_B^i \cup\Middleblock_B^i$. Any $x\in B$
    with no edge to ${S_i}$ is not marked, and so $H\chi_{S_i}^{C_i}\cdot\xi(x)
    = H\chi_\emptyset^C\cdot \xi(x)$, by \autoref{p1Properties}
    item~\ref{p1PropUnmarked}; by contraposition, any $x\in
    \Leftblock_B^i\cup\Middleblock_B^i$ has some edge into ${S_i}$. We can make
    a similar observation for $H\chi_{C_i}^{C_i}\cdot\xi$. If $x\in B$ has no
    edge to ${C_i}\setminus {S_i}$, then $\op{fil}_{S_i}(\flat\cdot\xi(x)) =
    \Bagf\pi_1\cdot\flat\cdot\xi(x)$ by the definition of $\op{fil}_S$, and
    therefore we have:
    \[
      \begin{array}{r@{}c@{}l}
        H\chi_{C_i}^{C_i}\cdot\xi(x)
        &\stackrel{\eqref{eqSplitterLabels}}{=}& \pi_2\cdot\op{update}(\op{fil}_{C_i}(\flat\cdot\xi(x)),
        w({C_i}))
        \overset{\text{Def.}}{=}
        \pi_2\cdot\op{update}(\Bagf\pi_1\cdot\flat\cdot\xi(x), w({C_i}))
        \\
        &=& \pi_2\cdot\op{update}(\op{fil}_{S_i}(\flat\cdot\xi(x)),
        w({C_i}))
        \stackrel{\eqref{eqSplitterLabels}}{=}
        H\chi_{S_i}^{C_i}\cdot\xi(x).
      \end{array}
    \]
    By contraposition, all $x\in \Middleblock_B^i$ have an edge to
    ${C_i}\setminus {S_i}$.

    Note that $\ker(H\chi_{C_i}^{C_i}\cdot \xi) =
    \ker(H\chi_{C_i}\cdot \xi)$. Since $\Leftblock_B^i \subseteq B\in
    \unnicefrac{X}{P_i}$ and $C_i\in\unnicefrac{Y}{Q_i}$, invariant
    \ref{invariant_Hchi} provides that we have an $\ell^i_B \in H3$
    such that $\ell^i_B = H\chi_{C_i}^{C_i}\cdot \xi(x)$ for all $x\in
    \Leftblock_B^i$. Hence, we have obtained $\ell^i_B \in H3$ such
    that
    \[
        \Leftblock_B^i = \{x\in B_{\neq\emptyset} \mid H\chi_{S_i}^{C_i}\cdot\xi(x) =
        \ell^i_B \}\quad\text{and}\quad
        \Middleblock_B^i = \{x\in B_{\neq\emptyset} \mid
        H\chi_{S_i}^{C_i}\cdot\xi(x) \neq \ell^i_B \}.
    \]

  \item Let the number of blocks to which $x$ has an edge be denoted by
    \begin{align*}
        \sharp^i_Q(x)
        &=
        |\{
            D\in \unnicefrac{Y}{Q_i}\mid e = x\xrightarrow{a}{y}, y\in
            D
        \}|.
    \end{align*}
    Clearly, this number is bounded by the number of outgoing edges of
    $x$, i.e.~$\sharp^i_Q(x) \le |\flat\cdot\xi(x)|$, and so
    \[
        \sum_{x\in X} \sharp_Q^i(x) \le
        \sum_{x\in X} |\flat\cdot\xi(x)| = |E|.
    \]
    Define
    \begin{align*}
        \sharp_\Middleblock^i(x)
        &= |\{
            0 \le j \le i \mid
            x\text{ is in some }\Middleblock_B^j
        \}|.
    \end{align*}
    If in the $i$th iteration $x$ is in the middle block
    $\Middleblock_B^i$, then $\sharp_Q^{i+1}(x) = \sharp_Q^{i}(x) + 1$,
    since $x$ has both an edge to $S_i$ and $C_i\setminus S_i$.
    It follows that for all $i$, $\sharp_\Middleblock^i(x) \le \sharp_Q^i(x)$, and
    \[
        \sum_{x\in X} \sharp_\Middleblock^i(x)
        \le \sum_{x\in X} \sharp_Q^i(x)
        \le |E|.
    \]
    Let $T$ denote the total number of middle blocks $\Middleblock_B^i$, $1\le i
    \le k$, $B$ in the $i$th $\op{M}$, and let $\Middleblock_t$, $1\le t \le T$,
    be the $t$th middle block. The sum of the sizes of all middle blocks is the
    same as summing up, how often each $x\in X$ was contained in a middle block,
    i.e.
    \[
        \sum_{t=1}^T |\Middleblock_t|
        = \sum_{x\in X} \sharp_\Middleblock^k(x)
        \le |E|.
    \]
    Using the previous bounds and the obvious $|\Middleblock_t| \le |X|$, we now obtain
    \begin{align*}
        &\phantom{\le\ }\sum_{t=1}^T
        2 \cdot |\Middleblock_t|\cdot \log (2\cdot|\Middleblock_t|)
        \le
        \sum_{t=1}^T
        2 \cdot |\Middleblock_t|\cdot \log (2\cdot |X|)
        = 2 \cdot \big(\sum_{t=1}^T |\Middleblock_t|\big) \cdot \log(2\cdot
        |X|)
        \\
        & \le 2\cdot |E|\cdot \log(2\cdot |X|)
        = 2 \cdot |E|\cdot \log(|X|) + 2 \cdot |E|\cdot \log(2)
        \in \CO\big(|E|\cdot \log(|X|)\big).
    \end{align*}

    \item We prove that sorting $B_{\neq\emptyset}$ in the $i$th
    iteration is bound by $2\cdot |\Middleblock_B^i|\cdot \log(2\cdot
    |\Middleblock_B^i|)$ by case distinction on the possible majority
    candidate:
    \begin{itemize}
    \item If $\ell_B^i$ is the possible majority
    candidate, then the sorting of $B_{\neq\emptyset}$ sorts
    precisely $\Middleblock_B^i$ which indeed amounts
    to
    \[
        |\Middleblock_B^i|\cdot \log(|\Middleblock_B^i|)
        \le 2\cdot |\Middleblock_B^i|\cdot \log(2\cdot
        |\Middleblock_B^i|).
    \]
    \item If $\ell_B^i$ is not the possible
    majority candidate, then $|\Leftblock_B^i| \le
    |\Middleblock_B^i|$.
    In this case sorting $B_{\neq\emptyset}$ is bounded by
    \[
        (|\Leftblock_B^i|+|\Middleblock_B^i|)\cdot
        \log(|\Leftblock_B^i|+|\Middleblock_B^i|)
        \le 2\cdot |\Middleblock_B^i|\cdot \log(2\cdot
        |\Middleblock_B^i|).
        \qedhere
    \]

    \end{itemize}
\end{enumerate}
\end{proof}

\paragraph*{Proof of \autoref{lemmaTime}}
\begin{proof}
\begin{enumerate}
\item Clearly $Q_{i+1}$ is finer than $Q_i$. Moreover, since
  $\chi^{C_i}_{S_i}$ merges all elements of $S_i$ we have
  $S_i\in\unnicefrac{Y}{Q_{i+1}}$. For any $i < j$ with $y\in S_i$ and
  $y\in S_j$, we know that $C_j \subseteq S_i$ since $C_j$ is the
  block containing $y$ in the refinement $\unnicefrac{Y}{Q_j}$ of
  $\unnicefrac{Y}{Q_{i+1}}$ in which $S_i$ contains $y$. Hence, we
  have $2\cdot|S_j| \le |C_j|\le |S_i|$. Now let $i_1 < \ldots < i_n$
  be all the elements in $\{i < k \mid y \in S_i\}$. Since
  $y\in S_{i_1},\ldots, y\in S_{i_n}$, we have
  $2^n \cdot |S_{i_n}| \le |S_{i_1}|$.  Thus
  \( |\{i < k \mid y \in S_i\}| = n = \log_2(2^n) \le \log_2(2^n\cdot
  |S_{i_n}|) \le \log_2|S_{i_1}| \le \log_2 |Y| \), where the last
  inequality holds since $S_{i_1} \subseteq Y$.

\item In the $\CO$ calculus we have as the total time complexity:
\begin{align*}
    \sum_{0\le i < k}
    \sum_{y\in S_i} |\op{pred}(y)|
    &=
    \sum_{y\in Y}
    \sum_{\substack{0\le i < k \\  S_i \ni y}}
    |\op{pred}(y)|
    =
    \sum_{y\in Y}
    \big(
    |\op{pred}(y)|
    \cdot
    \sum_{\substack{0\le i < k \\ S_i \ni y}} 1
    \big)
    \\ &
    \le
    \sum_{y\in Y}
    \big(
    |\op{pred}(y)|
    \cdot
    \log|Y|
    \big)
    =
    \big(
    \sum_{y\in Y}
    |\op{pred}(y)|
    \big)
    \cdot
    \log|Y|
    \\ &
    =
    |E|\cdot \log|Y|,
\end{align*}
where the inequality in the second line holds by the first part of our lemma.\qedhere
\end{enumerate}
\end{proof}

\paragraph*{Details for Example~\ref{ex:comp}}
There are two ways to handle the functors of type $H=\mathcal{I}\times G$.
The first way is to modify the functor interface as follows:
\[
  \begin{array}{lll}
    G1 &\mapsto& \mathcal{I}\times G1 \\
    W &\mapsto& \mathcal{I}\times W \\
  \end{array}\qquad\qquad \begin{array}{lll}
    \flat &\mapsto& \mathcal{I}\times GY \xrightarrow{\pi_2} GY
                    \xrightarrow{\flat} \Bagf(A\times Y) \\
    \op{init} &\mapsto& \id_\mathcal{I}\times \op{init} \\
  \end{array}
  \]
  $\op{update}$ is replaced by the following
  \[
    \begin{mytikzcd}[column sep=3mm]
         \Bagf(A) \!\times\! (\mathcal{I}\!\times\! W)
         \arrow{r}{\cong\ \ }
         & (\Bagf(A) \!\times\! W) \!\times\! \mathcal{I} 
         \arrow{r}[yshift=2mm]{\op{update}\times \mathcal{I}}
         & (W\!\times\! G3\!\times\! W)\times \mathcal{I}
         \arrow{r}
         & ((\mathcal{I}\!\times\! W)\times
           (\mathcal{I}\!\times\! G3)\times (\mathcal{I}\!\times\! W))
      \end{mytikzcd}
\]
where the first and the last morphism are obvious.
The second
way is to decompose the functor into $\mathcal{I}\times(-)$ and $G$; this only
introduces one new edge per state holding the state's value in $\mathcal{I}$.
Both ways do not affect the complexity in the $\CO$ notation. This applies to
the examples of unlabeled transition systems and weighted systems.
\begin{enumerate}
\addtocounter{enumi}{1} % powerset
\addtocounter{enumi}{1} % markov
\item  When decomposing a coalgebra $X\to \Potf(A\times X)$ (with $|E|$ edges)
  into a multisorted coalgebra for $\Potf$ and $A\times(-)$, the new sort $Y$
  contains one element per edge. So the multisorted coalgebra has $|X|+|E|$
  states and still $|E|$ edges, leading to a complexity of $\CO((|X|+|E|)\cdot
  \log (|X|+|E|))$.

\item To obtain the alphabet size as part of the input to our
  algorithm we consider DFAs as labelled transition systems encoding
  the letters of the input alphabet as natural numbers, i.e.~coalgebras for $X \to 2 \times
  \Potf(\N \times X)$. Decomposing the type functor into $F = 2
  \times \Potf$ and $G = \N \times (-)$ we equivalently get a
  coalgebra $(X, Y) \to (FY, GX)$ over $\Set^2$. 

\item If you have a segala system as a coalgebra $\xi: X\to \Potf(A\times \Dist X)$,
  then Baier, Engelen, and Majster-Cederbaum~\cite{BaierEM00} define the number of
  states and edges respectively as
  \[
    n = |X|,\qquad
    m_p = \sum_{x\in X} |\xi(x)|.
  \]
  The decomposition of this coalgebra results in maps
  \[
    p: X\to \Potf Y
    \qquad
    a: Y\to A\times Z
    \qquad
    d: Z\to \Dist X.
  \]
  The system $\xi$ mentions at most $m_p$ distributions, so $|Y|,|Z| \le m_p$.
  The maps $p$ and $a$ need at most $m_p$ edges, and let $m_d$ denote the number
  of edges needed to encode $d$. Then we have $n + 2\cdot m_p$ states and
  $2\cdot m_p + m_d$ edges, and thus we have a complexity of
  \[
    \CO((m_p+m_d+n)\cdot \log(m_p+n))
  \]
  whereas the complexity in \cite{BaierEM00} is
  \[
    \CO((m_p\cdot n)\cdot (\log m_p + \log n))).
  \]
  
\end{enumerate}

\subsection*{Details for \autoref{sec:conc}}

\begin{example}
    The monotone neighbourhood functor mapping a set $X$ to
    \[
        \mathcal{M}(X) =
        \{
            N \subseteq \Pot X
            \mid
            A \in N \wedge B\supseteq A
            \implies B\in N
        \}
    \]
    is not zippable. There are neighbourhoods which are identified by
    $\op{unzip}$; indeed let $A = \{a_1, a_2\}$, $B = \{b_1, b_2\}$
    and denote by $(-)\upa$ the upwards-closure:
    \begin{align*}
        \op{unzip}\left(\left\{
            \{a_1,b_1\},
            \{a_2,b_2\}
        \right\}\upa\right)
        &=
        \big(
            \big\{\{a_1,*\},
            \{a_2,*\}\big\}\upa
        ,
            \big\{\{*,b_1\},
            \{*,b_2\}\big\}\upa
        \big)
        \\&=
        \op{unzip}\left(\left\{
            \{a_1,b_2\},
            \{a_2,b_1\}
        \right\}\upa\right).
    \end{align*}
\end{example}

%%%% END of appendix
\end{document}